\newtheorem{theorem}{Theorem}
\newtheorem{definition}{Definition}
\newtheorem{remark}{Remark}
\newcommand{\ra}[1]{\renewcommand{\arraystretch}{#1}}
\begin{document}

\title{New Permutation Decomposition Techniques For Efficient Homomorphic Permutation}
\author{Xirong Ma}
\authornote{Both authors contributed equally to this research.}
\affiliation{%
  \institution{Shandong University}
  \city{Jinan}
  \country{China}
}
\email{xirongma@mail.sdu.edu.cn}

\author{Junling Fang}
\authornotemark[1]
\affiliation{%
  \institution{Shandong University}
  \city{Jinan}
  \country{China}
}
\email{jlfang@mail.sdu.edu.cn}

\author{Chunpeng Ge}
\authornote{Corresponding author.}
\affiliation{%
\institution{Shandong University}
    \city{Jinan}
  \country{China}
}
\email{chunpeng_ge@sdu.edu.cn}

\author{Dung H. Duong}
\affiliation{%
\institution{University of Wollongong}
  \city{Wollongong}
  \country{Australia}
}
\email{hduong@uow.edu.au}

\author{Yali Jiang}
\affiliation{%
   \institution{Shandong University}
   \city{Jinan}
  \country{China}
}
\email{jiang.yl@sdu.edu.cn}

\author{Yanbin Li}
\affiliation{%
\institution{Shandong University}
    \city{Jinan}
  \country{China}
}
\email{yanbinli@sdu.edu.cn}

\author{Willy Susilo}
\affiliation{%
\institution{University of Wollongong}
\city{Wollongong}
\country{Australia}
}
\email{wsusilo@uow.edu.au}

\author{Lizhen Cui}
\affiliation{%
\institution{Shandong University}
    \city{Jinan}
  \country{China}
}
\email{clz@sdu.edu.cn}

\renewcommand{\shortauthors}{Xirong Ma et al.}

\begin{abstract}
Homomorphic permutation is fundamental to privacy-preserving computations based on batch-encoding homomorphic encryption. It underpins nearly all homomorphic matrix operations and predominantly influences their complexity. Permutation decomposition as a potential approach to optimize this critical component remains underexplored. \textcolor{black}{In this paper, we propose novel decomposition techniques to optimize homomorphic permutations, advancing homomorphic encryption-based privacy-preserving computations.} 

\textcolor{black}{We start by defining an ideal decomposition form for permutations and propose an algorithm searching for depth-1 ideal decompositions. }
\textcolor{black}{Based on this, we prove} the full-depth ideal decomposability of permutations used in specific homomorphic matrix transposition (HMT) and multiplication (HMM) algorithms, allowing them to achieve asymptotic improvement in speed and rotation key reduction. 
As a demonstration of applicability, substituting the HMM components in the best-known inference framework of encrypted neural networks with our enhanced version shows up to a \textcolor{black}{$3.9\times$} reduction in latency.

\textcolor{black}{We further devise a new method for computing arbitrary homomorphic permutations, specifically those with weak structures that cannot be ideally decomposed. 
We design a network structure that deviates from the conventional scope of decomposition and outperforms the state-of-the-art technique with a speed-up of up to \(1.69 \times\) under a minimal rotation key requirement.}


\end{abstract}

\begin{CCSXML}
<ccs2012>
   <concept>
       <concept_id>10002978.10002979</concept_id>
       <concept_desc>Security and privacy~Cryptography</concept_desc>
       <concept_significance>500</concept_significance>
       </concept>
 </ccs2012>
\end{CCSXML}

\ccsdesc[500]{Security and privacy~Cryptography}

\keywords{Homomorphic Encryption, Homomorphic Permutation, Permutation Decomposition, Homomorphic Matrix Operation}

\maketitle


\section{Introduction}\label{sec: introduction}

Homomorphic encryption (HE) is a promising cryptographic primitive that facilitates computations on encrypted data without revealing the underlying plaintext. Significant technical advances in HE have been made following Gentry's breakthrough \cite{gentry2009fully}. Among these is a class of HE schemes supporting SIMD (single instruction, multiple data) techniques \cite{gentry2012fully,smart2014fully}. These schemes \cite{brakerski2011fully,fan2012somewhat,brakerski2014leveled,halevi2019improved,cheon2017homomorphic} allow multiple plaintext values to be packed as \textcolor{black}{a vector (or matrix)} and encrypted into a single ciphertext, enabling homomorphic coordinate-wise additions, multiplications, and collective shifts (known as rotations) of vector components. The advantages of batch encoding and parallel computation result in good amortized complexity per plaintext value, making them applicable in various privacy-preserving computation scenarios. 

In HE-based privacy-preserving computations, particularly those related to machine learning, matrix operations on encrypted data are required as core computing components. For instance, homomorphic matrix multiplication and transposition are indispensable for the homomorphic evaluation of neural network propagation algorithms \cite{gilad2016cryptonets,jiang2018secure,lee2022privacy,sav2021poseidon,sav2022privacy,brutzkus2019low}.

In the plaintext computational scenario, matrix operations need to retrieve and arrange matrix entries frequently by index addressing. 
However, for matrices encrypted under a batch-encoding HE scheme,
this task can only be achieved through basic ciphertext operations like addition, multiplication, and rotation. We refer to this process as \textit{homomorphic permutation}.

Homomorphic permutation often dominates the complexity of homomorphic matrix operations, making its optimization critical for improving the performance of HE-based privacy-preserving applications. Specifically, homomorphic permutation necessitates a substantial number of ciphertext rotations \cite{halevi2014algorithms,halevi2018faster} which are among the most computationally intensive operations in HE, involving multiple number-theoretic transformations over the polynomial ring. In addition, distinct rotation keys are required to achieve rotations of varying distances, each of which can be significantly larger than the ciphertext itself. The need for numerous rotation keys introduces considerable overhead in both communication and storage \cite{lee2023rotation,chan2024ark}.

One common optimization for homomorphic permutation involves deconstructing ciphertext rotations and reordering subroutine sequences to balance time complexity with the number of rotation keys. This includes the \textit{hoisting} technique proposed by Halevi and Shoup \cite{halevi2018faster} and its extended version known as \textit{double hoisting} for homomorphic linear transformation \cite{bossuat2021efficient}. 

\textcolor{black}{Another compatible optimization is \textit{permutation decomposition}}. It aims to decompose a permutation into multiple simpler ones, each requiring fewer rotations and keys. This notion was first introduced by Gentry \textit{et al.} \cite{gentry2012fully} where they used the Benes network \cite{benevs1964optimal} for general-purpose permutation decomposition. 
\textcolor{black}{However, its adoption in privacy-preserving computation remains limited \cite{meftah2021doren}, as the complexity bound provided by existing general decomposition techniques \cite{gentry2012fully,halevi2014algorithms} is only optimized in the permutation length\footnote[2]{\textcolor{black}{Throughout this paper, we consistently refer to the length of a permutation as the dimension of the vector upon which the permutation operates.}}. This bound tends to be loose and insufficiently adaptable to permutations with real-world usages \cite{jiang2018secure,rizomiliotis2022matrix,meftah2022towards,MA2024103658}, where the computational complexity is more precisely determined by the number and distribution of diagonals containing non-zero entries in the matrix representation of the permutations \cite{halevi2014algorithms,jiang2018secure}.}

In this paper, we introduce new techniques to devise more effective decompositions, achieving enhancement in both specific and general use cases of homomorphic permutation.

\subsection{Technique Overview \& Contributions}

We begin by analyzing the potential optimal enhancements that decompositions can bring to homomorphic permutations. \textcolor{black}{This motivates us to define an ideal decomposition form, which adaptively reduces the complexity of the homomorphic permutation from polynomial (square root) to logarithmic in the number of non-zero diagonals of the permutation's matrix representation.}
The rest of the paper revolves around achieving or closely approximating the defined ideal optimization for specific and arbitrary permutations:

Firstly, we present an algorithm that searches depth-1 ideal decomposition solutions \textcolor{black}{for arbitrary permutations} (Section \ref{sec: singleround}). This enables us to discover the full-depth ideal decomposability of specific permutations in homomorphic matrix transposition (HMT) proposed by Jiang \textit{et al.}\cite{jiang2018secure} and homomorphic matrix multiplication (HMM) by Rizomiliotis and Triakosia \cite{rizomiliotis2022matrix} (Section \ref{sec: DMP4HMTHMM}), allowing these privacy-preserving computations to achieve asymptotic improvement in speed and rotation key reduction. We also present the application of our enhanced HMM, which performs inference on an encrypted convolutional neural network model. By substituting the HMM component in the state-of-the-art solution with our version, we achieve a significant improvement in latency. 
More usage of the search algorithm for decomposing other valuable permutations are provided in the appendix (Appendix \ref{app: jiangdmp}). 

Secondly, since not every permutation guarantees ideal decomposability, 
we further devise a new algorithm for decomposing arbitrary homomorphic permutations (Section \ref{sec: HVP}). This aims to better approximate the performance of ideal decomposition than previous approaches. We construct a multi-group network structure with nodes taking vectors as input and output, applying rotations with varying steps. Through precise node transmission logic and tailored homomorphic operations, our approach outperforms the state-of-the-art implementation using the Benes network \cite{gentry2012fully,halevi2014algorithms} under minimal rotation key requirements.

\section{Background and Preliminary}\label{sec: background}

\subsection{Notations}\label{sec: notations}
\textcolor{black}{Scalars and polynomials are denoted by lowercase letters (e.g., $a$), vectors by bold lowercase letters (e.g., $\mathbf{a}$), and matrices by uppercase letters (e.g., $A$).} 
$A[i,j]$ represents the element in the $i$-th row and $j$-th column of matrix $A$. We also use $A(k,l)^*$ to indicate the element in the $k$-th diagonal and $l$-th row of $A$, \textcolor{black}{where $A(k,l)^*=A[l,k+l]$.} 
Index computations are always implicitly reduced by the matrix's dimension. In addition, we denote 
element-wise product by $\odot$.

\subsection{Batch-encoding Homomorphic Encryption}

As mentioned above, the batch-encoding homomorphic encryption is a class of HE schemes \cite{brakerski2011fully,gentry2012fully,fan2012somewhat,cheon2017homomorphic} based on the security assumption of RLWE \cite{lyubashevsky2010ideal,lyubashevsky2013toolkit}, which supports vector spaces as plaintext spaces and enables homomorphic element-wise operations.  

Typically, for \textcolor{black}{an $n$-dimensional vector} $\mathbf{z}$ to be encrypted, a batch-encoding HE scheme first maps it to some polynomial ring space $R = \mathbb{Z}[X]/(X^N + 1)$ \textcolor{black}{where $N$ is a power of two and $n \leq N$.} The mapped plaintext polynomial, denoted $pt(\mathbf{z})$, is then encrypted into a ciphertext $ct(\mathbf{z})$ in $R_{Q_\ell}^2 = (\mathbb{Z}_{Q_\ell}[X]/(X^N+1))^2$ where $Q_\ell = \prod_{i=0}^{\ell} q_i, 0\leq \ell< L$ for some pre-determined modulus chain $Q=[q_0, \dots, q_{L-1}]$. Such ciphertext generally supports the following homomorphic operations (see Appendix \ref{app: HE} for more details): 
\begin{itemize}[label={$\bullet$},leftmargin=1.1em]
    \item $\texttt{Add}(ct(\mathbf{z}_1), ct(\mathbf{z}_2))$: For the input ciphertexts $ct(\mathbf{z}_1) \in R_{Q_\ell}^2$ and $ct(\mathbf{z}_2) \in R_{Q_{\ell'}}^2$, output $ct(\mathbf{z}_1) + ct(\mathbf{z}_2) \bmod{Q_{\min(\ell, \ell')}}$, which is a valid ciphertext of $\mathbf{z}_1 + \mathbf{z}_2$.
    
    \item $\texttt{CMult}(ct(\mathbf{z}_1), pt(\mathbf{z}_2))$: For the input ciphertext $ct(\mathbf{z}_1) \in R_{Q_\ell}^2$ and plaintext $pt(\mathbf{z}_2) \in R_{Q_{\ell'}}$, output $ct(\mathbf{z}_1) \cdot pt(\mathbf{z}_2) \bmod{Q_{\min(\ell, \ell')}}$ as a valid ciphertext of $\mathbf{z}_1 \odot \mathbf{z}_2$.
    
    \item $\texttt{Mult}(ct(\mathbf{z}_1), ct(\mathbf{z}_2), rlk)$: For the input ciphertexts $ct(\mathbf{z}_1) \in R_{Q_\ell}^2$ and $ct(\mathbf{z}_2) \in R_{Q_{\ell'}}^2$, compute the tensor product \textcolor{black}{$ct(\mathbf{z}_1) \otimes ct(\mathbf{z}_2) \bmod{Q_{\min(\ell, \ell')}}$} in $R_{Q_{\min(\ell, \ell')}}^3$. Then switch it back to $R_{Q_{\min(\ell, \ell')}}^2$ using a relinearization key $rlk$ and output it as a valid ciphertext of $\mathbf{z}_1 \odot \mathbf{z}_2$.
    
    \item $\texttt{Rescale}(ct)$: For the input ciphertext $ct \in R_{Q_\ell}^2$, \texttt{Rescale} reduces the scaling factor of the underlying plaintext by \textcolor{black}{$\frac{Q_{\ell-1}}{Q_{\ell}}$, resulting in the output: $\lfloor \frac{Q_{\ell-1}}{Q_{\ell}} \cdot ct \rceil \bmod{Q_{\ell-1}} \in R_{Q_{\ell-1}}^2$}. This operation is required after a $\texttt{Mult}$ or $\texttt{CMult}$ for noise control.
    \item $\texttt{Rot}(ct(\mathbf{z}), k)$: \textcolor{black}{Homomorphically left-shifting each component of $\mathbf{z}$ by $k \bmod{n}$ positions with $n$ being the dimension of $\mathbf{z}$. A rotation key $rtk_k$ is required as an implicit input.}
    
\end{itemize}
We summarize some common constraints of batch-encoding homomorphic encryption schemes: 
\begin{itemize}[label={$\bullet$},leftmargin=1.1em]
    \item Batch-encoding HE only natively supports homomorphic operations with a limited multiplication depth of $L-1$, as $Q_L$ provides at most $L-1$ moduli \textcolor{black}{for} \texttt{Rescale} operations. 
    
    \item The complexity order of the routines is: $\texttt{Mult} \approx \texttt{Rot} > \texttt{Rescale} > \texttt{CMult} > \texttt{Add}$. Here, $\texttt{Add}$ and $\texttt{CMult}$ are typically performed in the NTT domain, with an asymptotic complexity of $O(N)$. While others require NTT-domain switches with an $O(N\log{N})$ complexity. 
    
    \item A rotation key consists of multiple ciphertexts encrypting slices of some Galois automorphism of $s$ scaled by some gadget vector and modulus chain $P$ \cite{brakerski2014leveled}. A Large number of rotation keys may cause a great burden on key generation and a high communication overhead for transmitting the keys \cite{lee2023rotation,chan2024ark}. This can be a significant efficiency factor in some multi-party computation scenarios \cite{aloufi2021computing,mouchet2023multiparty,kim2023asymptotically}.  
\end{itemize}


\subsection{Homomorphic Permutation}\label{sec: introHP}

Given a ciphertext $ ct(\mathbf{x}) $ encrypting an $ n $-dimensional plaintext vector $ \mathbf{x} $ and a length-$n$ permutation $ p $ to be applied to $ \mathbf{x} $, the homomorphic permutation $ ct(p(\mathbf{x})) $ can be represented as a homomorphic linear transformation (HLT) $ ct(U\mathbf{x}) $, where $ U $ is the matrix representation of $p$. \textcolor{black}{Specifically, $U[i,j]=1$ if $p$ maps the $i$-th entry of $\mathbf{x}$ to position $j$, otherwise $U[i,j]=0$.} 
$ ct(U\mathbf{x}) $ is computed by:
\begin{align}\label{eq: orgLT}
ct(U\mathbf{x})\leftarrow \sum_{i=0}^{n-1} \mathbf{u}_i \odot \texttt{Rot}(ct(\mathbf{x});i),
\end{align}
where $ \mathbf{u}_i $ denotes the $ i $-th diagonal of $ U $ \cite{halevi2014algorithms}. The BSGS (Baby-Step Giant-Step) algorithm reformulates this expression to reduce its complexity \textcolor{black}{from $O(n)$ ciphertext rotations to $O(n_1+n_2)$ with $n_1n_2 = n$ as follows}:
\begin{align}\label{eq: BSGS}
\begin{split}
\sum_{i = 0}^{n_2-1} \texttt{Rot}\left(\sum_{j =0}^{n_1-1} \left( \mathbf{u}_{i,j} \odot \texttt{Rot}(ct(\mathbf{x}); j) \right) ; n_1 \cdot i \right).
\end{split}
\end{align}
where $ \mathbf{u}_{i,j} = \texttt{Rot}(\mathbf{u}_{n_1 \cdot i + j}; -n_1 \cdot i) $.
If there exists a common difference $ a $ and range $d \leq \lfloor n/a \rfloor$ such that non-zero diagonals are distributed over an arithmetic sequence $ s = \{ a \cdot i \mid 0 \leq i \leq d \} $, a slight modification of BSGS yields a rotation complexity of $ O(d_1 + d_2) $ with $ d_1 d_2 = d $.  
The hoisting techniques \cite{halevi2018faster,bossuat2021efficient} provide the BSGS algorithm with a trade-off between speed and the number of rotation keys required. Broadly, they require $ O(d_1 + d_2) $ rotation keys and enable the speed of HLT to increase with the BSGS ratio $ d_1/d_2 $ within a certain range (potentially $ 1 \sim 16 $).

\begin{figure*}[t]
    \centering
    \includegraphics[width=1.0\textwidth]{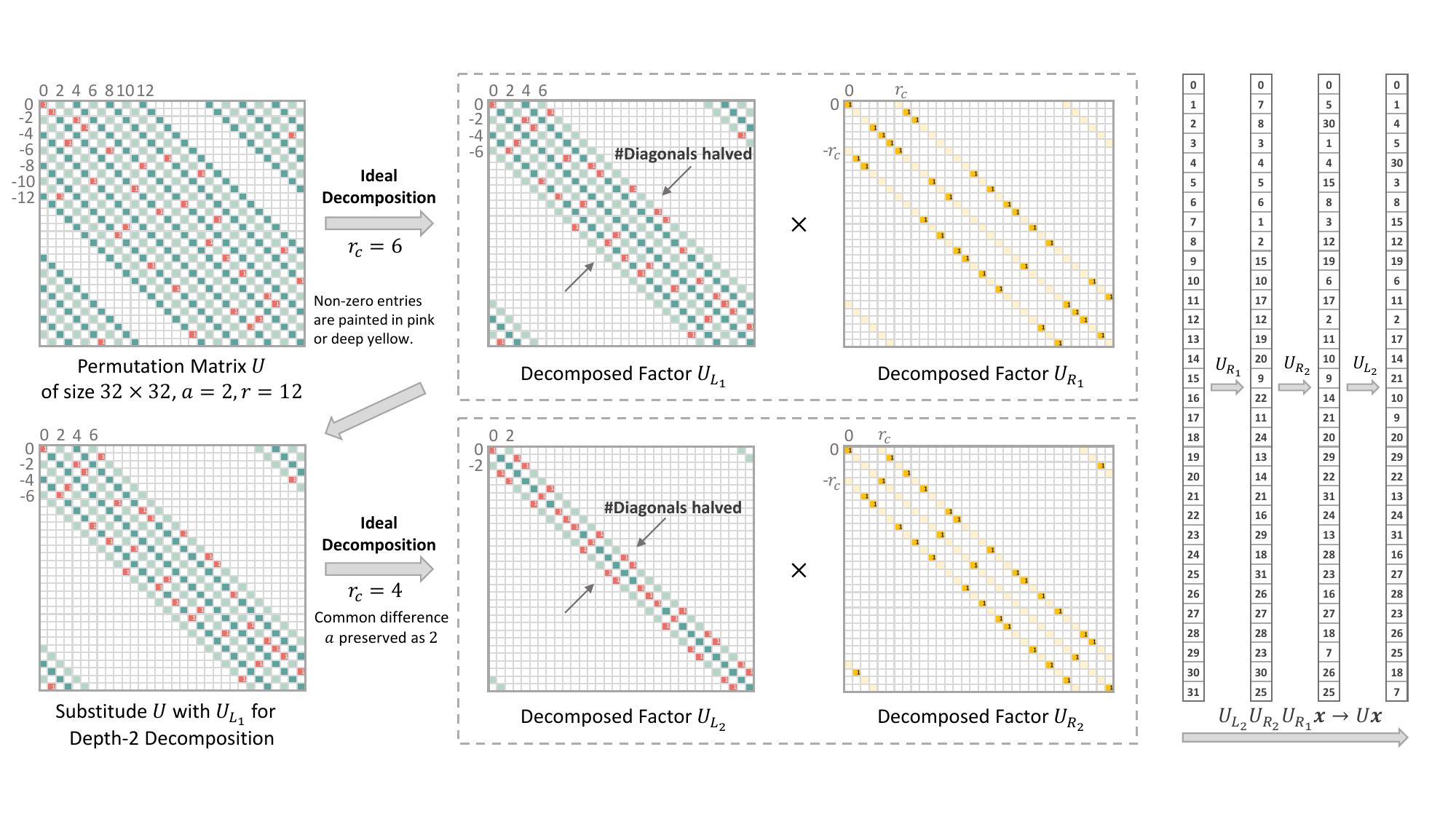}
    \caption{\textcolor{black}{Depth-2 ideal decomposition on a $32\times 32$ permutation with non-zero diagonal indices in $\{2\cdot i\mid -6\leq i\leq 6 \}$}} 
    \label{fig: IdealcmpDemo}
\end{figure*}

\subsubsection{Permutation Decomposition}

Equations \ref{eq: orgLT} and \ref{eq: BSGS} imply that the complexity of homomorphic permutation 
is closely related to the number of non-zero diagonals in its matrix representation. As mentioned in Section \ref{sec: introduction}, permutation decomposition aims to factorize the permutation into factors possessing \textcolor{black}{fewer} non-zero diagonals. This contains two potential benefits: reduced ciphertext rotations and rotation keys.



Gentry \textit{et al.} \cite{gentry2012fully} utilized the Benes network for permutation decomposition in homomorphic permutation. For any permutation $U$ of length $n$, this approach executes $\log{n}$ rounds of decomposition. In the initial iteration, $U$ is factorized into three components: $ U = U^\sigma U^\rho U^\tau $, with $ U^\rho $ serving as the input of the next round. $ U^\sigma $ and $ U^\tau $ contain only non-zero diagonals in $ \{\pm 2^{\log{n}-i-1},0 \} $, where $i$ indicates the iteration round. To our knowledge, the Benes network-based technique \cite{gentry2012fully,halevi2014algorithms} remains the first and optimal general decomposition method for homomorphic permutation. 

Although the Benes network lowers the rotation complexity to roughly $4\log n$, \textcolor{black}{it fails to tailor this cost to the distribution of non‑zero diagonals, which makes it counter‑productive for highly structured permutations. Specifically, it does not consider the arithmetic sequence among the non-zero diagonal indices and may disrupt this regularity during the decomposition rounds. As a result, the expense of evaluating the decomposed factors with HLT is no longer reduced by the length of the arithmetic sequence and can even surpass the cost of executing the original permutation.}
Another limitation of the Benes network lies in its significant depth overhead ($2$ additional levels per round), which can easily exceed the depth supported by homomorphic encryption schemes. While the depth can be reduced by collapsing network levels \cite{halevi2014algorithms},
the complexity of the merged permutations may increase as well.  

\begin{theorem}\label{thm: k=k1+k2}\cite{MA2024103658}
\textcolor{black}{Let $U$ be an $n \times n$ matrix formed by the product of two square matrices $U_L$ and $U_R$. Then, $U[i,j] = \sum_{a=0}^{n-1} U_L[i,a] \cdot U_R[a,j]$, and the diagonal index $k = j - i$ of $U[i,j]$ can be decomposed as the sum of the diagonal index $k_L = a - i$ of $U_L[i,a]$ and the diagonal index $k_R = j - a$ of $U_R[a,j]$, that is, $k = k_L + k_R \mod n$.}

\end{theorem}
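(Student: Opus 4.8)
The plan is to separate the statement into two essentially independent verifications and dispatch each directly. The first assertion, $U[i,j]=\sum_{a=0}^{n-1}U_L[i,a]\cdot U_R[a,j]$, is simply the definition of the $(i,j)$ entry of the product $U=U_LU_R$ (the row--column rule), so there is nothing to establish beyond recalling that $U$ is by hypothesis formed as this product.

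For the diagonal-index identity I would pass to the diagonal coordinate system $A(k,l)^{*}=A[l,k+l]$ from Section~\ref{sec: notations} and read off the indices of each factor in a fixed summand $U_L[i,a]\cdot U_R[a,j]$. Writing $U_L[i,a]=U_L(k_L,l_L)^{*}$ forces $l_L=i$ and $k_L+l_L=a$, hence $k_L=a-i$; writing $U_R[a,j]=U_R(k_R,l_R)^{*}$ forces $l_R=a$ and $k_R+l_R=j$, hence $k_R=j-a$; and writing $U[i,j]=U(k,l)^{*}$ forces $l=i$ and $k=j-i$. The claimed relation is then just the telescoping cancellation $k_L+k_R=(a-i)+(j-a)=j-i=k$. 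Because, as stated in Section~\ref{sec: notations}, every index computation is implicitly reduced by the matrix dimension $n$, this equality lives in $\mathbb{Z}_n$, giving $k=k_L+k_R \bmod n$.

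There is no substantive obstacle here; the only point warranting care is the modular bookkeeping. One should note that $i\mapsto i\bmod n$, and likewise for $a$ and $j$, are ring homomorphisms and that $a-i$, $j-a$, $j-i$ are affine in these, so $k_L,k_R,k$ are well-defined elements of $\mathbb{Z}_n$ independent of the chosen representatives and the sum $k_L+k_R$ is unambiguous. For use later in the paper I would also record the immediate consequence that makes this lemma useful: if the non-zero diagonals of $U_L$ and $U_R$ have index sets $S_L,S_R\subseteq\mathbb{Z}_n$, then every non-zero diagonal index $k$ of $U$ lies in the sumset $\{\,k_L+k_R\bmod n : k_L\in S_L,\ k_R\in S_R\,\}$, since a nonzero entry on the $k$-th diagonal of $U$ requires some $a$ for which both $U_L[i,a]$ and $U_R[a,j]$ are non-zero, with $k=k_L+k_R$.
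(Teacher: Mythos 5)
Your proposal is correct. The paper itself offers no proof of this statement---it is quoted from prior work \cite{MA2024103658}---but your argument is precisely the canonical one any proof would use: the first claim is the definition of the matrix product, and the diagonal relation is the telescoping $(a-i)+(j-a)=j-i$ read through the coordinate convention $A(k,l)^{*}=A[l,k+l]$, with the reduction modulo $n$ handled exactly as the paper's notation prescribes. Your added remark about the sumset of non-zero diagonal indices is also accurate and is in fact the way the theorem is actually used in Sections 2.3.1 and 3 of the paper, so recording it is sensible rather than superfluous.
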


Subsequent work designing decompositions for specific permutations gives us new insights \cite{MA2024103658}. It proposed two-factor decompositions in the form $U = U_L U_R$, which includes an intriguing pattern (Theorem \ref{thm: k=k1+k2}). 
This implies that two-factor decompositions can be seen as performing subtraction on the diagonal indices. 
If $ U $ has all its non-zero diagonals distributed within the interval $ [-r,r] \mod{n} $, and we designate its factor, $ U_R $, to possess non-zero diagonals exclusively at \textcolor{black}{position} $ \{r_c, -r_c, 0 \}$ ($r_c\leq r$), then $ U_L $ may have \textcolor{black}{its non-zero diagonals' indices} confined to the range $[-r', r']$ where $r'=\max{(r-r_c,r_c/2)}$. 
Then $ U $ can be substituted by $ U_L $ for further decomposition, as $ U_R $ \textcolor{black}{maintains a small constant number of non-zero diagonals}.
This idea allows $ U $ to be recursively decomposed into a chain:
\begin{equation}\label{eq: idealdmp}
    U = U_{L_\ell} U_{R_\ell} U_{R_{\ell-1}} \dots U_{R_1}.
\end{equation}
The non-zero diagonals of \( U_{L_\ell} \) progressively contract from both sides of the $0$-th diagonal, \textcolor{black}{with their number decreasing as $\ell$ increases.} 

In this paper, we specifically define an ideal decomposition form (Definition \ref{def: idealdmp}). \textcolor{black}{“Ideal” denotes that each round of decomposition halves the number of non‑zero diagonals in the factor decomposed, up to an additive constant of $2$ (excluding the $0$-th diagonal). Figure \ref{fig: IdealcmpDemo} illustrates a depth-$2$ ideal decomposition on a $32\times 32$ permutation matrix whose non-zero diagonal indices follow the arithmetic sequence $s=\{a\cdot i \mid -r/a \leq i \leq r/a\}$ with common difference $a=2$ and distribution range $r=12$. One can observe that the number of non-zero diagonals in $U$ and $U_{L_1}$ are halved in their left factors, while the right factors preserve a constant number of diagonals.}   

\textcolor{black}{A full-depth ideal decomposition requires $O(\log (r/a))$ rotations and a multiplication depth of $\lfloor \log (r/a)\rfloor+1$. Figure \ref{fig: IdealcmpDemo} shows exactly a full-depth decomposition, where the sub-permutations $U_{R_1},U_{R_2},U_{L_2}$ can be sequentially applied to any input ciphertext $ct(\mathbf{x})$ to compute $ct(U\mathbf{x})$, each requiring a single HLT. This results in a total of $2(\lfloor \log{(r/a)} \rfloor+1)=6$ rotations. Furthermore, if the decomposition is partial with a depth of $\ell$, the $U_{L_\ell}$ in the decomposition chain will benefit from a BSGS-based HLT, yielding an overall cost of $O\left(\sqrt{(r/a)/2^\ell}+\ell\right)$, due to the preserved arithmetic sequence in $U_{L_\ell}$.
In contrast to the Benes network, this decomposition scheme decouples the complexity from $n$, implying an enhanced adaptability to structured permutations in which the non-zero diagonals are regularly and sparsely distributed across the permutation matrix.} 
\textcolor{black}{Such an optimization effect may represent a potential upper bound in speed-up achievable through decomposition techniques. }

\begin{definition}\label{def: idealdmp}
    Given a square matrix $ U $ with \textcolor{black}{its non-zero diagonals distributed in interval $[-r,r]$ as an arithmetic sequence $s=\{a\cdot i \mid -r/a \leq i \leq r/a\}$ with the common difference $a\geq 1$,} and a \textcolor{black}{decomposition} $ f $, we say $ f $ is an ideal depth-$\ell$ decomposition
    ($1\leq \ell\leq \lfloor \log{(r/a)}\rfloor$) 
    of $ U $ if \textcolor{black}{$ f(U,\ell) = \{ U_{L_\ell}, U_{R_\ell}, U_{R_{\ell-1}}, \dots, U_{R_1} \} $ satisfies}:
    \begin{enumerate}
        \item \textcolor{black}{$U =U_{L_\ell} U_{R_\ell}U_{R_{\ell-1}}\dots U_{R_1}$.} 
        \item For any $ U_{R_i}, 1\leq i \leq \ell $, it has non-zero diagonals only at diagonals with indices $ \{0,\pm \lceil r_c^{(i)} \rceil\} $ where $ r_c^{(i)}=a\cdot \lceil \frac{r/a}{2^i}\rceil $.
        
        
        \item The non-zero diagonals of $U_{L_\ell} $ are distributed between $[-r', r']$, where $ r'=r-\sum_{i=1}^{\ell} r_c^{(i)} $.
    \end{enumerate}
    If $f$ is an ideal depth-$\lfloor\log{(r/a)}\rfloor$ decomposition, then we may also call it an ideal full-depth decomposition.
    
\end{definition}

However, intuitively, not all permutations possess such an ideal decomposition, and it is even non-trivial to check whether a permutation has one. To our knowledge, the most relevant work is that of Han \textit{et al.} \cite{han2019discrete}, which only proposes a decomposition of the discrete Fourier transformation (DFT) satisfying Definition \ref{def: idealdmp}   rather than any pure permutation. 




\section{Searching Ideal Depth-1 Decomposition} \label{sec: singleround}

\textcolor{black}{To find a full-depth ideal decomposition for a given permutation, we can first construct a method searching for a depth-$1$ solution.  This initial search can then be recursively applied to the resulting factors to achieve further decomposition. Moreover, the depth‑$1$ decomposition factors themselves may reveal interpretable structural patterns that serve as evidence of full-depth decomposability.
This section aims to complete the first step, designing a depth-$1$ ideal decomposition search for any permutation. 
}


\subsection{Basic Decomposition Pattern}

To factorize an $n\times n$ permutation matrix $U$ into a product of two factors ($U=U_L U_R$) satisfying the ideal depth-$1$ decomposition of Definition \ref{def: idealdmp},
\textcolor{black}{we first observe how non-zero entries on diagonals of $U_L$ and $U_R$ contribute to those in $U$. For any non-zero entry $U(k,l)^*$ in $U$, it is determined by the $k_R$-th diagonal in $U_R$ and the $k_L$-th diagonal in $U_L$, where $k_L = k - k_R \bmod n$ (Theorem \ref{thm: k=k1+k2}). This follows that $U_R(k_R,k+l-k_R)^*$ and $U_L(k_L,l)^*$ should be set to $1$ to guarantee the correct generation of $U(k,l)^*$. We call this decomposing $U(k,l)$ into $k_R$-th diagonal and $k_L$-th diagonal (see Figure \ref{fig: ConvertR}).} Though entry $U(k,l)$ can be associated with more than two diagonals in general, it will end up with $U_R$ and $U_L$ not being permutations, which may damage the composability.

In the following context, we use function $\texttt{ConvertR}$ to describe the operation of decomposing any entry $ U(k,l)^* $ into the $ k_R $-th diagonal of $ U_R $:
\begin{equation}
    \texttt{ConvertR}(k,l,k_R) = k+l-k_R.
\end{equation}
The output is the row index of the entry in $U_R$ that will be set to $1$. 

During decomposing entries of $U$ into some desired diagonals of $U_R$ and $U_L$, conflicts may occur and damage the correctness of factorization. 
\textcolor{black}{Specifically, we say there is a row (or column) conflict in the matrix if there is more than one non-zero entry in a row (or column).}  
In $U_R$, row conflicts may occur, while column conflicts are impossible; conversely, the opposite holds in $ U_L $. Additionally, since the column index of the entry to be set to $ 1 $ in $ U_L $ equals \textcolor{black}{$\texttt{ConvertR}(k, l, k_R)$}, a duality exists between the two matrices, indicating that managing either $ U_L $ or $ U_R $ is sufficient to make the other matrix undergo a mirroring process.

\begin{figure}[h]
    \centering
    \includegraphics[scale=0.44]{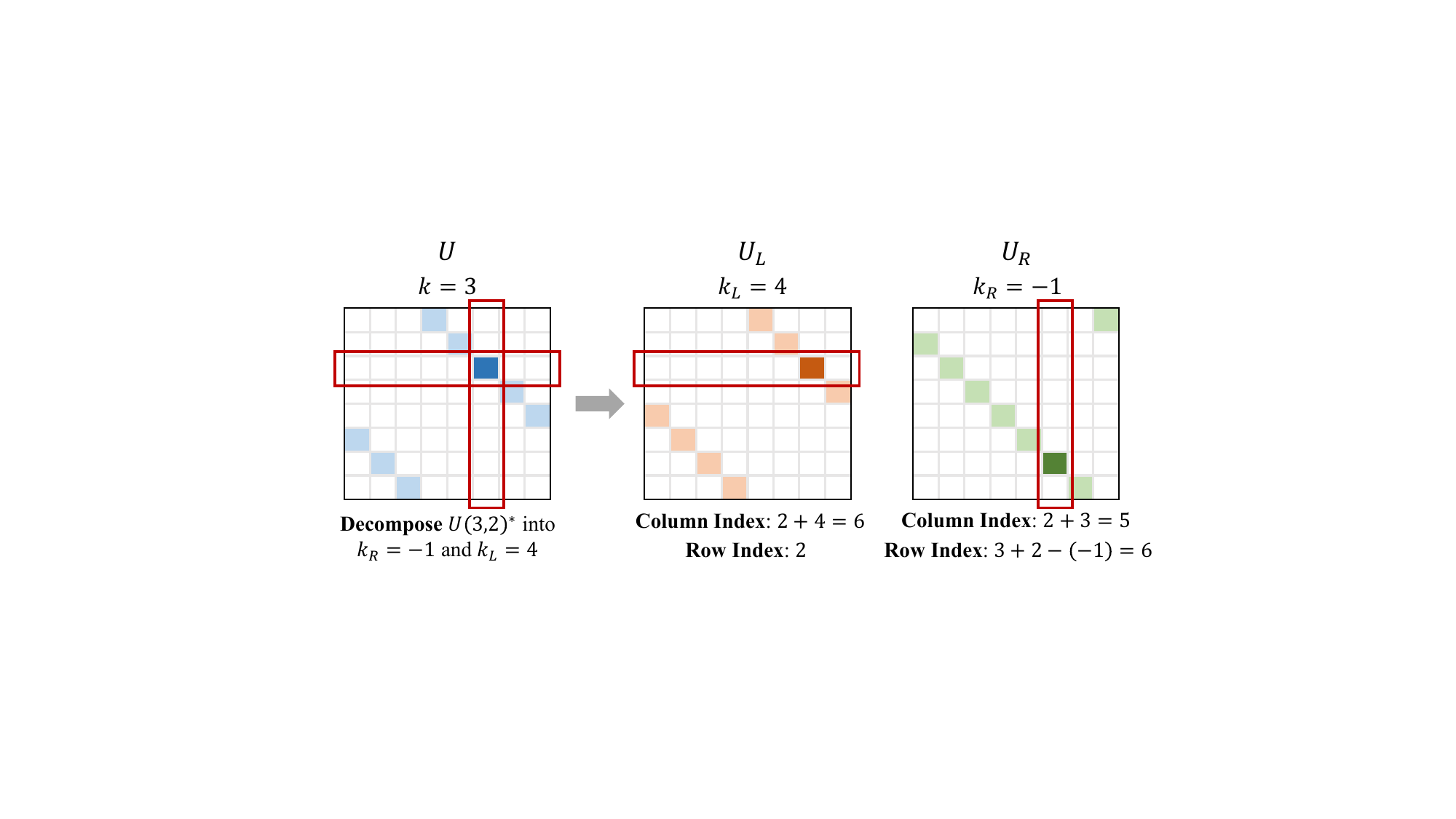}
    \caption{Decompose single entry in $8\times 8$ permutation matrix}
    \label{fig: ConvertR}
    \Description[Decomposing single entry in an $8\times 8$ permutation matrix]{} 
\end{figure}

\begin{algorithm}[t]
    \renewcommand{\algorithmicrequire}{\textbf{Input:}}
    \renewcommand{\algorithmicensure}{\textbf{Output:}}
    \caption{\textcolor{black}{Depth-$1$ Ideal Decomposition Search}} 
    \label{alg:IdealDmpSearch}
    \begin{algorithmic}[1]
    \Require 
        \Statex $U$: Permutation Matrix of size $n\times n$; 
        \Statex $a$: Common difference of the non-zero diagonal indices in $U$;
        \Statex $r$: Distribution bound of the non-zero diagonal indices in $U$.
    \Ensure 
        \Statex $U_L,U_R$: Depth-$1$ Ideal decomposition factors of $U$.  
    \vspace{0.3em} 
    \Statex \textbf{[Step 1: Decompose entries]}
    \State $r_c \leftarrow a\lceil \frac{r/a}{2} \rceil$
    \State $M_{r_c},M_{-r_c},M_{0}\leftarrow \varnothing$ 
    \Comment{Initialize empty tables}
    \State $M\leftarrow \{M_{r_c},M_{-r_c},M_{0}\}$

    \For{$k$ traversing indices of non-zero diagonals in $U$}
        \For{$l$ traversing row indices of non-zero entries in the $k$-th diagonal}
            \If{$M_{-r_c}[\texttt{ConvertR}(k,l,r_c)]=\varnothing$ and $k>=r_c$}
                \State $M_{r_c}[\texttt{ConvertR}(k,l,r_c)]\leftarrow k$
            \ElsIf{$M_{r_c}[\texttt{ConvertR}(k,l,-r_c)]=\varnothing$ and $k<=r_c$} 
                \State $M_{-r_c}[\texttt{ConvertR}(k,l,-r_c)]\leftarrow k$ 
            \ElsIf{$-r_c<k<r_c$} 
                \State $M_{0}[\texttt{ConvertR}(k,l,0)\leftarrow k$
            \Else
                \State \textbf{return} $\varnothing,\varnothing$ 
                \Comment{Decomposition not exists}
            \EndIf
        \EndFor
    \EndFor
    
    \vspace{0.3em} 
    \Statex \textbf{[Step 2: Resolve conflicts]}
    \State $Q\leftarrow \varnothing$ 
    \Comment{Initialize empty list}
    \For{ $a$ traversing the keys in $M_0$} 
        \If{$M_{-r_c}[a]\not=\varnothing$ or $M_{r_c}[a]\not=\varnothing$}
            \State Append $a$ to $Q$ 
            \Comment{Add one conflict to list}
        \EndIf
    \EndFor
    \State $solved \leftarrow 1$ 
    \If{$|Q| \not= 0 $}
        \State $solved \leftarrow \texttt{CheckRow}(M,n,r_c,Q[0],Q,0)$
    \EndIf

    \vspace{0.3em}
    \Statex \textbf{[Step 3: Reconstruct $U_L$]}
    \If{$solved = 0$}
        \State \textbf{return} $\varnothing,\varnothing$
        \Comment{Decomposition not exists}
    \EndIf
    \State Construct $U_R$ from $M$, Compute $U_L$ such that $U=U_LU_R$ 
    \State \textbf{return} $U_L,U_R$
    \Comment{Decomposition exists}
    \end{algorithmic}
\end{algorithm}

\begin{algorithm}[t]
    \renewcommand{\algorithmicrequire}{\textbf{Input:}}
    \renewcommand{\algorithmicensure}{\textbf{Output:}}
    \caption{Check One Conflicting Row Between the $0$-th and $\pm r_c$-th Diagonals (\texttt{CheckRow})}
    \label{alg:CheckRow}
    \begin{algorithmic}[1]
    \Require 
        \Statex $M = \{ M_{r_c}, M_{-r_c}, M_{0} \}$: Non-zero diagonals in $U_R$;
        \Statex $n$: The dimension of $U$;
        \Statex $r_c$: The non-zero diagonal index of $U_R$;
        \Statex $a_0$: The row with a potential conflict to be resolved;
        \Statex $Q$: The set containing indices of rows where conflicts exist;
        \Statex $i$: The number of resolved rows in $Q$.
    \Ensure 
        \Statex $solved \in \{0,1\}$ 

    \If{$M_0[a] = \varnothing$}
        \Comment{Row $a$ has no conflict}
        \If{$i + 1 < \textproc{len}(Q)$}
            \Comment{Find next conflict to resolve}
            \State $i\leftarrow i+1, a\leftarrow Q[i]$         
        \Else
            \Comment{All conflicts resolved}
            \State \textbf{return} $1$ 
        \EndIf
    \EndIf
    
    \vspace{0.3em} 
    \Statex \textbf{[Try to move $U_R(0,a)^*$ to the $r_c$-th diagonal]}
    \State $b\leftarrow M_0[a],M_0[a]\leftarrow \varnothing, \hat{a}\leftarrow (a-r_c)\bmod{n}, solved\leftarrow0$ 
    \If{$b \geq 0$ and $M_{-r_c}[\hat{a}] = \varnothing$}
        \State $M_{r_c}[\hat{a}]\leftarrow b$
        \State $solved\leftarrow \texttt{CheckRow} (M,n,r_c,\hat{a},Q,i)$ 
        \Comment{Resolve new conflicts caused by the movement}
    \EndIf
    
    \vspace{0.3em} 
    \Statex \textbf{[Otherwise, try to move $U_R(0,a)^*$ to the $-r_c$-th diagonal]}
    \If{$solved = 0$ and $b \leq 0$}
        \State $M_{r_c}[\hat{a}]\leftarrow \varnothing, \hat{a}\leftarrow a+r_c\bmod{n}$
        \If{$M_{r_c}[\hat{a}] = \varnothing$}
            \State $M_{-r_c}[\hat{a}]\leftarrow b$
            \State $solved\leftarrow \texttt{CheckRow}(M,n,r_c,\hat{a},Q,i)$ 
            \Comment{Resolve new conflicts caused by the movement}
        \EndIf
    \EndIf
    \If{$solved = 0$} 
    \Comment{Cannot resolve, undo modifications}
        \State $M_{r_c}[\hat{a}] \leftarrow \varnothing, M_{-r_c}[\hat{a}] \leftarrow \varnothing, M_{0}[a] \leftarrow b$
    \EndIf
    
    \State \textbf{return} $solved$ 
    \Comment{Return to the previous recursive call}
    \end{algorithmic}
\end{algorithm}

\subsection{Concrete Design}

Based on the observation above, we design an algorithm searching a depth-1 ideal decomposition $U=U_LU_R$ for \textcolor{black}{any} $n\times n$ permutation $U$. The algorithm consists of three steps, as outlined below. \textcolor{black}{Its pseudocode is provided in Algorithm~\ref{alg:IdealDmpSearch}.}  

\noindent\textbf{Step 1. (Decompose Entries)} This step decomposes each non-zero entry in $U$ to the $ r_c $-th, $ -r_c $-th, or $0$-th diagonals of $ U_R $. \textcolor{black}{$r_c$ corresponds to $r_c^{(1)}=a\cdot \lceil \frac{r/a}{2}\rceil $ in Definition \ref{def: idealdmp}, where $U$'s non-zero diagonal indices lie in $[-r,r]$ and $a$ is the greatest common difference found.} First, we create a set of key-value tables representing the non-zero diagonals of $U_R$: 
$$M= \{ M_{r_c}, M_{-r_c}, M_{0} \}, $$ where $M_i$ represents the $i$-th non-zero diagonal for $i\in \{\pm r_c,0 \}$. For each table, its keys indicate the row coordinates of entries on it, and its values denote the diagonal coordinates of entries in $U$ determined by the keys. Next, for each non-zero entry $U(k,l)^*$ in $U$:
\begin{enumerate}[leftmargin=1.7em]
    \item If \( k \geq r_c \) and \( M_{-r_c}[\texttt{ConvertR}(k, l, r_c)] \) is empty, add the key-value pair \( \langle \texttt{ConvertR}(k, l, r_c), k \rangle \) to \( M_{r_c} \). 
    
    \item If \( k \leq -r_c \) and \( M_{r_c}[\texttt{ConvertR}(k, l, -r_c)] \) is empty, add the key-value pair \( \langle \texttt{ConvertR}(k, l, -r_c), k \rangle \) to \( M_{-r_c} \).  
    
    \item If \( -r_c < k < r_c \), add \( \langle \texttt{ConvertR}(k, l, 0), k \rangle \) to \( M_0 \);
    
    \item Otherwise, a conflict that cannot be resolved exists, and the algorithm terminates. 
\end{enumerate}
The explanation of the conditional statements above is given as follows:

According to Theorem \ref{thm: k=k1+k2} and Definition \ref{def: idealdmp}, for the diagonal distribution of $ U_L $ to converge to $ [-r + r_c, r - r_c] $ with $ r_c = a \cdot \lceil \frac{r/a}{2} \rceil $, non-zero entries on the $k$-th diagonal of $ U $ with $ |k|> r_c $ can only be decomposed to the $\texttt{sign}(k)\cdot r_c $-th diagonal of $ U_R $. Step 1 employs the first two conditional statements to receive these entries for decomposition. If conflicts arise between the $ r_c $-th and $ -r_c $-th diagonals during the decomposition of these entries, it indicates that no ideal decomposition exists for $ U $. Such cases are captured by the fourth conditional statement. 

The remaining non-zero entries of $ U $ are distributed on diagonals whose indices' absolute value do not exceed $ r_c $, and these entries can either be decomposed to the 0-th diagonal or the nearest diagonal among the $ r_c $-th and $ -r_c $-th diagonals according to Theorem \ref{thm: k=k1+k2}. The strategy in Step 1 is to decompose all of them to the 0-th diagonal. 

After Step 1, $ U_R $ has non-zero entries only on the 0-th and $ \pm r_c $-th diagonals, with potential conflicts occurring only between the 0-th and $ r_c $-th diagonals, as well as between the 0-th and $ -r_c $-th diagonals. 


\noindent\textbf{Step 2. (Resolve Conflicts)} This step aims to resolve the conflicts derived from the previous step. We first create a set $ Q $ to record the rows where conflicts exist. For each key-value pair $ \langle a_1, b_1 \rangle $ in $ M_0 $, if there exists another key-value pair $ \langle a_1, b_2 \rangle $ in $ M_{r_c} $ or $ M_{-r_c} $, it indicates a conflict in row $ a_1 $ of $ U_R $, and we add $ a_1 $ to the set $ Q $. Then, we compute $ solved \leftarrow \texttt{CheckRow}(M, n, r_c,Q[0], Q, 0) $ (Algorithm \ref{alg:CheckRow}). If $ solved = 1 $, all the conflicts are successfully resolved; otherwise, there are irresolvable conflicts, and the algorithm terminates.

Assigned with a specific row index $ a $, the function $ \texttt{CheckRow} $ first checks if any conflict exists in row $ a $. If there is a conflict, it attempts to resolve the conflict; otherwise, it proceeds to resolve the next conflict in $ Q $. From the previous step, it is known that the conflicts occur between the 0-th diagonal and the $ \pm r_c $-th diagonals. Therefore, for any conflicting row $ a $, $ \texttt{CheckRow} $ can only resolve it by moving the entry $U_R(0,a)^*$ to the $ r_c $-th (or $ -r_c $-th) diagonal. The condition for successfully moving $U_R(0,a)^*$ is that the target row $ \hat{a} $ must not contain any entries on the $ r_c $-th (or $ -r_c $-th) diagonal. If such an entry exists (with a corresponding key-value pair $ \langle \hat{a}, v \rangle $ in $ M_{k} $ for $ k = r_c $ or $ -r_c $), two possible cases arise: 
\begin{enumerate}[leftmargin=1.7em]
    \item If $ |v| > r_c $, then $U_R(0,a)^*$ conflicts with $U_R(k,\hat{a})^*$ after the movement, but $U_R(k,\hat{a})^*$ is not movable. 
    \item Otherwise, $U_R(k,\hat{a})^*$ must have originally been moved from the $ 0 $-th diagonal to row $\hat{a}$. If continuing to move this entry results in an ideal decomposition, \textcolor{black}{then there must be an alternative order of movements for entries on the $0$-th diagonal that does not move any entry more than once and reach the same solution.} 
\end{enumerate}
Therefore, $U_R(0,a)^*$ should not be moved to row $\hat{a}$ in either of these cases. If the movement is successful, $\texttt{CheckRow}$ invokes itself to resolve any potential conflict between the $r_c$-th (or $-r_c$-th) diagonal and the $0$-th diagonal for row $\hat{a}$. If the invocation returns $0$ for irresolvable conflicts, then $\texttt{CheckRow}$ has to undo the movement it employs and return $0$ to the previous recursive call. If the invocation determines no conflict exists in row $\hat{a}$, it proceeds to resolve the next conflict recorded in $Q$. In this way, $\texttt{CheckRow}$ operates as a depth-first search, exhaustively checking all entry movement combinations until it identifies the first conflict-free $U_R$. 

\noindent\textbf{Step 3. (Reconstruct $U_L$)}. Generate $U_L$ according to $U$ and $U_R$ such that $U = U_L U_R$. One may replace $U$ with $U_L$ and update $r_c$ for depth-$2$ ideal decomposition search. 

The correctness of the search algorithm in detecting whether a given permutation $U$ admits any ideal depth-1 decomposition is demonstrated throughout the description above. In the next section, the search algorithm is applied to specific permutations in homomorphic matrix operations, where we provide a visualizable demonstration of this algorithm.

\textcolor{black}{Note that a given permutation may admit none, one, or multiple ideal depth-1 decompositions, and the same holds for each of its possible resulting factors $U_L$. Thus, a straightforward recursive invocation of our search algorithm does not guarantee an ideal decomposition of maximal depth. However, the search algorithm is capable of finding all ideal depth-1 solutions for $U$. This is achieved by slightly adapting the algorithm to treat the final resolved conflict as unresolvable every time, prompting it to backtrack through previous conflicts for alternative solutions. By further applying the search algorithm to the $U_L$s of all identified solutions, we can determine whether $U$ admits a depth-$2$ decomposition. This approach naturally extends to a depth-first search strategy for obtaining $U$'s maximal ideal decomposition depth.}

\section{Full-depth Ideal Decomposability on Specific Permutations}\label{sec: DMP4HMTHMM}
In this section, we demonstrate that certain permutations can be ideally decomposed in full depth as per Definition \ref{def: idealdmp}. 
We first apply the depth-$1$ decomposition searching algorithm from Section \ref{sec: singleround} to the permutation used in Jiang \textit{et al.}'s HMT \cite{jiang2018secure}, through which we discover the full-depth ideal decomposability of this permutation. We further reduce the permutations from Rizomiliotis and Triakosia's HMM \cite{rizomiliotis2022matrix} to the former, illustrating their decomposability. 


\subsection{Homomorphic Matrix Transposition}\label{sec: HMTDmp}

Let $ ct(\mathbf{a}) $ denote a ciphertext of an $ n $-dimensional vector $\mathbf{a}$, where $\mathbf{a}$ is the row-ordering encoding of a $ d \times d $ matrix $ A $ \textcolor{black}{($d^2\leq n$)}. To obtain the ciphertext of the row-ordering vector of $ A^T $, a specific homomorphic permutation (denoted as \textcolor{black}{$n \times n$ matrix} $ U^t $) is applied to $ ct(\mathbf{a}) $.  \textcolor{black}{$ U^t $ possesses $ 2d-1 $ nonzero diagonals defined as:
\begin{equation*}
\begin{aligned}
    U^t\Big(i(d-1), j(d+1)+ik\Big)^*=
    \begin{cases}
        1,&i \in [0,d-1], k = 1;\\
        1,&i \in [-d+1,0), k = -d;\\
        0,&\text{otherwise}.\\
    \end{cases}
\end{aligned}
\end{equation*}
where $j \in [0, d-|i|)$.
}
Currently, the BSGS-based HLT mentioned previously yields the best performance for $U^t$ with an overhead of approximately $ d_1 + 2d_2 $ rotations and keys, where $ d_1d_2 = d $. \textcolor{black}{As noted previously, using the Benes network for decomposition would yield a rotation complexity no less than $ O(4 \log{d^2}) $, which is inefficient within the range $ 0 < d < 2^8 $ for practical HE parameter settings \cite{bossuat2024security}, not to mention the required $ 2 (\log{d^2}-1) $ multiplicative depth.}




\subsubsection{Case with Dimension a Power of $2$}\label{sec: U^t_pow2}

For $d$ a power of $2$, $U^t$ can be precisely decomposed to $U^t_L U^t_R$ when we apply the depth-$1$ decomposition searching algorithm proposed in Section \ref{sec: singleround}. \textcolor{black}{This procedure is illustrated in Figure~\ref{fig: depth-1Dmp}, which depicts the decomposition search of a $U^t$ with size $16\times 16$ ($d=4$).} 

The $r_c$ is set to \textcolor{black}{$(d-1)\cdot \lceil \frac{(d-1)^2/(d-1)}{2} \rceil = 6$} as the indices of non-zero diagonals in $U^t$ have a common difference of $d-1=3$.  Step 1 of the algorithm is presented in Figure \ref{fig: depth-1Dmp_2}. We can observe that $U^t$'s entries in diagonals $k\geq 6$ and $k\leq -6$ are decomposed to the $6$-th and $-6$-th diagonal of $U_R^t$ respectively, while those in diagonal $-6<k<6$ are decomposed to the $0$-th diagonal. This introduces conflicts in rows $6$ and $9$. Figure \ref{fig: depth-1Dmp_3} shows how Step 2 resolves the conflicts, illustrated by the dash paths. For example, let $\texttt{CheckRow}$ begin with row $9$: $\texttt{CheckRow}$ first moves the conflicting entry $U_R(0,9)$ to the $6$-th diagonal as $M_0[9]>0$, and this corresponds to a destination of row $3$. $\texttt{CheckRow}$ then invokes itself to check row $3$. Determining that no conflict exists at row $3$, the invocation proceeds to resolve the conflict at row $6$.

Once we solve all the conflicts in $U^t_R$, we recover $U^t_L$ such that $U^t = U^t_L U^t_R$ (Figure \ref{fig: depth-1Dmp_4}). We can observe that both of these factors exhibit good structural properties.
\begin{figure*}[h]
    \centering
        \subfloat[$16\times 16$ $U^t$ with non-zero entries painted in dark blue.\label{fig: depth-1Dmp_1}]{\includegraphics[width=0.5\columnwidth]{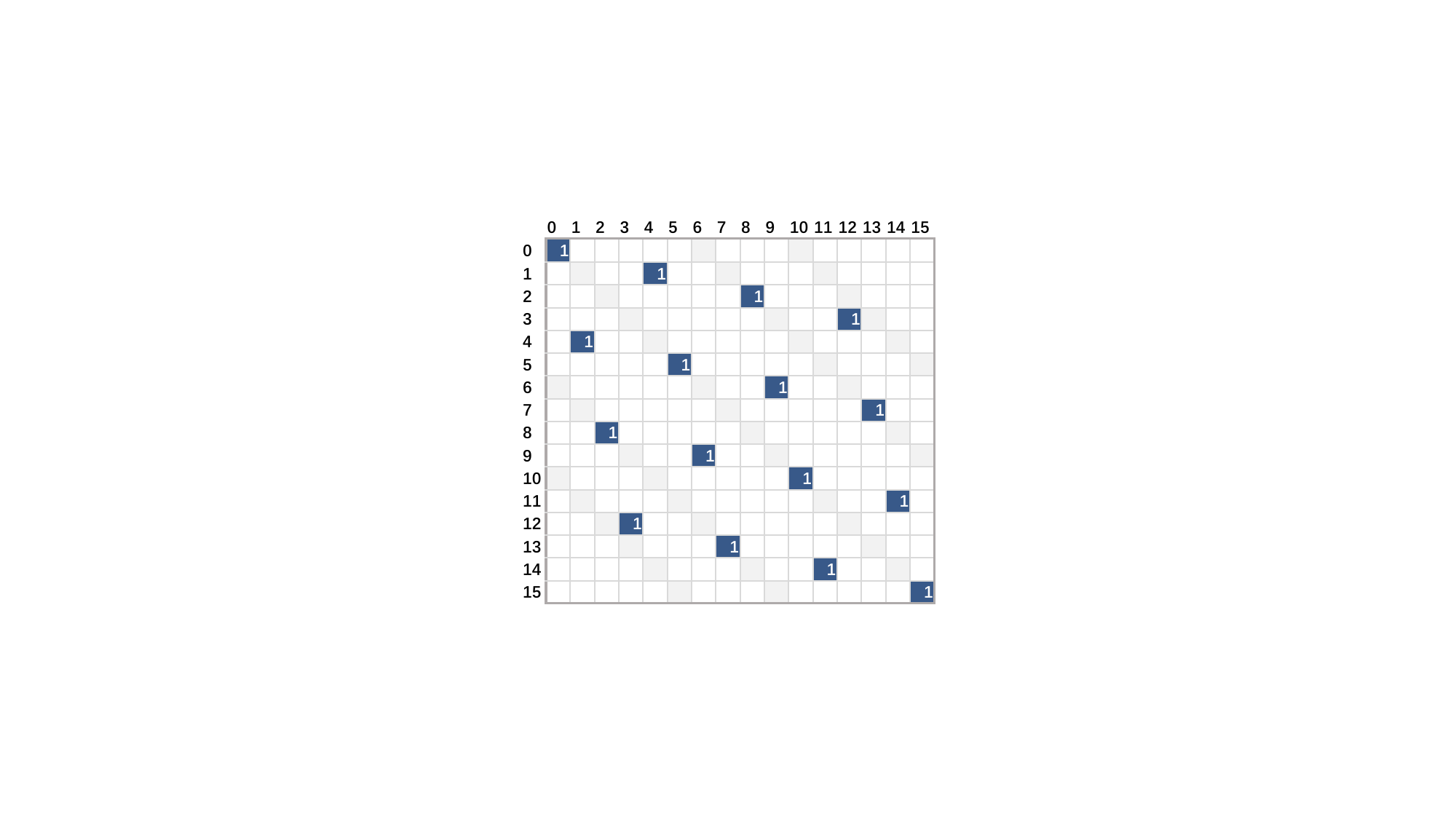}}
        \hfil 
        \subfloat[Step 1. Red entries denote the non-zero entries in $U^t_R$ after Step 1.\label{fig: depth-1Dmp_2}]{\includegraphics[width=0.5\columnwidth]{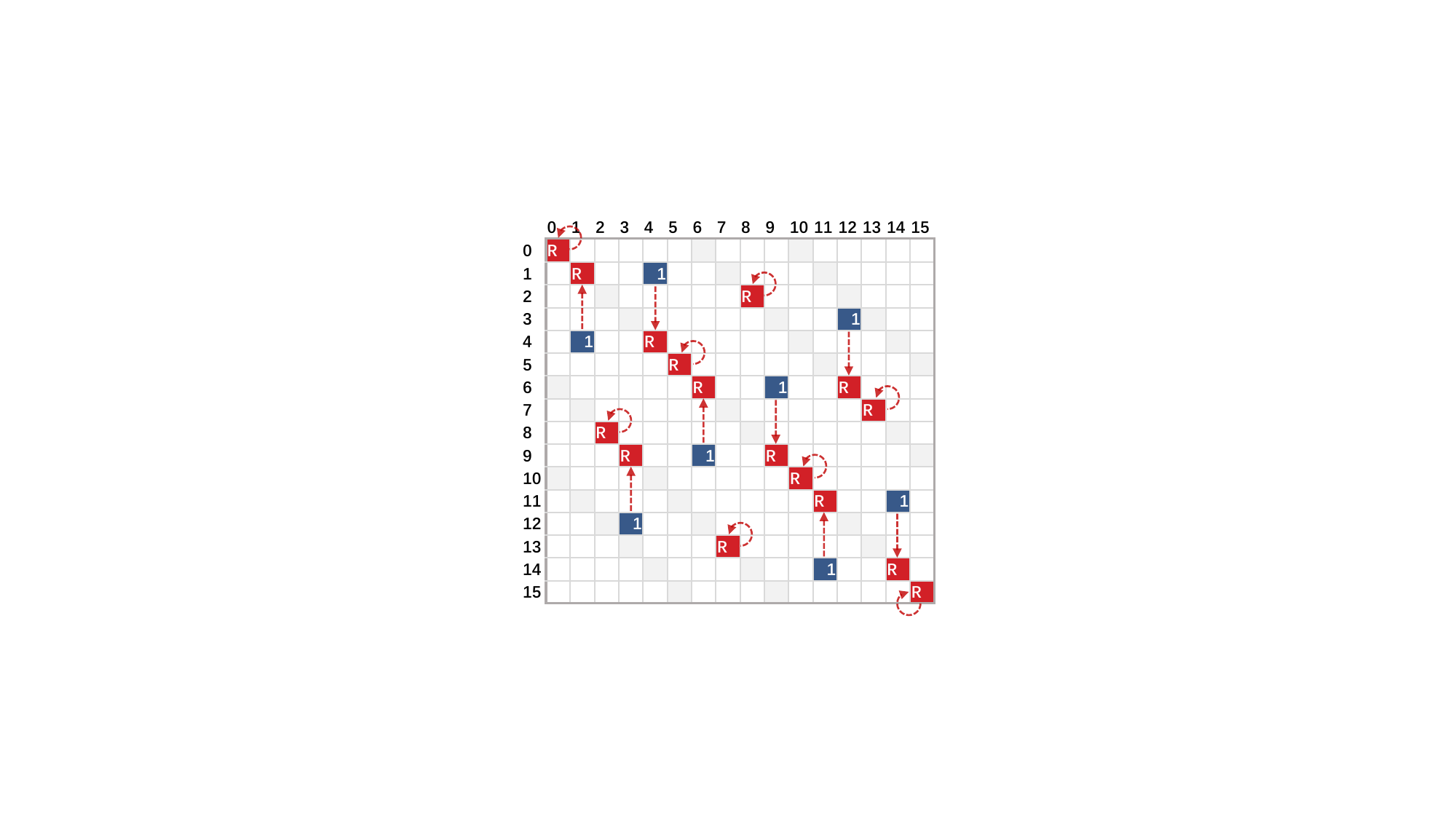}}
        \hfil
        \subfloat[Step 2. $U_R^t$'s final view. Conflicting entries moved away are painted in grey\label{fig: depth-1Dmp_3}]{\includegraphics[width=0.5\columnwidth]{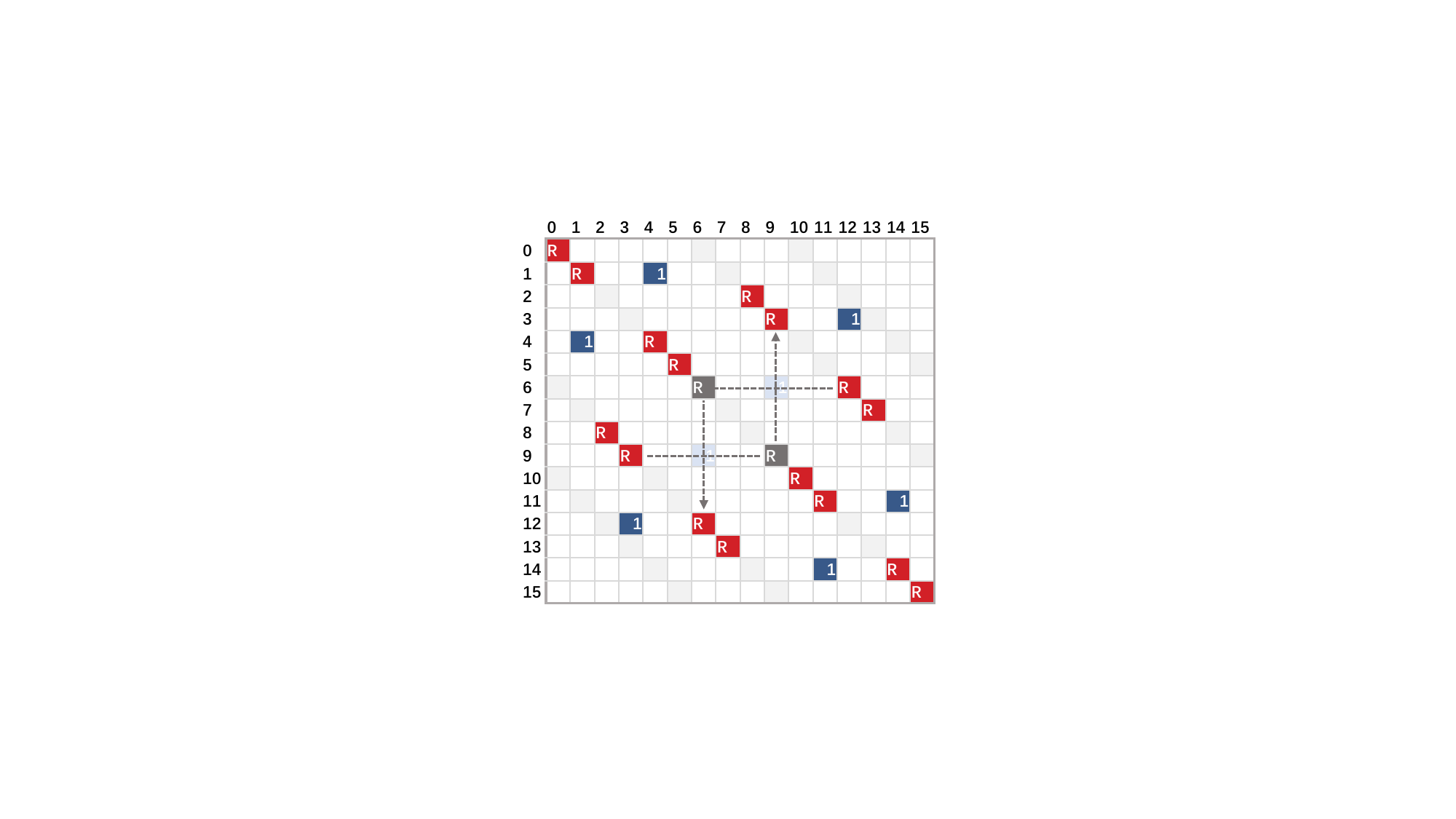}}
        \hfil 
        \subfloat[Step 3. $U^t_L$ reconstructed by $U^\sigma_R$ and $U^t$.\label{fig: depth-1Dmp_4}]{\includegraphics[width=0.5\columnwidth]{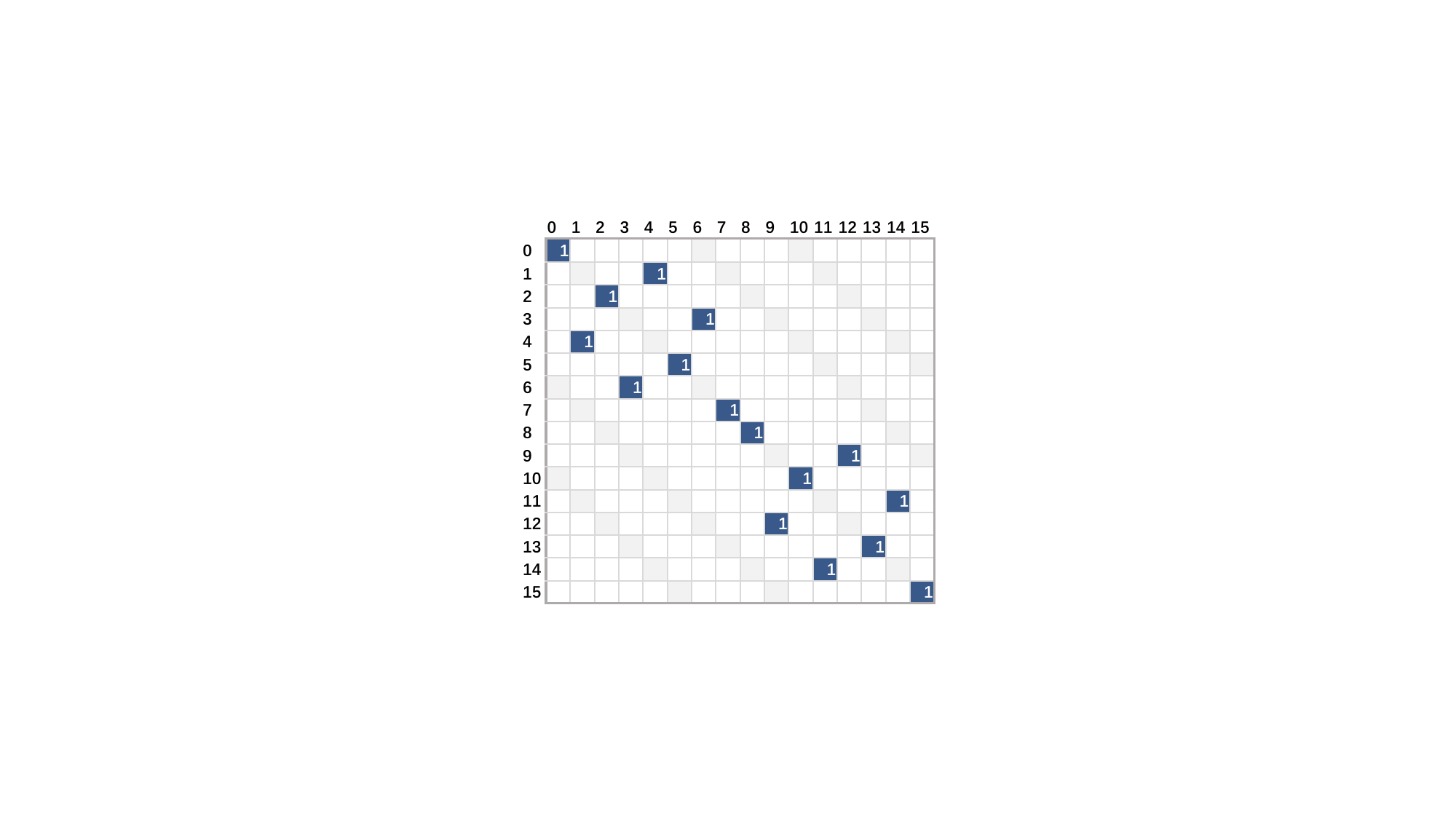}}
    \caption{Applying depth-1 ideal decomposition search to $U^t$ of size $16\times 16$.}
    \label{fig: depth-1Dmp}
\end{figure*}
If we let $A$ be evenly partitioned into $4$ blocks with size $d/2\times d/2$:
\begin{equation} \label{eq: A_org}
    A = 
    \left[
    \begin{matrix}
        {A'_{0,0}} & {A'_{0,1}} \\
        {A'_{1,0}} & {A'_{1,1}} \\
    \end{matrix}
    \right],    
\end{equation}
then the effect of $U^t_R$ and $U^t_L$ can be expressed as follows:
\begin{equation} \label{eq: U^tSingleDmp}
\left[
\begin{matrix}
    {A'_{0,0}} & {A'_{1,0}} \\
    {A'_{0,1}} & {A'_{1,1}} \\
\end{matrix}
\right]
\leftarrow U^t_R(A), 
\left[
\begin{matrix}
    {A'_{0,0}}^T & {A'_{1,0}}^T \\
    {A'_{0,1}}^T & {A'_{1,1}}^T \\
\end{matrix}
\right]
\leftarrow U^t_L\left(U^t_R(A)\right).
\end{equation}
Equation \ref{eq: U^tSingleDmp} indicates that $U^t_R$ swaps the positions of $A'_{0,1}$ and $A'_{1,0}$, while $U^t_L$ performs block-wise transposition. \textcolor{black}{It follows that the equation}  
\begin{equation}\label{eq: U^tMultiDmp}
A^T = U_{L_\ell}^t\left( U_{R_\ell}^t\left(\dots \left(U_{R_1}^t(A) \right) \right) \right)    
\end{equation}
\textcolor{black}{holds for $\ell \leq \lfloor \log{(d-1)} \rfloor$ if the process below defines the permutations in the equation:}
\begin{enumerate}[leftmargin=1.7em]
    \item \textcolor{black}{$U^t_{R_i}$ evenly partitions each block of $A$ into $4$ sub-blocks (with $A$ itself being the initial block). The sub-blocks are interpreted as blocks in the next partition before $U^t_{R_{i+1}}$.}\label{enu: rule1}
    \item $U^t_{R_i}$ swaps the sub-blocks at positions $[0,1]$ and $[1,0]$ in each block of $A$.
    \item $U^t_{L_\ell}$ performs block-wise transposition.
\end{enumerate}

\textcolor{black}{These rules indicate that $U^t_{R_i}$ swaps $d/2^i$-size sub-blocks inside each $d/2^{i-1}$-size block, where its non-zero diagonals are precisely distributed at $\{0, \pm (d-1)d/2^{i}\}$ according to Theorem \ref{thm: 2BS} and Theorem \ref{thm: BS4S}. On the other hand, $U^t_{L_\ell}$ transposes each $d/2^\ell$-size block by swapping their entries at positions $[q,p]$ and $[p,q]$ for $0 \leq p < q < d/2^{\ell}$. These swappings correspond to non-zero diagonals with indices:}
\begin{equation*}
    \begin{split}
         &\{\pm(d(q-p) + p-q) \mid 0 \leq p < q < d/2^{\ell}) \}=\\
         &\{\pm i \cdot (d-1) \mid 0 \leq i < d/2^{\ell}\}.
    \end{split}
\end{equation*}
Now, it is evident that Equation \ref{eq: U^tMultiDmp} with $U^t_{R_i}$ and $U^t_{\ell}$ defined above 
is a depth-$\ell$ ideal decomposition of $U^t$ for $1\leq \ell \leq \lfloor\log{(d-1)}\rfloor$. 
\textcolor{black}{The concrete rotation complexity for evaluating a depth-$\ell$ decomposed $U^t$ is $2\ell + 2d_2 +d_1$. Specifically, the factor $U_{L_\ell}$ has around $2d/2^\ell$ non-zero diagonals symmetrically distributed on both sides of the $0$-th diagonal, and is computed using HLT with $d_1$ BSGS inner loops and $2d_2$ outer loops. Other factors, $U_{R_1},...,U_{R_\ell}$, are each computed via HLT without the BSGS algorithm, contributing $2\ell$ rotations.} 



\begin{theorem}\label{thm: 2BS} 
The permutation matrix swapping any two blocks 
of any 
matrix contains at most three nonzero diagonals.
\end{theorem}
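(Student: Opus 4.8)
The plan is to reduce the statement to a single index computation. First I would fix notation: regard the ambient matrix $A$ as encoded in row-major order, so that its entry $A[i,j]$ occupies position $i\mu+j$ of the plaintext vector, where $\mu$ is the number of columns of $A$ and $n$ is the vector dimension (so the permutation matrix is $n\times n$). A \emph{block} is a contiguous rectangular submatrix; for two blocks to be swappable they must have the same shape $h\times w$ and occupy disjoint sets of positions, say with top-left corners $(r_1,c_1)$ and $(r_2,c_2)$. The swap permutation $p$ then fixes every position outside the two blocks and, for each offset $(a,b)$ with $0\le a<h$ and $0\le b<w$, exchanges position $(r_1+a)\mu+c_1+b$ with position $(r_2+a)\mu+c_2+b$.

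The key observation is that the displacement induced by this swap is independent of the offset $(a,b)$:
\[
\Delta \;:=\; \big((r_2+a)\mu+c_2+b\big)-\big((r_1+a)\mu+c_1+b\big)\;=\;(r_2-r_1)\mu+(c_2-c_1),
\]
a constant. Writing the permutation matrix $U$ with the paper's convention, so that $U[i,j]=1$ exactly when the swap moves the value at position $i$ to position $j$, the $1$ at $(i,j)$ sits on the diagonal of index $j-i \bmod n$. Hence every $1$ recording the move of a block-$1$ entry into its block-$2$ slot lies on the single diagonal of index $\Delta \bmod n$; by the symmetric computation every $1$ recording the reverse move lies on the single diagonal of index $-\Delta \bmod n$; and each position fixed by the swap contributes a $1$ on the $0$-th diagonal. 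Therefore all non-zero entries of $U$ are confined to the diagonal indices $\{0,\ \Delta,\ -\Delta\} \bmod n$, a set of size at most three (size two when $2\Delta\equiv 0$, size one when $\Delta\equiv 0$), which is exactly the claim.

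I do not expect a genuine obstacle here; the argument is a direct calculation once ``block'' is pinned down. The only points needing care are (i) insisting that the two swapped blocks be equal-shaped and position-disjoint, so that the shape-preserving exchange used above is a well-defined involution and $\Delta$ is well-defined, and (ii) handling the reductions modulo $n$, which may collapse the three candidate diagonals onto fewer than three in degenerate cases but can never create more. For the applications in this section one simply instantiates $\mu=d$ and $h=w=d/2^{i}$ for the recursive partition of $A$, which gives $\Delta=\pm(d-1)d/2^{i}$ and hence the exact diagonal positions quoted for $U^t_{R_i}$ around Equation~\ref{eq: U^tSingleDmp}; that is substitution rather than a separate argument.
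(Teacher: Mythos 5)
Your proposal is correct and follows essentially the same route as the paper: both compute the constant displacement between corresponding entries of the two equal-shaped blocks (your $\Delta=(r_2-r_1)\mu+(c_2-c_1)$ is the paper's $r=d(u-p)+(v-q)$), giving one diagonal per direction of the swap plus the $0$-th diagonal for the fixed positions, hence at most three non-zero diagonals.
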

\begin{proof}
Given $ A $ is of size \textcolor{black}{$c \times d$}, and the two same size blocks to be swapped are $ A'_{i,j}$ and $A'_{k,l}$. If $ A'_{i,j}[0,0] = A[p,q] $ and $ A'_{k,l}[0,0] = A[u,v] $, for any two entries $A'_{i,j}[x,y]$ and $A'_{k,l}[x,y]$, we have:
\begin{equation}
    \begin{split}
        A'_{i,j}[x,y] = A[p+x, q+y], \\
        A'_{k,l}[x,y] = A[u+x, v+y].      
    \end{split}
\end{equation}
Let $ r = (u+x)d + (v+y) - (p+x)d - (q+y) = d(u-p) + (v-q) $. Moving $A'_{i,j}$ to the position of $A'_{k,l}$ requires a rotation by step $-r$, while moving $A'_{k,l}$ to the position of $A'_{i,j}$ requires rotation by step $r$. Entries outside the blocks maintain their original positions requiring rotation by step $0$. Furthermore, if either $A'_{i,j}$ or $A'_{k,l}$ is entirely zero, the corresponding rotation for that block can be omitted. Thus, the swapping involves at most three non-zero diagonals in its matrix representation.
\end{proof}

\begin{theorem}\label{thm: BS4S} 
The permutations that perform the same operation on different blocks of the same size within a matrix have identical non-zero diagonal distributions.
\end{theorem}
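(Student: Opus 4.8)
The plan is to isolate the one fact that the proof of Theorem~\ref{thm: 2BS} already exploited implicitly: the diagonal index attached to a moved entry depends only on the \emph{relative} positions of its source and target cells inside the block, and not at all on where that block sits in the ambient matrix. The first step is therefore to fix coordinates. Flattening a $c\times d$ matrix $A$ in row-ordering, the cell $A[i,j]$ occupies vector index $id+j$; a block $B$ of size $s\times t$ with top-left corner $A[p,q]$ then satisfies $B[x,y]=A[p+x,q+y]$, which lives at vector index $(pd+q)+(xd+y)$. An ``operation'' on an $s\times t$ block is exactly a permutation $\pi$ of the index grid $\{0,\dots,s-1\}\times\{0,\dots,t-1\}$, and performing it on $B$ sends the content of $B[x,y]$ to $B[\pi(x,y)]$ while fixing every vector position outside $B$.

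The second step is the cancellation. Writing $\delta(x,y)=xd+y$, the entry originally at $B[x,y]$ moves from index $(pd+q)+\delta(x,y)$ to index $(pd+q)+\delta(\pi(x,y))$, so the diagonal index it contributes is $\delta(\pi(x,y))-\delta(x,y)\bmod n$, where $n$ is the vector length and the base offset $pd+q$ has disappeared. Repeating this for a second same-size block $B'$ located at $A[p',q']$ gives the identical value $\delta(\pi(x,y))-\delta(x,y)\bmod n$ for the cell at relative position $(x,y)$, while the untouched positions contribute the $0$-th diagonal in both cases. Hence the two permutation matrices have exactly the same set of non-zero diagonals --- indeed the same multiset --- which is the claim; together with Theorem~\ref{thm: 2BS} this is precisely what justifies the later statement that $U^t_{R_i}$ has non-zero diagonals only at $\{0,\pm(d-1)d/2^i\}$.

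I do not anticipate a real obstacle, only a few bookkeeping points to get right. One should state explicitly that ``the same operation'' means the same abstract permutation $\pi$ of the $s\times t$ grid (so that block transposition implicitly requires $s=t$), and fix a sign/orientation convention for the diagonal index consistent with Theorem~\ref{thm: 2BS}; since $\delta(\pi(x,y))-\delta(x,y)$ is formed before reduction and is visibly independent of $(p,q)$, the reduced value inherits this independence whichever convention is used. A last minor point worth a sentence is that when the chosen blocks do not tile all of $A$, the vector positions left fixed by both operations still coincide, so the behaviour of the $0$-th diagonal is identical across the two permutations as well.
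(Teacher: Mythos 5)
Your proposal is correct and follows essentially the same route as the paper's proof: you flatten positions under row-major encoding and observe that the block's base offset cancels when taking the difference of source and target indices, so each entry's diagonal index depends only on its relative position within the block. Your extra bookkeeping (the abstract permutation $\pi$ of the grid and the remark about fixed positions landing on the $0$-th diagonal) just makes explicit what the paper leaves implicit.
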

\begin{proof}
Given $ A $ is of size $c \times d$, let $ A' $ be any submatrix of $ A $ with $A'[0,0]=A[s,t]$. For any two entries $ A'[p,q] $ and $ A'[u,v] $, we have:
\begin{equation}
    \begin{split}
        A'[p,q] = A[s+p,t+q], \\
        A'[u,v] = A[s+u,t+v].
    \end{split}
\end{equation}
To move $ A'[p,q] $ to the position of $ A'[u,v] $, it requires a rotation of $ d(s+u)+(t+v) - d(s+p)-(t+q)= d(p-u)+(q-v) $ steps. It is evident that the number of different rotations for movements within the block is independent of the block's position within $ A $. 
\end{proof}

\subsubsection{Case with Arbitrary Dimension} \label{sec: U^t_arb}
We observe that when the search algorithm is recursively applied to $U^t$ with arbitrary dimension $d$, it always terminates after $k$ recursion, where $k$ is the largest power of $2$ dividing $d$. This aligns with the decomposition rule for power-of-2 cases we proved above: $U^t_{R_{k+1}}$ is not defined as the blocks in $A$ cannot be evenly partitioned into $4$ sub-blocks at that point. 


The simplest solution is to pad \( A \) with zeros, increasing its dimensions to the nearest power of 2, \( 2^i \). 
\textcolor{black}{Then the corresponding $U^t$ can undergo an ideal decomposition of depth $\lceil \log{d} \rceil-1$.} 
\textcolor{black}{This approach is practical unless $ (2^{i})^2$ exceeds the maximum encoding length \( n \) of the ciphertext.} 
\textcolor{black}{We propose an alternative method to avoid this drawback.} 
\textcolor{black}{The idea is to define a new method for $U^t_{R_i}$ to partition and swap odd-dimensional blocks.} 
Specifically, for any $d\times d$ block $A$ with $ d = 2 \cdot d' + 1 $ for some $ 0 \leq d' $, let:
\begin{equation}\label{eq: A_divide0}
\begin{array}{ll}
B'_{0,0} = 
\begin{bmatrix}
A[i,j]
\end{bmatrix}_{0 \leq i \leq d'}^{0 \leq j \leq d'}
&
B'_{0,1} =
\begin{bmatrix}
A[i,j]
\end{bmatrix}_{0 \leq i \leq d'-1}^{d'+1 \leq j \leq 2d'}
\\
\\
B'_{1,0} =
\begin{bmatrix}
A[i,j]
\end{bmatrix}_{d'+1 \leq i \leq 2d'}^{0 \leq j \leq d'-1}
&
B'_{1,1} =
\begin{bmatrix}
A[i,j]
\end{bmatrix}_{d' \leq i \leq 2d'}^{d' \leq j \leq 2d'}
\end{array}.
\end{equation}
The partition of $A$ into unequal blocks is given as:
\begin{equation}\label{eq: A_divide}
A=
\begin{bmatrix}
B'_{0,0} & B'_{0,1} \\
B'_{1,0} & B'_{1,1}
\end{bmatrix}
,
\end{equation}
\textcolor{black}{with an overlapping part} $A[d',d']$ between $B'_{0,0}$ and $B'_{1,1}$. 
In any partitioning round, \textcolor{black}{the blocks have at most} two dimensions differing by at most 1 (proof given in Appendix \ref{thm: Only2Sizes}). \textcolor{black}{In this case, the non-zero diagonals in the block-wise permutation are the union of those required for the permutations on blocks of different sizes.} \textcolor{black}{Thus, for $1 < i \leq \ell$, the maximum number of non-zero diagonals in each $U^t_{R_i}$ increases from 3 to 5. The number of non-zero diagonals of $U^t_{L_\ell}$ remains $O(d/2^{\ell-1})$ for $\ell \leq \lceil \log{d} \rceil -1$. Such a decomposition closely approximates the ideal form and avoids the possible increment of $n$ from the zero-padding strategy.} 

\subsection{Homomorphic Matrix Multiplication}\label{sec: HMMDmp}

It is intriguing that the permutations in Rizomiliotis and Triakosia's HMM algorithm \cite{rizomiliotis2022matrix} have decomposability similar to $U^t$. We can achieve an asymptotically faster HMM by applying decompositions to these permutations. Furthermore, we present methods to reduce the multiplication depth incurred in the decomposition to a constant level and to enhance the utilization of the encoding space.


Let $A$ and $B$ be the $d\times d$ matrices to be multiplied, and let the plaintext space be a vector space of dimension $n=d^3$. The method proposed by Rizomiliotis and Triakosia first appends $(d-1)$ zero matrices of size $d \times d$ to both the end of $A$ and $B$ 
to form $\hat{A}$ and $\hat{B}$ of size $d^2\times d$, where each of them can be encrypted by a ciphertext. 
Two permutations $U^\gamma$ and $U^\xi$ are then applied to $\hat{A}$ and $\hat{B}$ respectively. Next, column-wise replication is performed on $U^\gamma(\hat{A})$ and row-wise replication on $U^\xi(\hat{B})$. Denote the replicated results as $\tilde{A}$ and $\tilde{B}$. Finally, compute $\hat{C} = \tilde{A} \odot \tilde{B}$, and perform $\hat{C} \leftarrow \texttt{Rot}(\hat{C}, d^3 / 2^i) + \hat{C}$ for $1 \leq i \leq \log{d}$. This yields the desired result $C = AB$ in the first $d \times d$ block of $\hat{C}$.

Compared to previous HMM algorithms \cite{halevi2018faster,jiang2018secure}, Rizomiliotis and Triakosia's approach is characterized by its novel use of a $d^3$ redundant encoding space, which minimizes the number of ciphertext multiplications and rotations.
Its complexity is considered to be dominated by the $ U^\gamma $ and $ U^\xi $ permutations, which necessitate $ 2d $ ciphertext rotations.



\subsubsection{Decomposability of $U^\gamma$ and $U^\xi$}


We observe that $U^\gamma$ can be fully regarded as a transposition acting on a $d\times d$ matrix where each entry is a $d\times 1$ vector:
\begin{equation}
\begin{split}
& \hat{A} = \left[
        \begin{matrix}
            \mathbf{a}_{0}       & \cdots    & \mathbf{a}_{d-1}     \\
            \mathbf{0}_c         & \cdots    & \mathbf{0}_c            \\
            \vdots               &   \ddots  & \vdots       \\
            \mathbf{0}_c         & \cdots    & \mathbf{0}_c            \\
        \end{matrix}
    \right],
U^\gamma(\hat{A}) = \left[
            \begin{matrix}
                \mathbf{a}_{0}     & \mathbf{0}_c         & \cdots     & \mathbf{0}_c         \\
                \vdots     & \vdots    &     \ddots       & \vdots    \\ 
                \mathbf{a}_{d-1}   & \mathbf{0}_c         & \cdots     & \mathbf{0}_c         \\
            \end{matrix}
        \right], \\ 
& \mathbf{a}_i=
    \left[
        \begin{matrix}
            A[0,i] & \cdots & A[d-1,i]
        \end{matrix}
    \right]^T, \text{for} \phantom{a} 0\leq i <d.
\end{split}
\end{equation}
\textcolor{black}{The rule from the previous section can be applied to this transposition. Thus, the decomposition of $U^\gamma$ is similar to $U^t$.} Specifically, when $d$ is a power of two, $U^\gamma$ has a depth-$\ell$ ideal decomposition for $\ell \leq \lfloor\log{(d-1)}\rfloor$. When $d$ is set arbitrarily, $U^\gamma$ has an approximate depth-$\ell$ ideal decomposition for $\ell \leq \lfloor \log{d} \rceil - 1$. 

Furthermore, we can modify the decomposition of $U^\gamma$ to achieve constant multiplication depth. We fill the non-zero diagonals of $U^\gamma$ with $1$ and denote the result as $U^\gamma_{pad}$. At this point, $U^\gamma_{pad}(\hat{A})$ is the sum of the rotated $\hat{A}$ by steps of $0, -(d^2-1), \dots, -(d^2-1)(d-1)$ without any masking. It can be observed that  $U^\gamma_{pad}(\hat{A})$ has the same first column as $U^\gamma(\hat{A})$. This implies that we can get $U^\gamma(\hat{A})$ by masking $U^\gamma_{pad}(\hat{A})$ only once.
The decomposition of $U^\gamma_{pad}$ can be directly obtained by \textcolor{black}{filling non-zero diagonals of the factors in $U^\gamma$'s ideal decomposition with $1$, yielding a constant multiplication depth of $1$ (concrete construction provided in Appendix \ref{app: 1-pad}).} 

Similarly, $U^\xi$ can be interpreted as a transposition acting on a $d\times d$ matrix where each basic unit is a $1\times d$ vector:
\begin{equation}
\begin{split}
        & \hat{B} = 
        \left[
            \begin{matrix}
                \mathbf{b}_0      & \cdots    & \mathbf{b}_{d-1}     \\
                \mathbf{0}_r      & \cdots    & \mathbf{0}_r            \\
                \vdots            & \ddots    & \vdots       \\
                \mathbf{0}_r      & \cdots    & \mathbf{0}_r            \\
            \end{matrix}
        \right], 
        U^\xi(\hat{B}) =  
        \left[
            \begin{matrix}
                \mathbf{b}_0     & \mathbf{0}_r         & \cdots     & \mathbf{0}_r         \\
                \vdots           & \vdots               &  \ddots    & \vdots    \\ 
                \mathbf{b}_{d-1} & \mathbf{0}_r         & \cdots     & \mathbf{0}_r         \\
            \end{matrix}
        \right], \\
        & \mathbf{b}_i=
        \left[
            \begin{matrix}
                B[i,0] & \cdots & B[i,d-1]
            \end{matrix}
        \right], \text{for} \phantom{a} 0\leq i <d.
\end{split}
\end{equation}
Thus, $U^\xi$ can be decomposed following the decomposition method of $U^t$, and its non-zero diagonals can also be padded with $1$ to achieve a constant multiplication depth.


\subsubsection{Flexible Utilization of the Encoding Space} \label{sec: HMMDmp_space}

The original algorithm by Rizomiliotis and Triakosia necessitates a plaintext encoding space of $d^3$ to perform a $d \times d$ matrix multiplication, resulting in a fast-growing computational overhead as $d$ increases. Here, we propose a strategy to enable configurable encoding space. 
In essence, we split $U^\gamma(\hat{A})$ and $U^\xi(\hat{B})$ each into $d/d'$ \textcolor{black}{blocks (each size is $d' \times d'$ vectors)}, reducing the encoding space required for representing $A$ and $B$ to $d^2 \cdot d'$. 
Let $ d' $ be a divisor of $ d $, $ \hat{A} $ and $ \hat{B} $ are both interpreted as the concatenation of  $ d/d' $ blocks as follows:
\begin{equation}
\begin{aligned}
\hat{G} &= 
\left[
\begin{array}{c|ccc|c}
\cdots & \mathbf{g}_{d'k}    & \cdots & \mathbf{g}_{d'(k+1)-1} & \cdots \\
\cdots & \mathbf{0}      & \cdots & \mathbf{0}          & \cdots \\
\cdots & \vdots          & \ddots & \vdots              & \cdots \\
\cdots & \mathbf{0}      & \cdots & \mathbf{0}          & \cdots \\
\end{array}
\right]
\\
&=
\left[
\begin{matrix}
\hat{G}'_0 & \hat{G}'_1 & \cdots & \hat{G}'_{\frac{d}{d'}-1}
\end{matrix}
\right], {(0 \leq k < \frac{d}{d'})}
\end{aligned}
\end{equation}
where $\hat{G}=\hat{A},\mathbf{g}_i=\mathbf{a}_i,\mathbf{0} = \mathbf{0}_c$ (or $\hat{G}=\hat{B},\mathbf{g}_i=\mathbf{b}_i,\mathbf{0}=\mathbf{0}_r $). To complete an HMM, we apply block-wise $U^\gamma$ and $U^\xi$ to each $\hat{A}'_i$ and $\hat{B}'_i$ respectively, 
yielding: $U^{\gamma'}(\hat{A})=[U^{\gamma}(\hat{A}'_0) \ldots U^{\gamma}(\hat{A}'_{d/d'-1})]$ and $U^{\xi'}(\hat{B})=[U^{\xi}(\hat{B}'_0) \ldots U^{\xi}(\hat{B}'_{d/d'-1})]$. This costs $2\log{d'}$ rotations in total (Theorem \ref{thm: BS4S}). 
\textcolor{black}{For each $ U^{\gamma}(\hat{A}'_i) $ (or $ U^{\xi}(\hat{B}'_i) $), we use masking to extract it then replicate its first colum to the entire encoding space individually,}
yielding the output $ \tilde{A}_i $ (or $ \tilde{B}_i $). \textcolor{black}{These $2\frac{d}{d'}$ replications cost $2\frac{d}{d'}\log{d}$ rotations in total.}
Finally, we compute $ \hat{C} = \sum_{i=0}^{\frac{d}{d'}-1} \tilde{A}_i \odot \tilde{B}_i $, followed by a summation over $ \log d' $ rotations on $ \hat{C} $ to output the desired result. The total number of rotations is $ 3\log d' + 2\frac{d}{d'}\log d $. 
In addition, a ciphertext may store more matrices as the encoding space requirement decreases. For $m$ pairs of $d \times d$ matrices $A_i$ and $B_i$ ($0 \leq i < m$) stored in $\hat{\mathcal{A}}=[\hat{A}_0, \ldots, \hat{A}_{m-1}]^T$, the HMM between $\hat{\mathcal{A}}$ and $\hat{\mathcal{B}}$ yields 
$\hat{\mathcal{C}}=[\hat{C}_0, \ldots, \hat{C}_{m-1}]^T$, where each $\hat{C}_i$ contains $C_i = A_i B_i$. Such a parallelizable setup gives an amortized rotation complexity of $\left(\frac{3}{m}\log d' + 2\frac{d}{d'm}\log d\right)$. Further trade-offs are discussed in Appendix \ref{app: replication} for reducing the complexity component $(\frac{d}{d'}\log{d})$ of replications to $\left(\frac{d}{d_0}+\frac{d}{d'}\log{d_0}\right)$ for $1\leq d_0\leq d$.

\begin{table}
\ra{0.8}
\setlength{\tabcolsep}{2.0pt}
\begin{threeparttable}
\begin{tabular*}{\columnwidth}{@{}llcccccccccc@{}}
\toprule
Scheme & \texttt{Rot} & \texttt{Mult} & Depth & Space & Parallelity\tnote{\textdagger} \\ 
\midrule
\cite{jiang2018secure}  & $3d+5\sqrt{d}$ & $d$ & $3$ & $d^2$ & $\frac{n}{d^2}$ \\
\cite{rizomiliotis2022matrix} & $2d+3\log d$ & $1$ & $2$ & $d^3$ & -\\
\cite{huang2023secure} & $d\log{d}+d$ & $d$ & $2$ & $d^2$ & -\\
\cite{zhu2023secure} & $4d$ & $2d$ & $2$ & $d^2$ & -\\
Ours & \textcolor{black}{$3 \log{d'} + 2\frac{d}{d'}\log{d}$} & $\frac{d}{d'}$ & 2 & $d^2d'$ & $\frac{n}{d^2d'}$ \\ 
\bottomrule
\end{tabular*}
\begin{tablenotes}
    \item[\textdagger] Parallelity is not explicitly discussed in some schemes.  
\end{tablenotes}
\end{threeparttable}
\caption{Complexity of HMMs under row/column-major encoding}\label{tb: HMMcomplexity}
\end{table}

\begin{figure*}
    \centering
    \includegraphics[width=1.0\textwidth]{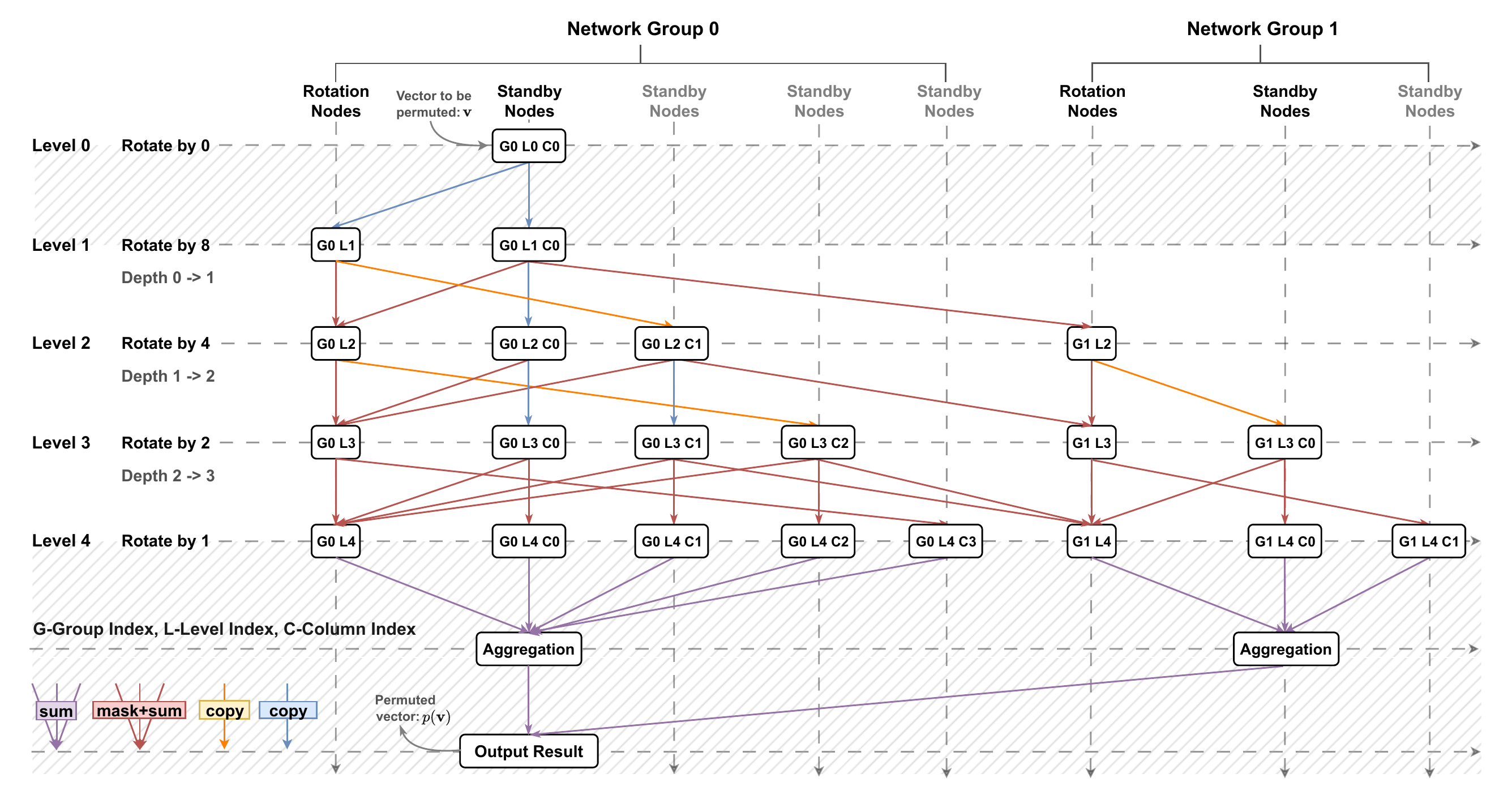}
    \caption{\textcolor{black}{A multi-group network instance constructed for a length-$16$ permutation}}
    \label{fig: network}
\end{figure*}

\subsubsection{Complexity Comparison}
Table \ref{tb: HMMcomplexity} presents existing HMMs under row (or column)-major encoding, where matrices are mapped to the plaintext (vector) space of the batch-encoding HE schemes in a row-wise or column-wise unfolded manner. 
Jiang \textit{et al.}'s scheme \cite{jiang2018secure} achieves optimal complexity when considering amortized runtime. Rizomiliotis and Triakosia \cite{rizomiliotis2022matrix} achieve a reduced number of $\texttt{Mult}$s and $\texttt{Rot}$s with a $d$-fold encoding space. However, the increased encoding space may lead to expanded ciphertexts with higher latency per basic operation. Huang \textit{et al.} \cite{huang2023secure} and Zhu \textit{et al.} \cite{zhu2023secure} design HMMs for rectangular matrices but with modest performance for square matrices. They also require a hypercube packing method supported by BGV-typed HE schemes \cite{gentry2012fully,brakerski2014leveled}, lacking compatibility with the one-dimensional encoding space. Our method does not require a hypercube structure, so it is compatible to all batch-encoding HE schemes. We also offer configurability in terms of complexity, encoding space, and rotation keys. Notably, we only need $5\log{n}$ $\texttt{Rot}$s for the maximum redundant encoding space: \textcolor{black}{$(d'=d)$. Further incorporating the proposed replication optimization yields a \texttt{Rot} complexity of $\left(\frac{4d}{d_0}-2d' + \frac{2d\log d_0}{d'} + 3\log d'\right)$ for $d'>1$, and $\left(\frac{3d}{d_0}+\frac{2d\log d_0}{d'} + 3\log d'\right)$ under the minimum encoding space ($d'=1$), which reduces to $3d$ $\texttt{Rot}$s for $d_0=1$.} 




At the end of this section, it is worth noting that the HMT and HMM optimized above do not represent the entirety of the depth-1 ideal decomposition search algorithm. Due to space limitations, more applications of the search algorithm are presented in Appendix C, 
where we use it to find the full-depth decomposability of another two important permutations used in Jiang \textit{et al.}'s HMM \cite{jiang2018secure}.  
\section{New Design for Arbitrary Permutation Decomposition}\label{sec: HVP}

\textcolor{black}{The idea of ideal decomposition aims to establish an upper bound for decomposition techniques. As previously discussed, this bound can indeed be achieved for specific permutations. However, not all permutations can be ideally decomposed. For example, the search algorithm struggles to find solutions for randomly generated permutations.}

The Benes network-based decomposition \cite{gentry2012fully} combined with its optimal level-collapsing strategy proposed by Halevi and Shoup \cite{halevi2014algorithms} may form a decomposition chain with multiplication depth $\log{n} - 1$ for any given permutation with length $n$, where each factor contains approximately 6 to 7 non-zero diagonals on average. \textcolor{black}{Although this can be considered as a rough approximation to full-depth ideal decompositions for permutations whose non-zero diagonals fill up the whole permutation matrix,} we aim to present a new solution outperforming such an approximation in certain scenarios.

\subsection{Overall Design}
Our design conceptualizes permutations as a directed graph network (Figure \ref{fig: network}). It consists of four major components:
\begin{itemize}[leftmargin=1.1em]
    \item \textbf{Nodes.} The nodes in the network are categorized into two types: rotation nodes and standby nodes, both taking vectors as input. Rotation nodes rotate the input vector by certain steps, while standby nodes output the input vector unchanged. 
    \item \textbf{Edges.} An edge applies a mask to the output of a node and transmits it to another node. When multiple edges converge at the same node, the vectors they carry are summed together to form the input to that node. 
    \item \textbf{Levels.} Levels correspond to the sequence in which nodes are processed. Nodes at the same level are handled concurrently. 
    \item \textbf{Groups.} The network is vertically divided into several groups, each containing a column of rotation nodes and multiple columns of standby nodes. Nodes are processed first by the order of levels, and then by groups. The network's final output is the sum of the bottom nodes' output of all groups.
\end{itemize}

\textcolor{black}{Performing a homomorphic permutation via this network structure proceeds as follows. Given a ciphertext holding the vector to be permuted, the ciphertext first enters the network at the standby node on level $0$. From there, each outgoing edge of that standby node multiplies the ciphertext by its associated mask and forwards the result to the edge’s destination node. If the destination is a standby node, it continues this process: masking and forwarding the ciphertext along its outgoing edges. If the destination is a rotation node, it first performs a rotation on the received ciphertext before passing it along via its outgoing edge. Consequently, all paths lead to the same destination at the bottom of the network, yielding the permuted ciphertext.}  

\textcolor{black}{In the following, we focus on the network construction. Given a permutation $p$ and any input vector $\mathbf{v}$, $p$ rearranges the entries of $\mathbf{v}$ to produce a new vector $\mathbf{v}' \leftarrow p(\mathbf{v})$. During the construction of the network for $p$, we let $\mathbf{v}$ enter the network (which initially consists of a single node) and continue to propagate through the network as it is incrementally built level by level. Throughout this process, we implicitly track the \textit{remaining rotation distance} for each entry of $\mathbf{v}$ as it is carried from one node to another. This distance is initialized as $r_{org}\leftarrow (j-i)\bmod{n}$, where $i$ is the entry's original position in $\mathbf{v}$, and $j$ its target position in $p(\mathbf{v})$.}

Initially, the network contains a single standby node at level $0$ and group $0$, which stores all entries of the input vector. Each entry is marked with an “unsolved” tag, indicating that it has not yet been rotated to its target position. The network is then constructed group by group. For $ i_g \geq 0 $, the $ i_g $-th group is constructed as follows.

A group starts from the minimum level containing unsolved entries. For level $\ell \geq 0$, we obtain the maximum remaining rotation distance $rot_{max}$ among all unsolved entries at this level. We set $rot$, the rotation step for the rotation node at level $\ell+1$, to be the largest power of two \textcolor{black}{no greater} than $rot_{max}$. Then we arrange each unsolved node's destination in the next level in 3 steps:

\noindent \textbf{Step 1.} We retrieve the entry’s remaining rotation distance $ r_{rem} $, original total rotation distance $ r_{org} $, and initial position $ i $ in the original vector. These values determine the entry's position in the next level's node: $i_{dst} = i+(r_{org}-r_{rem})\bmod{n}$.

\noindent \textbf{Step 2.} If $ r_{rem} \geq rot $, we assign the entry to a rotation node. So we check whether the $ i_{dst} $-th position of the rotation node at level $ \ell + 1 $ in the current group is occupied (by this point, if the rotation node has not been created, create it): 
\begin{itemize}[leftmargin=1.1em]
    \item If not occupied, we let the entry occupy this position. For the edge connecting the entry's current node and the rotation node, we set its mask's $i_{dst}$-th position to $1$. The entry's $r_{rem}$ is updated as $r_{rem}-rot$.   
    \item Otherwise, we change the entry's tag from "unsolved" to "deferred", indicating that it will go to level $\ell+1$ in a future group. 
\end{itemize}

\noindent \textbf{Step 3.} If $r_{rem}<rot$, we assign the entry to a standby node at level $\ell+1$. Based on the entry's current node, different decisions are made:
\begin{itemize}[leftmargin=1.1em]
    \item If it is a rotation node, we assign the entry to the $m+1$-th standby node where $m$ is the number of standby nodes at level $\ell$. For the edge connecting these two nodes, its mask's $i_{dst}$-th position is set to $1$.    
    \item If it is the $k$-th standby node at level $\ell$, the entry is assigned to the $k$-th standby node at level $\ell+1$. For the edge connecting these two nodes, its mask's $i_{dst}$-th position is set to $1$.
\end{itemize}
After traversing all unsolved entries, we proceed to the next level by setting $\ell\leftarrow \ell+1$ if \textcolor{black}{there exists any $r^{(i)}_{rem} > 0$ for $0 \leq i < n$, where $r^{(i)}_{rem}$ denotes the remaining rotation distance of the unsolved entry indexed $i$ in the original $\mathbf{v}$.} 
Otherwise, the $i_g$-th group's construction is completed. We change the tag of entries marked as  “unsolved” to “solved” and those marked as “deferred” to “unsolved”, and proceed to the $ i_g + 1 $-th group. \textcolor{black}{Appendix \ref{app: NWcorrect} further demonstrates the correctness of our network construction in detail.}

\begin{table*}
\ra{0.7}
\centering
\caption{Average number of rotations in different network constructions ($\log{n}$ keys)}
\begin{tabular*}{\textwidth}{@{\extracolsep{\fill}}llclc}
\toprule
Scheme & Operation & $n$ & Number of operations at network level $1\rightarrow \log{n}$ & \textcolor{black}{\#ScalarMult} \\
\midrule
Ours & Rotation & $2^{10}$ &  $\{ 1.0, 2.0, 3.3, 3.7, 4.0, 4.1, 4.3, 4.3, 4.0, 3.8\}$ & \textcolor{black}{$2.484\times 10^9$} \\ 
& (merged \texttt{Rescale})& $2^{11}$ &    $\{ 1.0, 2.0, 3.4, 3.9, 4.2, 4.3, 4.3, 4.4, 4.2, 4.1, 4.0 \}$ & \textcolor{black}{$2.814\times 10^9$}  \\ 
& & $2^{12}$ &    $\{ 1.0, 2.0, 3.6, 4.1, 4.2, 4.4, 4.6, 4.8, 4.9, 4.8, 4.7, 4.2 \}$ & \textcolor{black}{$3.101\times 10^9$} \\ 
& & $2^{13}$ &    $\{ 1.0, 2.0, 3.8, 4.4, 4.7, 5.0, 5.1, 5.2, 5.2, 5.2, 5.1, 5.0, 4.7       \}$ & \textcolor{black}{$3.489\times 10^9$} \\ 
& & $2^{14}$ &    $\{ 1.0, 2.0, 4.0, 4.7, 5.1, 5.3, 5.4, 5.4, 5.5, 5.6, 5.5, 5.4, 5.3, 5.0  \}$ & \textcolor{black}{$3.818\times 10^{9}$}  \\ 

\midrule
Benes network & Rotation & $2^{10}$ &    $\{ 4.6, 5.6, 5.6, 5.6, 4.6, 5.6, 5.5, 5.5, 5.5, 0 \}$ & \textcolor{black}{$3.791\times 10^9$} \\ 
\cite{gentry2012fully,halevi2014algorithms}& (double-hoisted) & $2^{11}$  &  $\{ 5.6, 3.0, 5.6, 5.6, 5.6, 4.6, 5.5, 5.5, 5.5, 5.4, 0 \}$ & \textcolor{black}{$3.844\times 10^9$} \\ 
& & $2^{12}$  & $\{ 4.6, 5.6, 5.6, 5.6, 4.6, 5.6, 5.5, 5.5, 5.5, 5.4, 5.4, 0 \}$ & \textcolor{black}{$4.178\times 10^9$} \\ 
& & $2^{13}$  &  $\{ 5.6, 3.0, 5.6, 5.6, 5.6, 5.6, 4.5, 5.5, 5.5, 5.4, 5.4, 5.4, 0\}$  & \textcolor{black}{$4.207 \times 10^9$} \\ 
& & $2^{14}$  &  $\{ 4.6, 5.6, 5.6, 5.6, 5.6, 5.6, 4.5, 5.5, 5.5, 5.4, 5.4, 5.4, 5.3, 0 \}$ & \textcolor{black}{$4.508\times 10^9$} \\ 
\bottomrule
\end{tabular*}
\label{tb: rotseachLv}
\end{table*}

\subsection{Technical Detail}

\textcolor{black}{We present the key features of our proposed network construction along with techniques for efficient network evaluation. }

\subsubsection{Multi-group Network Construction}\label{sec: multi-group}

The selection of $rot$ at each level aims to limit both the number of rotations in each group and the total number of rotation keys to at most $\log n$. 
This restriction may cause occupation conflicts when multiple nodes attempt to transfer entries to a single rotation node.

Therefore, new groups are created to manage conflicting entries that can’t secure a position in their original group. Specifically, for any entry encountering an occupation conflict, we track the rotations it has completed and transfer it to one of the subsequent groups using a cross-group edge. Such a strategy ensures that entries follow the paths corresponding to the binary encoding of their total rotation distance.

The number of levels is bounded by $\log{n} + 1 - i$ for any group in the network, where $i$ is the minimum source level (index) of entries it receives from previous groups. Moreover, the number of levels of groups demonstrates a descending trend since we always consider the closest right-hand group first for transferring conflicting entries.

\subsubsection{Reduction of Masks and Depth}

Many masks within our network are unnecessary in real homomorphic evaluation. They can be replaced with copy operations without compromising the correctness of the network. 
For any group with a maximum level index of $\ell$, the unnecessary masks are those on edges reaching standby nodes, except for the edges from level $\ell-2$ to level $\ell-1$. This corresponds to the yellow, blue, and purple edges in Figure \ref{fig: network}. Reducing them to copy operations allows us to eliminate a significant number of plaintext-ciphertext multiplications and reduce the multiplication depth of the overall network to at most $\log{n}-1$. 

\textcolor{black}{We further provide a level-collapsing strategy for the network. To compress the top $1\sim \ell$ levels, we apply rotations of distances $\{ i\cdot n/2^{\ell} \mid i=0,1,\dots,2^\ell-1 \}$ to the input ciphertext and assign specific masks to them to construct the inputs for all nodes at level $\ell+1$. The multiplication depth of the merged levels is reduced to $1$.}
\textcolor{black}{We can also merge the bottom $ \ell $ levels by reconstructing the nodes at level $\log{n}-\ell$: for entries sharing the same $ r_{rem} $, we use masks to extract and consolidate them into a single ciphertext, yielding $2^{\ell}$ ciphertexts in total. By rotating them according to their respective $r_{rem}$ and summing them, we obtain the result of homomorphic permutation.} 
\textcolor{black}{ Rotations in the merged level are performed with a level-order traversal of some $m$-ary tree, where $m$ is a power of $2$. This restricts the number of additional rotation keys required per collapsed level to $\frac{m-1}{\log{m}}-1$.} 


\subsubsection{Timing of Rescale}\label{sec: rescaletime}

$\texttt{Rescale}$ is the most computationally intensive operation following $\texttt{Rot}$ in our proposed network for homomorphic permutation, and its execution timing is carefully managed during the network evaluation.
Whenever a rotation node requires rescaling its input, we reduce its computational complexity by integrating the rescaling into the final submodule of the rotation: \textbf{ModDown}, which receives a ciphertext $ ct = (a', b') \in R_{PQ_{\ell}}^2 $ from the preceding submodules and outputs $ (\lfloor a' \cdot P^{-1} \rceil, \lfloor b' \cdot P^{-1} \rceil) \in R_{Q_\ell}^2 $ \cite{cheon2019full,bossuat2021efficient}. \textcolor{black}{Rescaling} a ciphertext $ ct = (c_0, c_1) \in R_{Q_\ell}^2 $ can be viewed as an instance of \textbf{ModDown} computing $ (c_0, c_1) \in R_{Q_{\ell-1}}^2 \leftarrow (\lfloor c_0 \cdot q_{\ell}^{-1} \rceil, \lfloor c_1 \cdot q_{\ell}^{-1} \rceil) $. Thus, for a ciphertext requiring both rotation and rescaling, we can substitute the final step of the rotation with a \textbf{ModDown} from $ Q_{\ell}P $ to $ Q_{\ell-1} $: $ (\lfloor a' \cdot (q_\ell P)^{-1} \rceil, \lfloor b' \cdot (q_\ell P)^{-1} \rceil) \mod Q_{\ell-1} \in R_{Q_{\ell-1}}^2 $, completing both rotation and rescaling simultaneously. \textcolor{black}{This merging strategy is tailored to our network architecture and has negligible effect on conventional homomorphic permutation evaluation via the HLT algorithm, which performs only one \texttt{Rescale} per HLT.}

\subsection{Complexity Analysis}\label{sec: HVPcomplexity}

In this section, we analyze the primary computational complexity of our proposed multi-group network for homomorphic permutations and compare it with the Benes network-based solution.

As we propose merging $\texttt{Rescale}$ with the ciphertext rotation, the remaining fundamental operations in the network are rotations and homomorphic masking, with the former dominating the complexity. We analyze the complexity of the rotation merged with rescaling, and compare it to the original version of separate routines. 
For a ciphertext in $R_{Q_{\ell+1}}^2$ and modulus chain $P$ with $\alpha$ moduli, the number of scalar multiplication for rescaling it to $R_{Q_\ell}^2$ and performing one rotation is (see Appendix \ref{app: HE} for submodules description):

\noindent \textbf{1 (Rescale):} $$2\cdot N\cdot ((\ell+1)+ (\ell+2)\cdot (\log{(N)}+1))$$

\noindent \textbf{2 (Decompose):} $$N\cdot \log{(N)}\cdot (\ell+1) +  \lceil \frac{\ell+1}{\alpha} \rceil \cdot  N \cdot  ( \alpha+(\ell+1) \cdot  (\log{(N)} + \alpha))$$

\noindent \textbf{3 (MultSum):} $$2\cdot N\cdot\lceil \frac{\ell+1}{\alpha} \rceil \cdot  (\ell+1+\alpha)$$


\noindent \textbf{4 (ModDown):} $$2\cdot N\cdot ((\ell+1)\cdot (\alpha+\log{(N)}+1) + \alpha\cdot (\log{(N)}+1))$$
On the other hand, the number of scalar multiplication for completing the merged rotation is:

\noindent \textbf{1 (Decompose):} $$N\cdot \log{(N)}\cdot (\ell+2) +  \lceil \frac{\ell+2}{\alpha} \rceil \cdot  N \cdot  ( \alpha+(\ell+2) \cdot  (\log{(N)} + \alpha) )$$

\noindent \textbf{2 (MultSum):} $$2\cdot N\cdot \lceil \frac{\ell+2}{\alpha} \rceil \cdot  (\ell+2+\alpha)$$


\noindent \textbf{3 (ModDown+Rescale):} $$2\cdot N\cdot (\ell\cdot ((\alpha+1)+\log{(N)}+1) + (\alpha+1)\cdot (\log{(N)}+1)).$$
\textcolor{black}{By subtracting the complexity of the merged version from that of the separate version, we obtain a difference no smaller than
$$N\cdot(\ell \log N + 3.5\log N - \ell + 8)$$ with $\alpha \geq 2$ under common setting, indicating a complexity reduction across all possible ciphertext modulus levels.} 





\textcolor{black}{The average complexity for evaluating random permutations of varying lengths $n$ is presented in Table \ref{tb: rotseachLv}, based on 20 samples per length. The table details the number of rotations required at each network level, from $1$ to $\log n$.}
For comparison, the corresponding information for the Benes network-based scheme \cite{gentry2012fully,halevi2014algorithms} is also provided. In their case, each level corresponds to one double-hoisted HLT \cite{bossuat2021efficient} (see Appendix \ref{app: benes} for implementation details).
\textcolor{black}{The number of scalar multiplications (\#ScalarMult) for all rotations in the network is also presented.} 
This is obtained with $\log{N}=15$, modulus chain $Q$ of $18$ moduli, and $P$ of $3$ moduli. 
At any level $\ell$, the rotations are applied to ciphertexts in $R_{Q_{17-\ell}}^2$. 

From Table \ref{tb: rotseachLv}, we can observe the following:
\textcolor{black}{(i) In our network, the number of rotations first increases and then decreases as the level index grows. The initial increase implies that new groups are created to continue rotating conflicting nodes,  
while the subsequent decrease arises from the gradual reduction in the number of entries each group needs to process.} 
(ii) For the same $n$, our network exhibits a lower rotation complexity, and this advantage is more pronounced for smaller values of $n$. For $\log{n}=10\sim 14$, the corresponding advantage is approximately $1.52\times \sim 1.18\times$. 
\textcolor{black}{The variation in the overall number of rotations remains modest in our network, with a slightly increasing standard deviation from $1.6$ to $2.4$ as the permutation length grows.} 

\begin{table*}[h]
\ra{0.85}
\setlength{\tabcolsep}{4.2pt}
\centering
\begin{tabular*}{\textwidth}{@{}lccccccccccccccccccc@{}}\toprule
  &  & 
  & \multicolumn{2}{c}{$\ell = 1$ (Mult Depth=$2$)}  & \phantom{ab} 
  & \multicolumn{2}{c}{$\ell = 2$ (Mult Depth=$3$)}  & \phantom{ab}
  & \multicolumn{2}{c}{$\ell = 3$ (Mult Depth=$4$)}  & \phantom{ab}
  & \multicolumn{2}{c}{$\ell = 4$ (Mult Depth=$5$)}  & 
\\
\cmidrule{4-5} \cmidrule{7-8} \cmidrule{10-11} \cmidrule{13-14}
Scheme & $d$ & BSGS $d1$ 
& Time (ms) & Keys &
& Time (ms) & Keys &
& Time (ms) & Keys &
& Time (ms) & Keys 
\\
\midrule
Original \cite{jiang2018secure} & 128 & $32$ &	1211 & 38 & &	1108 & 38 & &	961 & 38 & &	896  & 38 \\
Original \cite{jiang2018secure} & 128 & $16$ &	1469 & 30 & &	1437 & 30 & &	1202 & 30 & &	1111  & 30 \\
\textcolor{black}{Ours}                   & \textcolor{black}{128} & \textcolor{black}{$32$} &   \textcolor{black}{\textbf{868}} & \textcolor{black}{35} & &	\textcolor{black}{787} & \textcolor{black}{35} & &	\textcolor{black}{-} & \textcolor{black}{-} & &	\textcolor{black}{-}       & \textcolor{black}{-} \\

Ours                   & 128 & $16$ &   1018 & 23 & &	\textbf{722} & 21 & &	\textbf{598} & 21 & &	-       & - \\
Ours                   & 128 & $8$  &   1380 & 23 & &	844 & 17 & &	600 & 15 & &	\textbf{569}  & 15 \\
\midrule

\textcolor{black}{Original \cite{jiang2018secure}} & \textcolor{black}{256} & \textcolor{black}{$64$} &	\textcolor{black}{4376} & \textcolor{black}{70} &  &	\textcolor{black}{4027} & \textcolor{black}{70} & &	\textcolor{black}{3507} & \textcolor{black}{70} & & \textcolor{black}{3372}  & \textcolor{black}{70} \\
\textcolor{black}{Original \cite{jiang2018secure}} & \textcolor{black}{256} & \textcolor{black}{$32$} &	\textcolor{black}{4622} & \textcolor{black}{46} &  &	\textcolor{black}{4509} & \textcolor{black}{46} & &	\textcolor{black}{3735} & \textcolor{black}{46} & & \textcolor{black}{3609}  & \textcolor{black}{46} \\

\textcolor{black}{Ours}                   & \textcolor{black}{256} & \textcolor{black}{$64$} &   \textcolor{black}{3352} & \textcolor{black}{67} & &	\textcolor{black}{2715} & \textcolor{black}{67} & &	\textcolor{black}{-} & \textcolor{black}{-} & &	\textcolor{black}{-}       & \textcolor{black}{-} \\
\textcolor{black}{Ours}                   & \textcolor{black}{256} & \textcolor{black}{$32$} &   \textcolor{black}{\textbf{2890}} & \textcolor{black}{39} & &	\textcolor{black}{\textbf{2186}} & \textcolor{black}{37} & &	\textcolor{black}{1710} & \textcolor{black}{37} & &	\textcolor{black}{-}       & \textcolor{black}{-} \\
\textcolor{black}{Ours}                   & \textcolor{black}{256} & \textcolor{black}{$16$} &   \textcolor{black}{3766} & \textcolor{black}{31} & &	\textcolor{black}{2311} & \textcolor{black}{25} & & \textcolor{black}{\textbf{1604}}	 & \textcolor{black}{23} & &	      \textcolor{black}{\textbf{1485}} & \textcolor{black}{23} \\
\bottomrule
\end{tabular*}
\caption{HMT benchmarks under dimension $d\in \{128,256\}$ and different decomposition depth $\ell$} \label{tb: HMT2}
\end{table*}

\begin{table*}[h]
\ra{0.85}
\setlength{\tabcolsep}{4.7pt}
\centering
\begin{tabular*}{\hsize}{@{}lccccccccccccccccccc@{}}\toprule
  & \phantom{abc}
  & \multicolumn{3}{c}{$d=8$ }  & \phantom{a} 
  & \multicolumn{3}{c}{$d=16$}  & \phantom{a}
  & \multicolumn{3}{c}{$d=32$}  & \phantom{a}
  & \multicolumn{3}{c}{$d=64$}  & \phantom{a}
\\
\cmidrule{3-5} \cmidrule{7-9} \cmidrule{11-13} \cmidrule{15-17}
Scheme & $d'$ 
& $m$ & \textcolor{black}{Time (ms)} & Keys &
& $m$ & \textcolor{black}{Time (ms)} & Keys &
& $m$ & \textcolor{black}{Time (ms)} & Keys &
& $m$ & \textcolor{black}{Time (ms)} & Keys 
\\
\midrule

\textcolor{black}{\cite{jiang2018secure,MA2024103658}} & \textcolor{black}{1} 
& \textcolor{black}{512} & \textcolor{black}{1802} & \textcolor{black}{10} & 
& \textcolor{black}{128} & \textcolor{black}{3338} & \textcolor{black}{16} &
& \textcolor{black}{32} & \textcolor{black}{4973} & \textcolor{black}{28} &
& \textcolor{black}{8} & \textcolor{black}{9043} & \textcolor{black}{45} \\ 
Original \cite{rizomiliotis2022matrix} & d   
& 64 & \textcolor{black}{1697}  & 23 & 
& 8  & \textcolor{black}{3072}& 42 &
& 1  & \textcolor{black}{5530} & 77 &
& -  & -      & -  \\
Ours     & d   
& 64 & \textcolor{black}{\textbf{1046}}  & 21 &
& 8  & \textcolor{black}{\textbf{1336}}  & 28 &
& 1  & \textcolor{black}{\textbf{1705}}  & 35 &
& -  & -      & -  \\
Ours     & d/2 
& 128& \textcolor{black}{1257}  & \textcolor{black}{18} &
& 16 & \textcolor{black}{1730}  & \textcolor{black}{25} &
& 2  & \textcolor{black}{2175}  & \textcolor{black}{32} &
& -  & -      & -  \\
Ours     & d/4 
& 256& \textcolor{black}{1800}   & \textcolor{black}{15} &
& 32 & \textcolor{black}{2584}   & \textcolor{black}{22} &
& 4  & \textcolor{black}{3283}   & \textcolor{black}{29} &
& -  & -      & -  \\
Ours     & d/8 
& 512& \textcolor{black}{3354}   & \textcolor{black}{9} &
& 64 & \textcolor{black}{4507}   & \textcolor{black}{19} &
& 8  & \textcolor{black}{5736}   & \textcolor{black}{26} &
& 1  & \textcolor{black}{\textbf{7050}}   & \textcolor{black}{33} \\

Ours     & d/16
& -  & -      & -  &
& 128& \textcolor{black}{8757}   & \textcolor{black}{12} &
& 16 & \textcolor{black}{11078}  & \textcolor{black}{23} &
& 2  & \textcolor{black}{12943}  & \textcolor{black}{30} \\

Ours     & d/32
& -  & -      & -  &
& -  & -      & -  &
& 32 & \textcolor{black}{21773}  & \textcolor{black}{15} &
& 4  & \textcolor{black}{25720}  & \textcolor{black}{27} \\
Ours     & d/64
& -  & -      & -  &
& -  & -      & -  &
& -  & -      & -  &
& 8  & \textcolor{black}{51616}  & \textcolor{black}{18} \\

\bottomrule
\end{tabular*}
\caption{HMM benchmarks under full decomposition depth and constant multiplication depth}\label{tb: HMM} 
\end{table*}

\interfootnotelinepenalty=10000

\section{Implementation}

We implement the algorithms proposed in the previous sections using the full-RNS CKKS HE scheme \cite{cheon2019full,kim2019approximate} provided by the Lattigo library \cite{mouchet2020lattigo}.
Experiments are conducted on a system with an i7-13700K processor running at 3.40 GHz and a memory of 64G.
\textcolor{black}{We present benchmarks for homomorphic matrix transposition, homomorphic matrix multiplication, along with its application in oblivious neural network inference, and arbitrary homomorphic permutations. Unless otherwise specified, the homomorphic encryption parameters are fixed at $\log{QP}\approx 880$ and $\log{N}=15$ across all benchmarks. This provides a 128-bit security level \cite{albrecht2015concrete,albrecht2021homomorphic,bossuat2024security}, an encoding space of $N/2$ complex numbers per ciphertext, and a budget of 18 modulus levels, indexed from $0$ to $17$.}
When comparing different algorithms, we ensure that their execution concludes at the same ciphertext modulus level.

\subsection{Homomorphic Matrix Transposition}

Table \ref{tb: HMT2} provides HMT benchmarks under dimension \textcolor{black}{$d\in \{128,256\}$} and decomposition depth \textcolor{black}{$\ell \leq \lfloor\log{(d-1)}\rfloor$}. 
\textcolor{black}{Both the depth-$\ell$ decomposed HMT and the original version used for comparison terminate at modulus level $17 - (\ell + 1)$. To support HMT of $d=256$, the polynomial degree modulus $N$ is extended to $2^{16}$, and the encoding capacity is alternated to $N$ real numbers by the conjugate-invariant ring based CKKS \cite{kim2019approximate}.} As previously discussed, the leftmost factor $U^t_{L_\ell}$ is implemented with the BSGS algorithm. \textcolor{black}{$U^t_{L_\ell}$ contains $2d/2^\ell$ non-zero diagonals, distributed symmetrically on both sides of the $0$-th diagonal. Thus, the product of BSGS loop sizes $d_1d_2$ is no greater than $d/2^{\ell}$, the number of non-zero diagonals on one side of $U^t_{L_\ell}$.} 

Compared to the original undecomposed version \cite{jiang2018secure}, the ideal decomposition yields significant complexity reductions, even when the decomposition depth is shallow. 
\textcolor{black}{This aligns with the reduced number of rotations from $d_1 + \frac{2d}{d_1}$ to $2\ell+d_1+\frac{2d}{2^{\ell} d_1}$ by depth-$\ell$ decomposition. Notably, our method demonstrates a speed-up of up to $2.3\times$ and a rotation key reduction of up to $67\%$ for $d=256$. The asymptotic complexity improvement can also be evidenced by the runtime growth with respect to $d$. While the original $U^t$ exhibits a consistent growth factor of $3.7$, the decomposed version shows a decreasing trend from $3.3$ to $2.6$ as $\ell$ increases.} 

\subsection{Homomorphic Matrix Multiplication}

Table \ref{tb: HMM} primarily compares our enhanced HMM in Section \ref{sec: HMMDmp} to the original scheme \cite{rizomiliotis2022matrix}. Unlike our variable \( d^2d' \) matrix encoding space, the original \( d^3 \)-encoding only allows limited choices of dimensions. \textcolor{black}{Thus, the ciphertext's encoding capacity is also shifted to $N$ real numbers \cite{kim2019approximate} (HE parameters remain default). All computations terminate at modulus level $14$.} 


In the table, parameter $ m = N/d^2d' $ denotes the maximum number of matrices that can be computed in parallel. For the original encoding setting $ d' = d $, our scheme achieves \textcolor{black}{a speedup of up to $3.24\times$ and a 54\% reduction} in rotation keys compared to the original algorithm. \textcolor{black}{This improvement is purely achieved by the permutation decomposition technique, which allows the complexity to grow logarithmically with $ d $ and remain independent of multiplication depth overhead.}

\textcolor{black}{The rest of the experiments} under different $ d' $ demonstrate the enhancement of our more flexible encoding space utilization method. We can trade encoding space for higher parallelity and lower amortized time per matrix multiplication.
\textcolor{black}{The HMM algorithm proposed by Jiang \textit{et al.} is also included for reference \cite{jiang2018secure}. Here, we adopt the version enhanced by Ma \textit{et al.} through hoisting techniques \cite{MA2024103658}. Recall Table \ref{tb: HMMcomplexity}, this HMM algorithm achieves optimal amortized complexity. However, its complexity is fixed over different batch sizes of matrix multiplication. Thus, when it comes to sequential and small-batch matrix multiplications, our approach offers a solution with significantly lower latency. A representative scenario of such use cases is explored in the following subsection.}

\begin{table}
\ra{0.9}
\setlength{\tabcolsep}{4.5pt}
\begin{tabular*}{\columnwidth}{@{}lllllllllllll@{}}
\toprule
Layer & Description \\ 
\midrule
Convolution    & Input: $28\times 28$, kernel:$7\times 7$, stride:$3$, channels:$2$ \\ 
$1^{st}$ square & Squaring $128$ inputs \\ 
FC-1           & Fully connecting $128$ inputs and $32$ outputs \\ 
$2^{nd}$ square & Squaring $32$ inputs \\ 
FC-2           & Fully connecting $32$ inputs and $10$ outputs \\ 
\bottomrule
\end{tabular*}
\caption{Description of E2DM-lite CNN on the MNIST dataset}\label{tb: ONNI0}
\end{table}

\begin{table}
\ra{0.75}
\setlength{\tabcolsep}{3.6pt}
\begin{threeparttable}
\begin{tabular*}{\columnwidth}{@{}lccccccccccc@{}}
\toprule
Framework & Batch & Latency & Amortized & Depth & Accuracy \\ 
\midrule
E2DM \cite{jiang2018secure}     & 64 & 4232 ms & 66 ms & 9 & 98.1\% \\ 
CDKS \cite{chen2019efficient}   & 1  & 308 ms  & 308 ms  & 6 & 98.4\% \\ 
POSEIDON \cite{sav2021poseidon} & 1 & 155 ms & 155 ms  & 6 & 89.9\% \\
E2DM-lite\tnote{*} & 128 & 2795 ms & 22 ms& 9 & 97.7\% \\
\textcolor{black}{E2DM-lite}\tnote{*} & \textcolor{black}{16} & \textcolor{black}{1839 ms} & \textcolor{black}{114 ms} & \textcolor{black}{9} & \textcolor{black}{97.7\%} \\
\textcolor{black}{E2DM-lite}\tnote{*} & \textcolor{black}{8} & \textcolor{black}{1414 ms} & \textcolor{black}{176 ms} & \textcolor{black}{9} & \textcolor{black}{97.7\%} \\
E2DM-lite\tnote{\textdagger} & 32 & 917 ms& 28 ms& 6 & 97.7\% \\
E2DM-lite\tnote{\textdagger} & 16 & 523 ms& 33 ms& 6 & 97.7\% \\
E2DM-lite\tnote{\textdagger} & 8 &  363 ms& 45 ms& 6 & 97.7\% \\
\bottomrule
\end{tabular*}
\begin{tablenotes}
  \item[*] Original implementation under Jiang \textit{et al.}'s HMM \cite{jiang2018secure,MA2024103658}.
  \item[\textdagger] Implemented with our HMM.
\end{tablenotes}
\end{threeparttable}
\caption{MNIST benchmarks of HE-based ONNI schemes}\label{tb: ONNI2}
\end{table}

\begin{table}
\ra{0.8}
    \setlength{\tabcolsep}{2.7pt}
    \begin{tabular*}{\columnwidth}{@{}cccccccc@{}}
    \toprule
    \phantom{$\log{n}$}   &  & 
    \multicolumn{2}{c}{\cite{gentry2012fully,halevi2014algorithms}}  &  & 
    \multicolumn{3}{c}{Ours}  \\
    \cmidrule{3-4} \cmidrule{6-8}
    
    $\log{n}$      &  &  
    Time (ms)       &  
    \textcolor{black}{Collapsed}      &  &
    Time (ms)       &
    \textcolor{black}{Collapsed}      & 
    Speed-up
\\ 
\midrule
$7$  & & \textcolor{black}{308} & \textcolor{black}{3} & & \textcolor{black}{187}  & \textcolor{black}{Top 0 Bot 2} & 
\textcolor{black}{$\times 1.65$} \\
$8$  & & \textcolor{black}{426} & \textcolor{black}{3} & & \textcolor{black}{252}  & \textcolor{black}{Top 0 Bot 2}  & $\times 1.69$ \\
$9$  & & \textcolor{black}{553} & \textcolor{black}{5} & & \textcolor{black}{362} & \textcolor{black}{Top 0 Bot 3}  & $\times 1.53$ \\
$10$ & & \textcolor{black}{655} & \textcolor{black}{4} & & \textcolor{black}{481} & \textcolor{black}{Top 0 Bot 3}  & $\times 1.36$ \\
$11$ & & \textcolor{black}{834} & \textcolor{black}{5} & & \textcolor{black}{617} & \textcolor{black}{Top 2 Bot 3}  & $\times 1.35$ \\
$12$ & & \textcolor{black}{1013} & \textcolor{black}{5} & & \textcolor{black}{810} & \textcolor{black}{Top 2 Bot 3}  & \textcolor{black}{$\times 1.25$} \\
$13$ & & \textcolor{black}{1184} & \textcolor{black}{4} & & \textcolor{black}{1030} & \textcolor{black}{Top 2 Bot 3}  & $\times 1.15$ \\
$14$ & & \textcolor{black}{1405} & \textcolor{black}{6} & & \textcolor{black}{1244} & \textcolor{black}{Top 2 Bot 3}  & \textcolor{black}{$\times 1.13$} \\
\bottomrule
\end{tabular*}
\caption{Average time consumption for randomly sampled permutations of length $n$ ($\log{n}$ rotation keys).}\label{tb: HVP}
\end{table}

\subsection{Application to Oblivious Neural Network Inference}

Oblivious inference refers to the task of outsourcing an ML model without revealing its parameters or inputs, wherein the model owner is not required to run the model locally \cite{jiang2018secure, chen2019efficient, rizomiliotis2022partially}. Jiang \textit{et al.} \cite{jiang2018secure} proposed an efficient HE-based framework, E2DM, to realize this objective, predicting the MNIST dataset \cite{deng2012mnist} over an encrypted neural network. 
The computational complexity \textcolor{black}{of this framework} is dominated by their proposed HMM. 
We adopt their service paradigm but replace the HMM module in their framework with our version. Our objective is to demonstrate the practicality of our HMM by achieving lower latency when redundant encoding capacity is available, particularly under small-batch inference scenarios.
Table \ref{tb: ONNI0} describes the employed neural network. 
We call it E2DM-lite, as it is derived by halving the number of convolutional channels and neurons in E2DM, but achieves an accuracy of 97.7\% (only 0.4\% lower than the original version). 
We present benchmarks of E2DM-lite in Table \ref{tb: ONNI2} (implementation detail provided in Appendix \ref{app: ONNI}). When implemented with the original HMM by Jiang \textit{et al.}, E2DM-lite requires an inference batch size of $128$ for optimal performance. \textcolor{black}{Notably, its latency remains unchanged when the batch size drops to $32$, as the framework utilizes parallel HMMs of size $32 \times 32$ to process fully connected layers. When the batch size further decreases to $16$ or $8$, the parallel HMM blocks become smaller, and the cost of aggregating these blocks after HMM increases correspondingly.} 
\textcolor{black}{In contrast, our HMM takes advantage of the encoding space made available by small batches, achieving latency improvements of $3.0\times$, $3.5\times$, and $3.9\times$ for batch sizes of 32, 16, and 8, respectively, while maintaining an amortized runtime that is only slightly higher than that of \cite{jiang2018secure}.} 
\textcolor{black}{Note that although it is possible to implement the original E2DM with our proposed techniques, the HMM size in E2DM is defined relatively larger, where using a redundant encoding space for latency reduction would result in increased polynomial degree modulus $N$, posing a negative effect on the overall performance of homomorphic operations.}
 
Table \ref{tb: ONNI2} also presents benchmarks of other state-of-the-art HE-based oblivious neural network inference frameworks on the MNIST dataset. Note that although they use similar underlying HE schemes \cite{cheon2017homomorphic,cheon2019full} and network topologies, we cannot force them to work under identical parameters of HE and neural network, which have been highly customized for different computation strategies. Therefore, the benchmarks are obtained under the original parameters of these frameworks, except for E2DM, which only ensures 80-bit security. We have it implemented under the same HE setting as our scheme. 

Jiang \textit{et al.}'s HMM implementation in E2DM and E2DM-lite exhibits good amortized performance under high batch sizes, but introduces relatively high depth, making it more suitable for shallow, wide networks. 
The strategy designed by Chen \textit{et al.}\cite{chen2019efficient} (CDKS) employs their proposed multi-key homomorphic encryption to avoid introducing a third-party authority. It has a good latency, but the batch size is relatively small. POSEIDON\cite{sav2021poseidon} is a federated neural network learning framework based on threshold HE. It provides fast oblivious inference after the network is homomorphically trained among parties. However, clients outside the training system need to encrypt their inquiries with the joint public key of the system. 

\subsection{Arbitrary Homomorphic Permutation}

Table \ref{tb: HVP} compares our proposed homomorphic permutation algorithm in Section \ref{sec: HVP} with the current optimal approach based on the Benes network \cite{gentry2012fully,halevi2014algorithms}.
The table reports the average performance on randomly generated permutations of various lengths $n$.
\textcolor{black}{The homomorphic evaluation for all permutation lengths is configured to terminate at modulus level $0$. Both schemes initially constructs networks with $\log{n}-1$ multiplication depth, then apply their respective level-collapsing strategies to seek optimal performance, where the reduced number of network levels is reported in the table.} 

The Benes network-based solution employs the level-collapsing strategy proposed by Halevi \textit{et al.}\cite{halevi2014algorithms}. For our network, the best practice is to compress the top $2\sim3$ levels and the bottom $3\sim 4$ levels. Compressing these levels reduces the number of homomorphic operations performed on relatively high modulus levels. Rotations for the compressed levels are further accelerated with the hoisting technique.
The Speed-up column in Table \ref{tb: HVP} reveals that our algorithm provides speed increase across different $n$. 
Given the typical setup range of $10\leq \log{N} \leq 16$ and $n\leq N$ for RLWE-based batch-encoding homomorphic encryption schemes \cite{bossuat2024security}, our algorithm demonstrates an efficiency advantage over nearly all practical choices of $n$.

\section{Conclusion}
In this paper, we employed more effective decompositions on permutations to enhance homomorphic permutation. We proposed two approaches to achieve this: (i) Analyzing and proving the full-depth ideal decomposability of permutations based on our proposed depth-1 ideal decomposition search algorithm. (ii) Designing a new network structure for performing a new form of decomposition and homomorphic computation on arbitrary permutations.

\begin{acks}
We extend our gratitude to all those who provided valuable insights and suggestions for this paper. 
\end{acks}


\bibliographystyle{ACM-Reference-Format} 
\bibliography{sample}

\appendix


\section{Batch-encoding HE Settings}\label{app: HE}
A batch-encoding HE scheme generally has the following setup routines:
\begin{itemize}[leftmargin=1.1em]
    \item $\texttt{Setparams}(\cdot)$: Generate parameters $params$: $(N, Q, P, \chi_{sec}, \chi_{err})$. $N$ is a power of two, which along with $Q$ and $P$ defines the polynomial rings $R = \mathbb{Z}[X]/(X^N + 1)$, $R_Q = \mathbb{Z}_Q[X]/(X^N + 1)$, and $R_{QP} = \mathbb{Z}_{QP}[X]/(X^N + 1)$, where $Q = \prod_{i=0}^{L-1} q_i$, $P = \prod_{i=0}^{\alpha-1} p_i$, and any $q_i$ for $0 \leq i < L$ or $p_i$ for $0 \leq i < \alpha$ are congruent to $1$ modulo $2N$. The distributions $\chi_{sec}$ and $\chi_{err}$, defined over $R$, are used for sampling the secret key and noise, respectively. All subsequent routines implicitly receive $params$ as input.
    
    \item $\texttt{KeyPairGen}(\cdot)$: Generate the secret key $s \leftarrow \chi_{sec}$ and compute the public key $(a, -as + e_1)$, where $a \leftarrow_U R_Q$.

    \item $\texttt{SwitchKeyGen}(s,s',\mathbf{w})$ For the input integer decomposition basis of $\beta$ elements, output $swk_{s\leftarrow s'}\in R_{Q_LP}^{\beta \times 2}$, where $swk_{s\leftarrow s'}[i] = (a_i,-a_is'+sw^{(i)}P+e_i)$ and $a_i\leftarrow_U R_{Q_LP},e_i\leftarrow \chi_{err}$. 
    
    \item $\texttt{Enc}(m, \ell)$: For the given plaintext $m \in R$ and the index $\ell$ of the modulus chain $Q$, compute $(u, v) \in R_{Q_\ell}^2 \leftarrow (ar + e_2, (-as + e_1)r + \Delta_1 m + \Delta_2 e_3)$, where $Q_\ell = \prod_{i=0}^{\ell-1} q_i$ and $e_1, e_2, e_3 \leftarrow \chi_{err}$. $\Delta_1$ and $\Delta_2$ are scaling factors used to control the precision of the plaintext, where one of them must be $1$, while the other scales the polynomial coefficients, covering both low-bit and high-bit encoding cases for $m$.
    
    \item $\texttt{Dec}(c, s)$: For the given ciphertext $c = (u, v) \in R_{Q_\ell}^2$ and secret key $s$, compute $\Delta_1 m + \Delta_2 e \leftarrow \left\langle (u, v), (s, 1) \right\rangle \in R_q$. With appropriately chosen $\Delta_1$ and $\Delta_2$, one can perform scaling, rounding, and reduction to approximately extract $m$ as the output. For example, setting $\Delta_1 m = \lfloor \frac{q}{t} m \rfloor$ and $\Delta_2 = 1$ \cite{fan2012somewhat}, we obtain $m \in R_t \leftarrow \lceil \frac{t}{q} \left\langle (u, v), (s, 1) \right\rangle \rfloor \mod{t}$, where $t$ is the plaintext coefficient moduli.
\end{itemize}
Rotating a ciphertext $ct = (c_0, c_1) \in R_{Q_\ell}^2$ with rotation key $rtk_k\in R_{PQ_L}^{\beta \times 2} \leftarrow \texttt{SwitchKeyGen}(\phi_k(s),s,\mathbf{w})$ requires the following steps \cite{bossuat2021efficient}, where $\phi_k$ is the automorphism for rotating the plaintext vector in $ct$ by $k$ steps:
\begin{enumerate}[leftmargin=1.7em]
    \item \textbf{Decompose}: $c_1$ is switched out of the NTT domain and decomposed based on basis $\mathbf{w}$ of $\beta$ elements. Each decomposed component is then switched back into the NTT domain, forming the vector $\mathbf{d}$.
    \item \textbf{MultSum}: The vector $\mathbf{d}$ is inner-multiplied with $rtk_k$, yielding the tuple $(a, b) \in R_{PQ_\ell}^2$.
    \item \textbf{Permute}: The tuple $(c_0 P + a, b)$ undergoes an automorphism $\phi_k$, resulting in $(a', b') \in R_{PQ_\ell}^2$.
    \item \textbf{ModDown}: The computation $(\lfloor a' \cdot P^{-1} \rceil, \lfloor b' \cdot P^{-1} \rceil) \in R_{Q_\ell}^2$ is performed.
\end{enumerate}
We use $\beta=\lceil (\ell+1)/\alpha \rceil$ for the complexity analysis in Section \ref{sec: HVPcomplexity}, which is aligned with the RNS decomposition basis in \cite{bossuat2021efficient}. 


\section{Section \ref{sec: DMP4HMTHMM} Supplemental Materials}

\subsection{Partition pattern of arbitary dimension}\label{thm: Only2Sizes}

\begin{theorem}
    Let $ A $ be any square matrix of size $ d \times d $ and its initial block size is $ d \times d $. Define one round of block partitioning as follows: each block of $ A $ with an odd dimension is partitioned and recombined according to equations $\ref{eq: A_divide}$, while each block with an even dimension is partitioned according to equation $\ref{eq: A_org}$. After any number of rounds of partitioning, $ A $ contains blocks of at most two different sizes which differ by only $1$.
\end{theorem}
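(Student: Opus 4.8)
The plan is to argue by induction on the number of partitioning rounds, carrying the following invariant: after every round, the set of distinct block dimensions occurring in $A$ has size at most two, and whenever it has size exactly two, the two dimensions are consecutive integers $\{s, s+1\}$. The base case is immediate: before any partitioning there is a single block of dimension $d$, so the set of distinct dimensions is $\{d\}$, which trivially satisfies the invariant.

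For the inductive step, suppose after some round the distinct block dimensions form a set $S \subseteq \{s, s+1\}$ for some integer $s \geq 1$. I would split into the two possible parity configurations of this pair. First, if $s$ is even and $s+1$ is odd: each even block of dimension $s$ is partitioned via Equation~\ref{eq: A_org} into four blocks of dimension $s/2$; each odd block of dimension $s+1 = 2d'+1$ (so $d' = s/2$) is partitioned via Equation~\ref{eq: A_divide} into blocks of dimensions $d' = s/2$ and $d'+1 = s/2+1$. Hence the new distinct dimensions lie in $\{s/2, s/2+1\}$, two consecutive integers. Second, if $s$ is odd and $s+1$ is even: each odd block of dimension $s = 2d'+1$ (so $d' = (s-1)/2$) yields blocks of dimensions $d' = (s-1)/2$ and $d'+1 = (s+1)/2$; each even block of dimension $s+1$ yields four blocks of dimension $(s+1)/2$. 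Thus the new distinct dimensions lie in $\{(s-1)/2, (s+1)/2\}$, again consecutive. The degenerate subcase where $S$ is a singleton $\{s\}$ is handled the same way (if $s$ is even all new blocks have dimension $s/2$; if $s$ is odd they have dimensions $(s-1)/2$ and $(s+1)/2$). In every case the invariant is preserved, which proves the theorem.

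I do not anticipate a genuine obstacle here; the only points requiring care are bookkeeping points. One is verifying from Equation~\ref{eq: A_divide0} that the off-diagonal blocks $B'_{0,1}, B'_{1,0}$ have dimension $d'$ while the diagonal blocks $B'_{0,0}, B'_{1,1}$ have dimension $d'+1$, so that an odd block of dimension $2d'+1$ indeed produces exactly the two consecutive dimensions $d'$ and $d'+1$ (the overlap of one entry between $B'_{0,0}$ and $B'_{1,1}$ does not affect dimensions). The other is the boundary case $d' = 0$, i.e.\ a $1\times 1$ block: such a block cannot be meaningfully subdivided and is simply left intact, which is consistent with the invariant since it contributes only the dimension $1$. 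With these remarks in place the induction goes through routinely.
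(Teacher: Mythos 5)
Your proof is correct and follows essentially the same route as the paper's: induction on the number of partitioning rounds with a parity case analysis showing that an odd block of dimension $2d'+1$ yields dimensions $d'$ and $d'+1$ while an even block halves exactly. Your version is in fact slightly more careful than the paper's (explicit base case, the singleton-dimension subcase, and the $1\times 1$ boundary remark), but the underlying argument is identical.
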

\begin{proof}
    Suppose that after a certain round of partitioning, the blocks in $ A $ have dimensions differing by 1, specifically $ d' $ and $ d' + 1 $. After the next round of partitioning, if $ d' $ is odd, the blocks of dimension $ d' $ are divided into blocks of dimensions $ \frac{d' + 1}{2} - 1 $ and $ \frac{d' + 1}{2} $, while the blocks of dimension $ d' + 1 $ are divided into blocks of dimension $ \frac{d' + 1}{2} $. If $ d' $ is even, the blocks of dimension $ d' + 1 $ are divided into blocks of dimensions $ \frac{d'}{2} + 1 $ and $ \frac{d'}{2} $, while the blocks of dimension $ d' $ are divided into blocks of dimension $ \frac{d'}{2} $. Given that the two block dimensions initially produced by Equations $\ref{eq: A_divide}$ differ by only 1, by mathematical induction, after each round of partitioning, the blocks in $A$ always have only two different dimensions, differing by only 1. 
\end{proof}

\begin{table*}
\ra{0.9}
\setlength{\tabcolsep}{3.7pt}
\centering
\begin{tabular*}{\hsize}{@{}lccccccccccccccccccc@{}}\toprule
  & \phantom{abc}
  & \multicolumn{4}{c}{$d=8$ }  & \phantom{a} 
  & \multicolumn{4}{c}{$d=16$}  & \phantom{a}
  & \multicolumn{4}{c}{$d=32$}  & \phantom{a}
\\
\cmidrule{3-6} \cmidrule{8-11} \cmidrule{13-16} 
Scheme & $d_0$ 
& $m$ & Amortized & Keys & Speed-up &
& $m$ & Amortized & Keys & Speed-up &
& $m$ & Amortized & Keys & Speed-up 
\\
\midrule
Ours (\textcolor{black}{Original})
         & $d/1$  & 512  & 7 ms  & \textcolor{black}{9} & $\times 1.00$ & 
                & 128  & 72 ms & \textcolor{black}{12} & $\times 1.00$ &
                & 32   & 694 ms& \textcolor{black}{15} & $\times 1.00$ \\

Ours (Fast Rep)     
         & $d/2$   & 512  & 5 ms & \textcolor{black}{9} & $\times 1.33$ &
                 & 128  & 56 ms & \textcolor{black}{12} & $\times 1.28$ &
                 & 32   & 567 ms& \textcolor{black}{15} & $\times 1.22$ \\

Ours (Fast Rep)     
         & $d/4$   & 512  & 4 ms  & \textcolor{black}{12} & $\times 1.76$ &
                 & 128  & 42 ms & \textcolor{black}{15} & $\times 1.72$ &
                 & 32   & 471 ms& \textcolor{black}{18} & $\times 1.47$ \\

Ours (Fast Rep)                
         & $d/8$   & 512 &  \textcolor{black}{2 ms}  & \textcolor{black}{21} & \textcolor{black}{$\times 3.13$} &
                 & 128 & 30 ms & \textcolor{black}{24} & $\times 2.35$ &
                 & 32  & 354 ms& \textcolor{black}{27} & $\times 1.96$ \\

Ours (Fast Rep)               
         & $d/16$  & - & -     & -  & - &
                 & 128 &  \textcolor{black}{21 ms} & \textcolor{black}{45} & \textcolor{black}{$\times 3.40$} &
                 & 32  & 279 ms& \textcolor{black}{48} & $\times 2.49$ \\

Ours (Fast Rep)               
         & $d/32$  & - & -     & -   & - &
                 & -   & -     & -   & - &
                 & 32    &  \textcolor{black}{232 ms} &\textcolor{black}{93} & \textcolor{black}{$\times 2.99$} \\
\bottomrule
\end{tabular*}
\caption{Performance of HMM with optimized replication \textcolor{black}{(using $\ell=1,d=d_1d_0$)}}\label{tb: HMM_app}
\end{table*}

\subsection{Demonstration of $1$-padding on $U^\gamma$ and $U^\xi$}\label{app: 1-pad}

\begin{theorem}\label{thm: 1-pad}
    For any $d^2\times d$ matrix $\hat{A}$ and $d\times d^2$ matrix $\hat{B}$, $U^\gamma_{pad}(\hat{A})$ has the same first column as $U^\gamma(\hat{A})$, and $U^\xi_{pad}(\hat{B})$ has the same first $d$ columns as $U^\xi(\hat{B})$.
\end{theorem}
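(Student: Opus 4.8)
The plan is to unwind the definitions of $U^\gamma$, $U^\xi$, and their $1$-padded counterparts $U^\gamma_{pad}$, $U^\xi_{pad}$, and then directly compare the relevant columns. Recall that $U^\gamma$ acts on $\hat A$ (a $d^2\times d$ matrix holding $A$ in its first $d$ rows and zeros below) exactly as a transposition of a $d\times d$ matrix of $d\times 1$ column-vector entries, so its non-zero diagonals have indices forming the arithmetic sequence $\{-(d^2-1)\cdot k \mid 0\le k<d\}$. Padding means we set \emph{all} entries on these diagonals to $1$, so that $U^\gamma_{pad}(\hat A)$ is simply $\sum_{k=0}^{d-1}\texttt{Rot}(\hat A;\,-(d^2-1)k)$ with no masking. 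The first thing I would do is write out, entry by entry, the first column of $U^\gamma(\hat A)$: by the transposition interpretation it is exactly the stacked column $[\mathbf a_0^T,\mathbf 0_c^T,\dots,\mathbf 0_c^T]^T$ rotated appropriately, i.e. the first column equals $(\mathbf a_0,\dots,\mathbf a_{d-1})$ laid out in the $d^2$ slots of that column (the content of $U^\gamma(\hat A)$'s first block-column).

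Next I would compute the first column of $U^\gamma_{pad}(\hat A)$. Because only the $0$-th shift ($k=0$) contributes entries whose source lies in the first $d$ rows of $\hat A$ to the target positions making up the first column of the output — here one uses that $\hat A$ has zeros in rows $d,\dots,d^2-1$, so the nonzero contributions of the shifts $\texttt{Rot}(\hat A;-(d^2-1)k)$ for $k\ge 1$ land in columns other than the first — the first column of $U^\gamma_{pad}(\hat A)$ collects precisely the same nonzero values as $U^\gamma(\hat A)$. The key bookkeeping step is to verify that for each $k$, the shift $-(d^2-1)k$ sends the nonzero band $\mathbf a_{*}$ of $\hat A$ into the $k$-th block-column of the output, and the $k=0$ term is the unique one reaching the first block-column. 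This is the step that requires care with the index arithmetic modulo $d^3$ (or $d^2\cdot d$ in the split-encoding setting); the masking in $U^\gamma$ only serves to \emph{zero out} the spurious contributions that $U^\gamma_{pad}$ retains in block-columns $1,\dots,d-1$, and leaves the first block-column untouched. Hence the first columns agree.

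For $U^\xi$ the argument is the mirror image: $U^\xi$ acts as transposition of a $d\times d$ matrix of $1\times d$ row-vector entries, so its non-zero diagonals again form an arithmetic sequence, and $U^\xi_{pad}(\hat B)=\sum\texttt{Rot}(\hat B;\cdot)$ without masking. The only difference is that here the "unit" being permuted is a $1\times d$ row vector rather than a $d\times 1$ column vector, so the part of $U^\xi_{pad}(\hat B)$ that coincides with $U^\xi(\hat B)$ is a block of \emph{$d$} columns (one full row-vector unit) rather than a single column. I would run the same index computation to confirm that the shifts other than the identity deposit the nonzero band of $\hat B$ strictly into later $d$-column blocks, so the first $d$ columns of $U^\xi_{pad}(\hat B)$ equal those of $U^\xi(\hat B)$.

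The main obstacle I anticipate is purely the index arithmetic: one must pin down the exact row/column coordinates of the nonzero entries of $\hat A$ and $\hat B$, track where each diagonal shift sends them under reduction modulo the encoding length, and show cleanly that (i) exactly one shift hits the first (block of) column(s), and (ii) for that shift, padding vs. masking makes no difference on those positions because the source rows are already the nonzero rows of $\hat A$/$\hat B$. Everything else is a direct comparison of the corresponding vector entries; once the "only the identity shift reaches the first block-column" claim is established, the theorem follows immediately. I would present this via the row-coordinate formulas for the non-zero entries on each diagonal of $U^\gamma$ and $U^\xi$, which make the claim a short modular-arithmetic check.
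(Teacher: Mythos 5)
There is a genuine error in the central bookkeeping claim, and it is exactly the step on which the whole argument rests. You correctly identify the first column of $U^\gamma(\hat A)$ as the full stacked vector $[\mathbf a_0;\mathbf a_1;\dots;\mathbf a_{d-1}]$, but you then assert that only the identity shift ($k=0$) deposits anything into that column, the shifts $k\ge 1$ landing entirely in other columns. If that were true, the first column of $U^\gamma_{pad}(\hat A)=\sum_{k}\texttt{Rot}(\hat A;-(d^2-1)k)$ would contain only $\mathbf a_0$ in its top $d$ slots and zeros below, which is \emph{not} the first column of $U^\gamma(\hat A)$; your claimed mechanism would contradict the theorem rather than prove it. The correct index computation is the opposite: writing the nonzero entry $A[j,i]$ at flat position $jd+i$, the rotation by $-(d^2-1)k$ sends it to position $jd+i+k(d^2-1)$, which lies in column $0$ (i.e., is $\equiv 0 \pmod d$) precisely when $i=k$, landing at position $kd^2+jd$, i.e.\ row $kd+j$ of column $0$. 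So \emph{every} shift $k$ contributes exactly one unit, namely $\mathbf a_k$, to the first column, placed at its correct transposed location, and it is the \emph{sum over all $d$ shifts} that assembles the full first column; the masking in $U^\gamma$ only removes the spurious entries that the other units deposit in columns $1,\dots,d-1$. This is the argument the paper gives. Your intermediate statement that the shift $-(d^2-1)k$ sends the whole nonzero band of $\hat A$ into the $k$-th block-column is also false: each shift scatters the units $\mathbf a_i$ across the columns $(i-k)\bmod d$.

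The same defect carries over to your treatment of $U^\xi$: for $\hat B$ (a $d\times d^2$ matrix, row length $d^2$), the rotation by $-d(d-1)k$ sends the entry of $\mathbf b_i$ at position $id+t$ to $(i-k)d+t \pmod{d^2}$ within its row block, so it reaches the first $d$ scalar columns exactly when $i=k$; again all $d$ shifts contribute, one unit each, and their sum gives the first block-column of $U^\xi(\hat B)$. To repair the proof you should replace the claim ``only the identity shift reaches the first (block of) column(s)'' with ``each shift $k$ reaches the first (block-)column with exactly the unit indexed $k$, at the row required by the transposition, and with nothing else,'' and then conclude by summation. The rest of your outline (the description of $U^\gamma(\hat A)$, the role of masking, and the mirror treatment of row-vector units for $U^\xi$) is consistent with the paper once this is fixed.
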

\begin{proof}
\textcolor{black}{Recalling Section \ref{sec: DMP4HMTHMM}, $U^\gamma$ represents a specific instantiation of $U^t$, wherein the non-zero diagonal indi
ces form an arithmetic sequence $\{-(d^2 - 1) \cdot i \mid -d < i < d\}$. However, given that the input matrix $\hat{A}$ to $U^\gamma$ has all-zero entries from row $d$ to row $d^2 - 1$, the set of non-zero diagonals is reduced to $\{-(d^2 - 1) \cdot i \mid 0 \leq i < d\}$.}
\textcolor{black}{$U^\gamma_{pad}(\hat{A})$ is constructed by padding these non-zero diagonals with $1$s, thereby rotating the entries of $\hat{A}$ by $-i(d^2 - 1)$ positions for $0 \leq i < d$ and summing the outcomes of these rotations. By treating $d$-dimensional column vectors in $\hat{A}$ as basic units, the shift by $-i(d^2 - 1)$ ensures that the $i$-th unit in row $0$ is precisely rotated to row $i$ and column $0$, where other positions of the $0$-th column remain zeros. Thus, the accumulation of these rotations yields the $0$-th column of $U^\gamma_{pad}(\hat{A})$ identical to that of $U^\gamma(\hat{A})$.}

\textcolor{black}{$U^\xi$ is another an instance of $U^t$, whose nonzero digonal indices are in the arithmetic sequence $\{-d(d-1) \cdot i \mid -d < i < d\}$. Similarly, this set of nonzero diagonals is reduced to $\{-d(d-1) \cdot i \mid 0\leq i < d\}$ as the input matrix $\hat{B}$ also contains only zeros from row $1$ to $d$ as a $d\times d^2$ matrix. $U^\xi_{pad}(\hat{B})$ is constructed by padding these non-zero diagonals with $1$s, rotating the entries of $\hat{B}$ by $-d(d-1)\cdot i$ positions for $0\leq i <d$ and aggregating these rotations. By treating $d$-dimensional row vectors in $\hat{B}$ as basic units, the shift by $-d(d-1)\cdot i$ ensures that the $i$-th unit in row $0$ is rotated to row $i$ and column $0$, where other positions of the $0$-th column remain zeros. Thus, summing all such rotations yields the $0$-th column in $U^\xi_{pad}(\hat{B})$ identical to that of $U^\xi(\hat{B})$.}
\end{proof}

\textcolor{black}{Theorem \ref{thm: 1-pad} demonstrates the correctness of the $1$-padding version of $U^\gamma$ and $U^\xi$. We now focus on the $1$-padding of their respective decompositions. Following the decomposition pattern of $U^t$, the decomposition of $U^\gamma$ is typically given as:}
\begin{equation} \label{eq: U^gammaDmp}
\left[
\begin{matrix}
    {\hat{A}'_{0,0}} & O \\
    {\hat{A}'_{0,1}} & O \\
\end{matrix}
\right]
\leftarrow U^\gamma_{R_1}(\hat{A}), 
\left[
\begin{matrix}
    U^\gamma({\hat{A}'_{0,0}}) & O   \\
    U^\gamma({\hat{A}'_{0,1}}) & O   \\
\end{matrix}
\right]
\leftarrow U^\gamma_{L_1}\left(U^\gamma_{R_1}(\hat{A})\right),
\end{equation}
\textcolor{black}{where $O$ denotes the matrix with only zeros. Let $U^\gamma_{pad_{R_1}}$ be the matrix obtained by filling all non-zero diagonals of $U^\gamma_{R_1}$ with $1$. $U^\gamma_{pad_{R_1}}(\hat{A})$ performs a rotation of step $-(d^2-1)\cdot d/2 \mod{d^3}$ on $\hat{A}$ and adds the rotated results to the original $\hat{A}$. This yields the structure:}
\begin{equation} \label{eq: U^gammaDmpPad}
\left[
\begin{matrix}
    {\hat{A}'_{0,0}} & * \\
    {\hat{A}'_{0,1}} & * \\
\end{matrix}
\right]
\leftarrow U^\gamma_{pad_{R_1}}(\hat{A}), 
\end{equation}
\textcolor{black}{where "*" indicates the block containing zeros and meaningless values. Similarly, for $1\leq \ell <\log{d}$, a $U^\gamma_{pad_{R_\ell}}$ rotates its input with distance $-(d^2-1)\cdot d/2^{\ell} \mod{d^3}$ and adds the rotated result to the original input. The behaviour of such $U^\gamma_{pad_{R_\ell}}$ is interpreted as follows: it evenly partitions each block of the input matrix into $4$ subblocks and transposes the top two subblocks within each leftmost block into the first column, while the remaining subblocks are left with either zeros or garbage values. Note that the initial block is the input matrix $\hat{A}$ itself. Then, $U^\gamma_{pad_{L_\ell}}$ is defined as performing $U^\gamma_{L}$ on each leftmost $d^2/2^\ell\times d/2^\ell$ block of its input while zeroing out all other positions. This operation costs $1$ multiplication depth.}

\textcolor{black}{The $1$-padding version of the decomposed $U^\xi$ is defined exactly the same way as $U^\gamma$, except that the size of the input matrix is $d\times d^2$, whereas the size of the input matrix of $U^\gamma$ is considered as $d^2\times d$.}  

\begin{figure}
    \centering
    \includegraphics[width=0.45\textwidth]{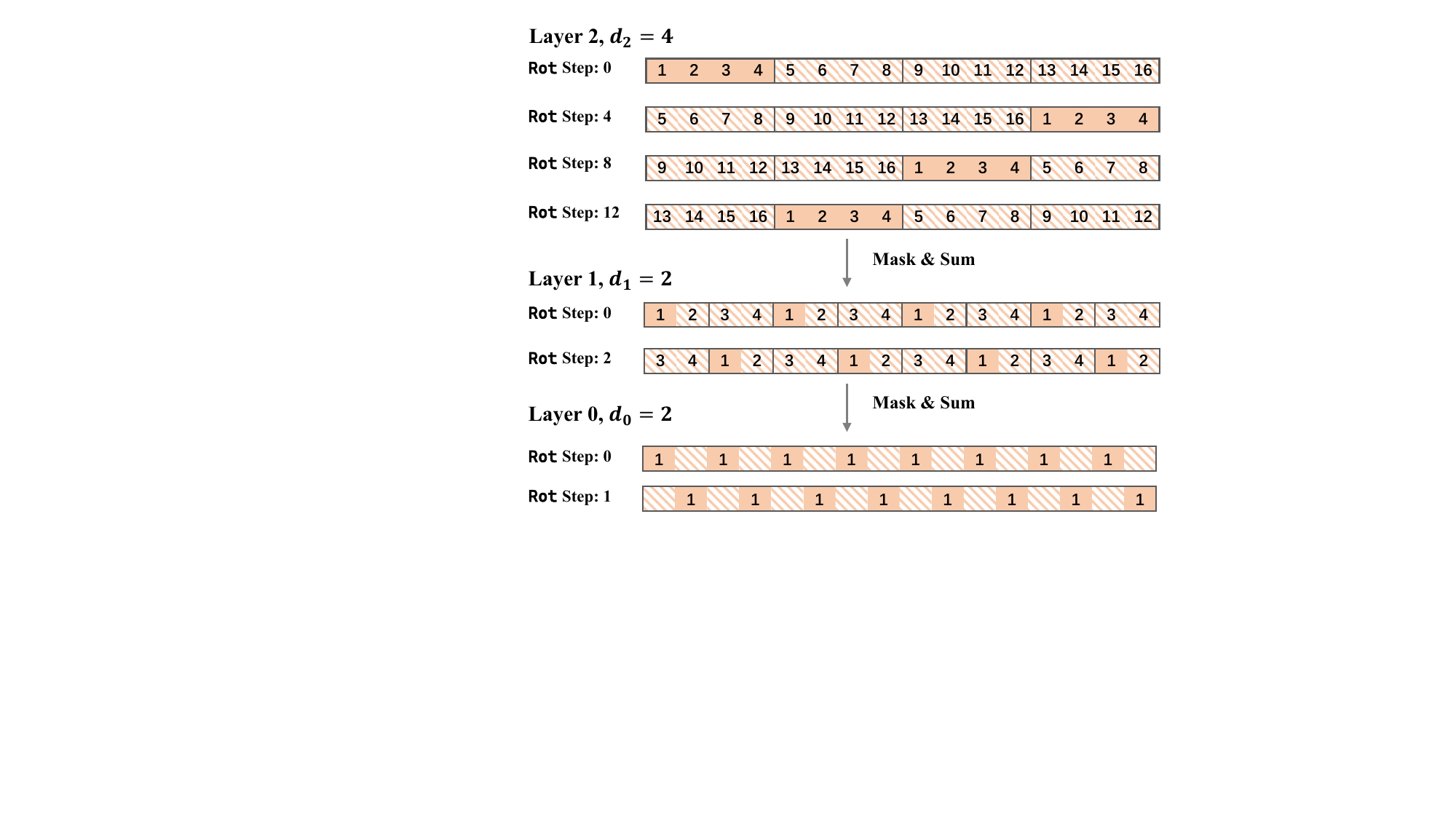}
    \caption{Optimized ciphertext replication by multi-layer construction (using $\ell=2,d=4\times 2\times 2$ as an example)}
    \label{fig: HMM_app}
\end{figure}

\subsection{Optimization of Replication}\label{app: replication}

In Section \ref{sec: HMMDmp_space}, we explore providing HMM with a configurable encoding space of $d^2d'$. As $d'$ decreases, the rotations for replication increases. We propose a method to reduce this complexity.
Consider $M = [\mathbf{m}_{0}, \mathbf{m}_{1}, \ldots, \mathbf{m}_{d-1}]$. To fill the entire $M$ with some $\mathbf{m}_{k}$, a single element replication function $\texttt{SRep}(\mathbf{u}_k\odot M;d)$ is typically applied, defined as follows:
\begin{equation*}
\begin{aligned}
& \tilde{M}_k = \left[\mathbf{0}, \ldots, \mathbf{0}, \mathbf{m}_{k}, \mathbf{0}, \ldots, \mathbf{0},\right] \leftarrow \mathbf{u}_k \odot M, \\
& \tilde{M}_k = \left[\mathbf{m}_{k}, \ldots, \mathbf{m}_{k}\right] \leftarrow  \texttt{SRep}(\tilde{M}_k; d): \\
& \tilde{M}_k \leftarrow \tilde{M}_k + \texttt{Rot}(\tilde{M}_k;i), \text{for} \phantom{a} \log{d}-1 \geq i \geq 0.
\end{aligned}
\end{equation*}
This method requires $O(t \log d)$ rotations to replicate $t$ elements. To improve efficiency, we decompose $d = d_1 d_0$ and perform rotations with step sizes at least $d_0$ prior to masking:
\begin{equation}\label{eq: replication}
\begin{aligned}
&\tilde{M}_k \leftarrow \texttt{SRep}\left(\sum^{d_1-1}_{i=0}\mathbf{u}_{k,i} \odot \texttt{Rot}\left(M;d_0 i\right);d_0\right),
\end{aligned}
\end{equation}
where $\mathbf{u}_{k,i}$ extracts $\mathbf{m}_k$ from $\texttt{Rot}\left(M; d_0 i\right)$. The pre-rotated components $\texttt{Rot}(M;d_0i)$ can be cached for all $d$ elements' computation but only costs $d_1$ rotations. The total number of rotations is $\left(\frac{d}{d_0} + d \log{d_0}\right)$, and the multiplication depth remains $1$. 
For HMM regarding $ M $ as a matrix, this method can directly apply to row-wise replication. Extending the pre-rotated steps to $ \{\pm d_0 i \mid 0 \leq i < d_1\} $ enables column-wise replication with a number of $ \left(\frac{2d}{d_0} + d \log d_0\right) $ rotations. Therefore, the number of rotations for our HMM in Section \ref{sec: HMMDmp_space} is reduced to $\left(\frac{3d}{md_0}+\frac{2d}{m}\log{d_0}\right)$ for $d'=1$.  
From an asymptotic perspective, this rotation complexity improves upon a range of schemes with $O(d \log d)$ complexity under $d^2$ encoding space \cite{cheon2018multi, huang2023secure, aikata2024secure}.
The corresponding performance, along with a comparison to the baseline version without this optimization, is presented in Table \ref{tb: HMM_app}.




This method can be extended to a multi-layer operation with depth $\ell$.
%
Let $d = d_{\ell} \ldots d_1 d_0$ and $p_i = \prod_{j=0}^{i-1}d_j$, the equation \ref{eq: replication} can be re-written by:
\begin{equation}
\begin{aligned}
&\tilde{M}_{\lfloor k/p_\ell \rfloor}^{(\ell)} 
\leftarrow \sum^{d_\ell-1}_{i=0}\mathbf{u}^{(\ell)}_{\lfloor k/p_\ell \rfloor,i} \odot \texttt{Rot}\left(M; p_\ell i\right), \\
&\tilde{M}_{\lfloor k/p_j \rfloor}^{(j)} 
\leftarrow \sum^{d_j-1}_{i=0}\mathbf{u}^{(j)}_{\lfloor k/p_j \rfloor,i} \odot \texttt{Rot}\left(\tilde{M}_{\lfloor k/p_{j+1} \rfloor}^{(j+1)}; p_j i\right), \\
&\tilde{M}_{k}^{(1)} 
\leftarrow \sum^{d_1-1}_{i=0}\mathbf{u}^{(1)}_{k,i} \odot \texttt{Rot}\left(\tilde{M}_{\lfloor k/p_2 \rfloor}^{(2)}; p_1 i\right), \\
&\tilde{M}_k
\leftarrow \texttt{SRep}\left(\tilde{M}_k^{(1)};d_0\right).
\end{aligned}
\end{equation}
where $\ell > j > 1$. For the $i$-th layer ($0 < i$), $d_{\ell}\ldots d_{i+1} (d_i - 1)$ rotations are required, yielding a rotation complexity of $\left(\frac{d}{d_0} + d \log{d_0}\right)$ in total.
A visualization of the computation process, using \( d = 4 \times 2 \times 2 = 16 \) as an example, is shown in Figure \ref{fig: HMM_app}.
Although this extension does not further reduce rotation complexity across depths $\ell$, decomposing $d$ into appropriate factors $d = d_{\ell} \ldots d_1 d_0$ reduces the number of rotation keys and multiplications. For instance, when $d_0 = 1$, the minimum rotational complexity $O(d)$ is achieved, and if $\ell = 1$, the rotation keys will be $O(d)$, but the complexity of $\texttt{CMult}$ will increase to $O(d^2)$. However, if $\ell > 1$ and $d_i > 1$ for $0 < i \leq \ell$, the rotation keys are reduced to $O(d_{\ell} + \ldots + d_2 + d_1)$. \textcolor{black}{Meanwhile, $d$ ciphertext is generated at the lowest layer ($1$-th layer) and each ciphertext requires $d_1$ $\texttt{CMult}$; $\prod_{j=i}^{\ell}d_j$ ciphertext is generated at the $i$-th layer and each ciphertext requires $d_i$ $\texttt{CMult}$; therefore $\texttt{CMult}$ totals $dd_1+\sum_{i=2}^{\ell}(d_i\prod_{j=i}^{\ell}d_j)$.}




\section{Full-depth Ideal Decomposability of Permutations in HMM by \cite{jiang2018secure}} \label{app: jiangdmp}

Jiang \textit{et al.} introduced two important permutations in their design of the HMM: $U^\sigma$ and $U^\tau$. As illustrated in Figures \ref{fig:jiangdmp}, $U^\sigma$ transforms the diagonals of the input matrix into columns of the matrix, while $U^\tau$ converts the diagonals into rows. Hence, we also refer to $U^\sigma$ as the Diag-to-Col transformation and $U^\tau$ as the Diag-to-Row transformation. 
These two permutations account for \( 5\sqrt{d} \) rotations in the HMM (which has a total of \( 3d + 5\sqrt{d} \) rotations). The full-depth ideal decomposition may reduce this part of complexity to \( 3\log{d} \) rotations. 
We also notice that these two permutations have potential applications in bridging homomorphic matrix operations that employ different encoding schemes. Some matrix operation schemes adopt diagonal encoding, which typically maps a single diagonal (using either a single entry or a block as the basic unit) of the matrix into a single ciphertext \cite{halevi2014algorithms,huang2021more}. If the dataset is initially partitioned and encrypted under row/column-major order, \( U^\sigma \) and \( U^\tau \) can be used to reorganize the entries along the diagonal into a row/column, facilitating the extraction of different diagonals into separate ciphertexts.

%

\begin{figure}[h]
    \centering
    \includegraphics[width=1\linewidth]{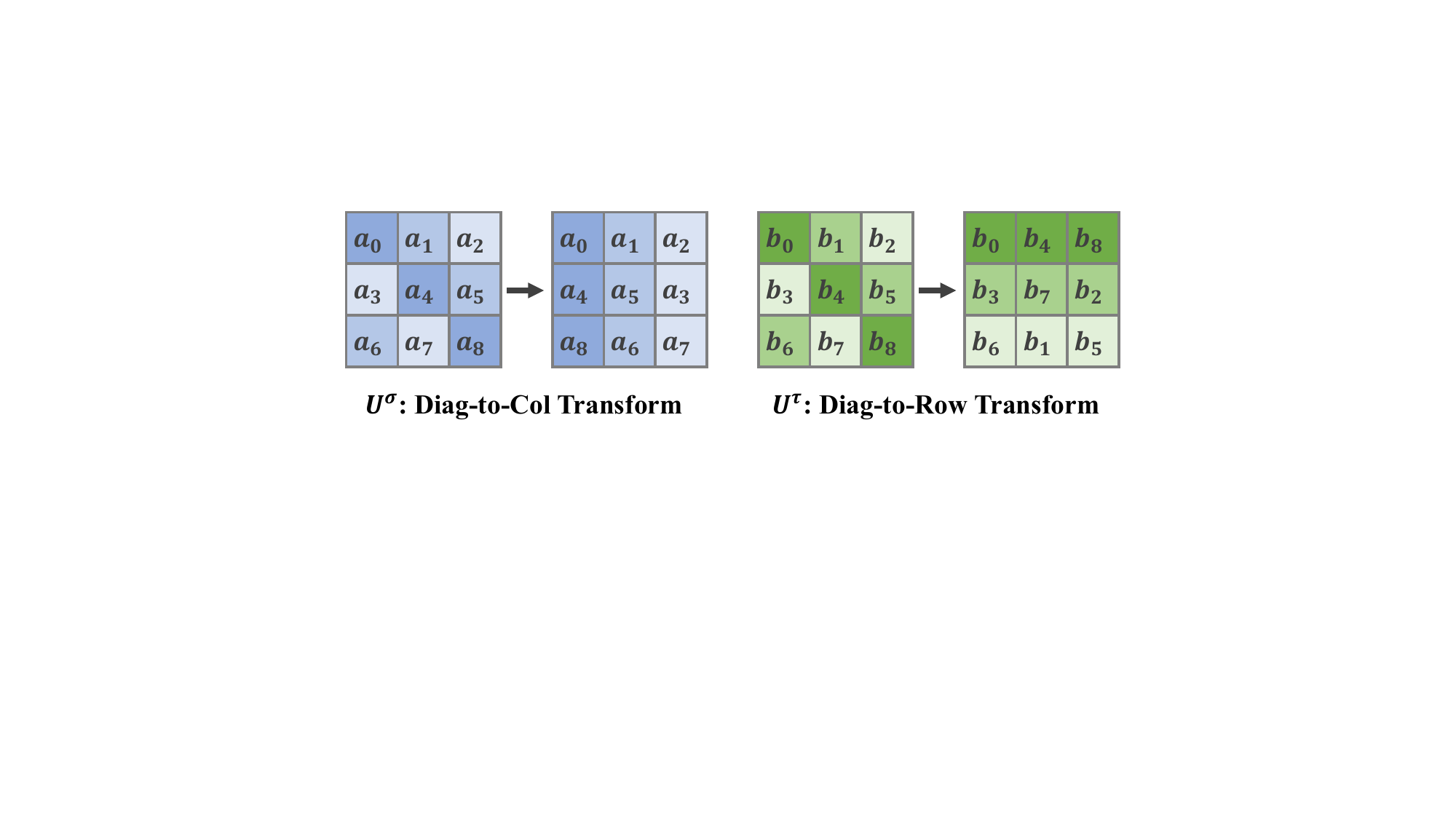}
    \caption{Demonstration of $U^\sigma$ and $U^\tau$}
    \label{fig:jiangdmp}
\end{figure}

\begin{figure*}[t]
    \centering
        \subfloat[$16\times 16$ $U^\sigma$ with non-zero entries painted in dark blue.]{\includegraphics[width=0.5\columnwidth]{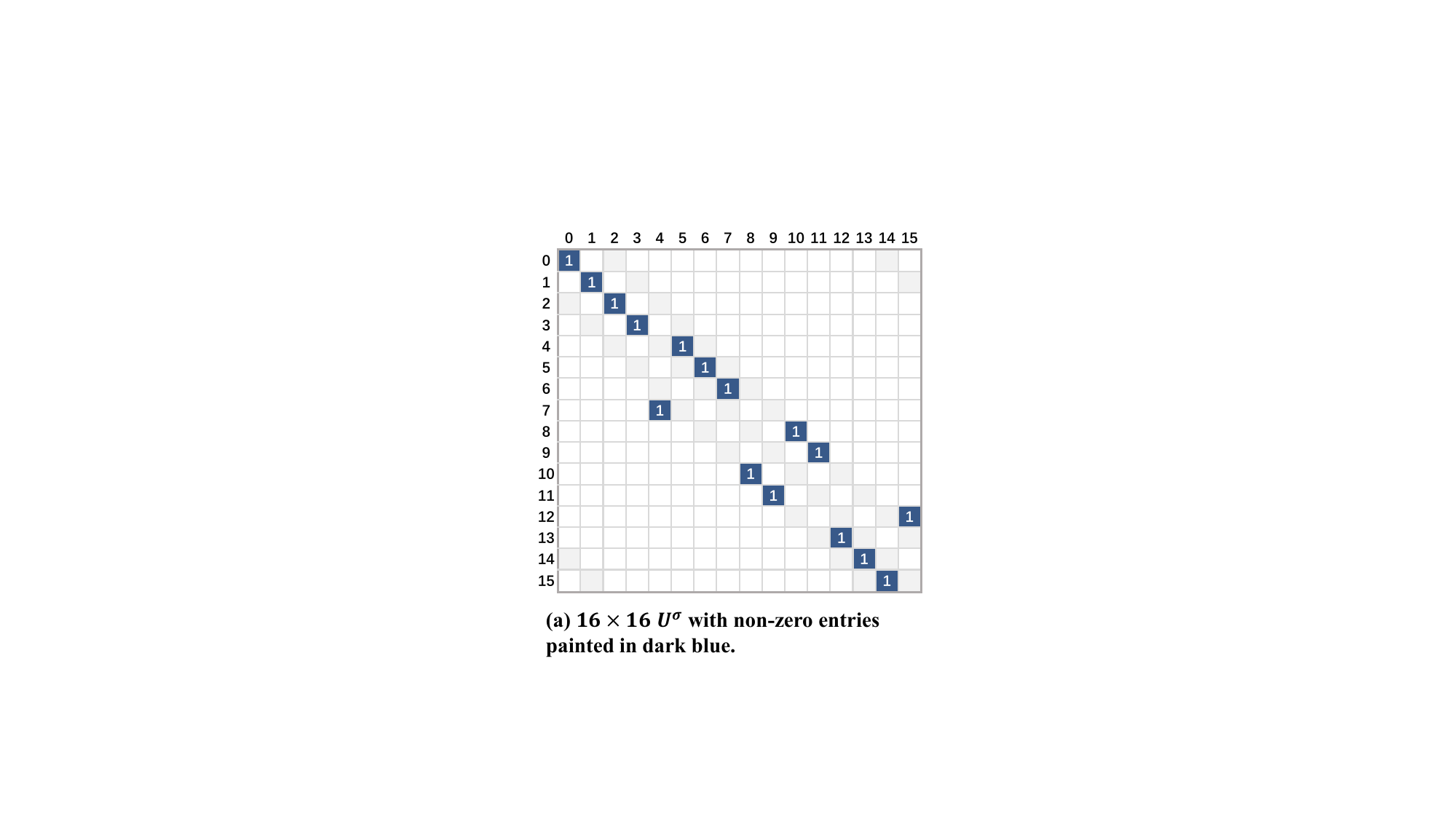}}
        \hfil 
        \subfloat[Step 1. Red entries denote the non-zero entries in $U^\sigma_R$ after Step 1.]{\includegraphics[width=0.5\columnwidth]{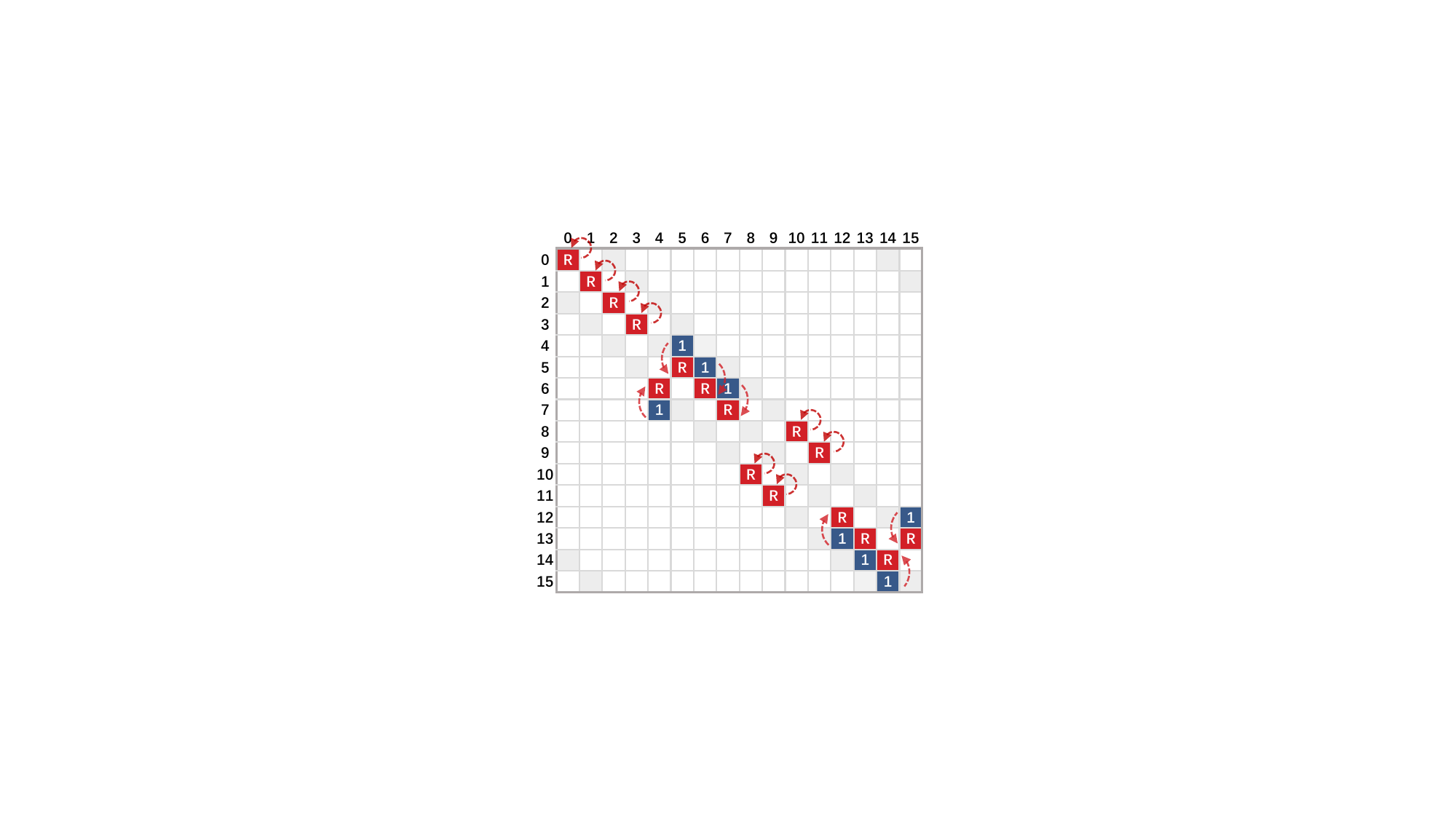}}
        \hfil
        \subfloat[Step 2. $U_R^\sigma$'s final view. Conflicting entries moved away are painted in grey]{\includegraphics[width=0.5\columnwidth]{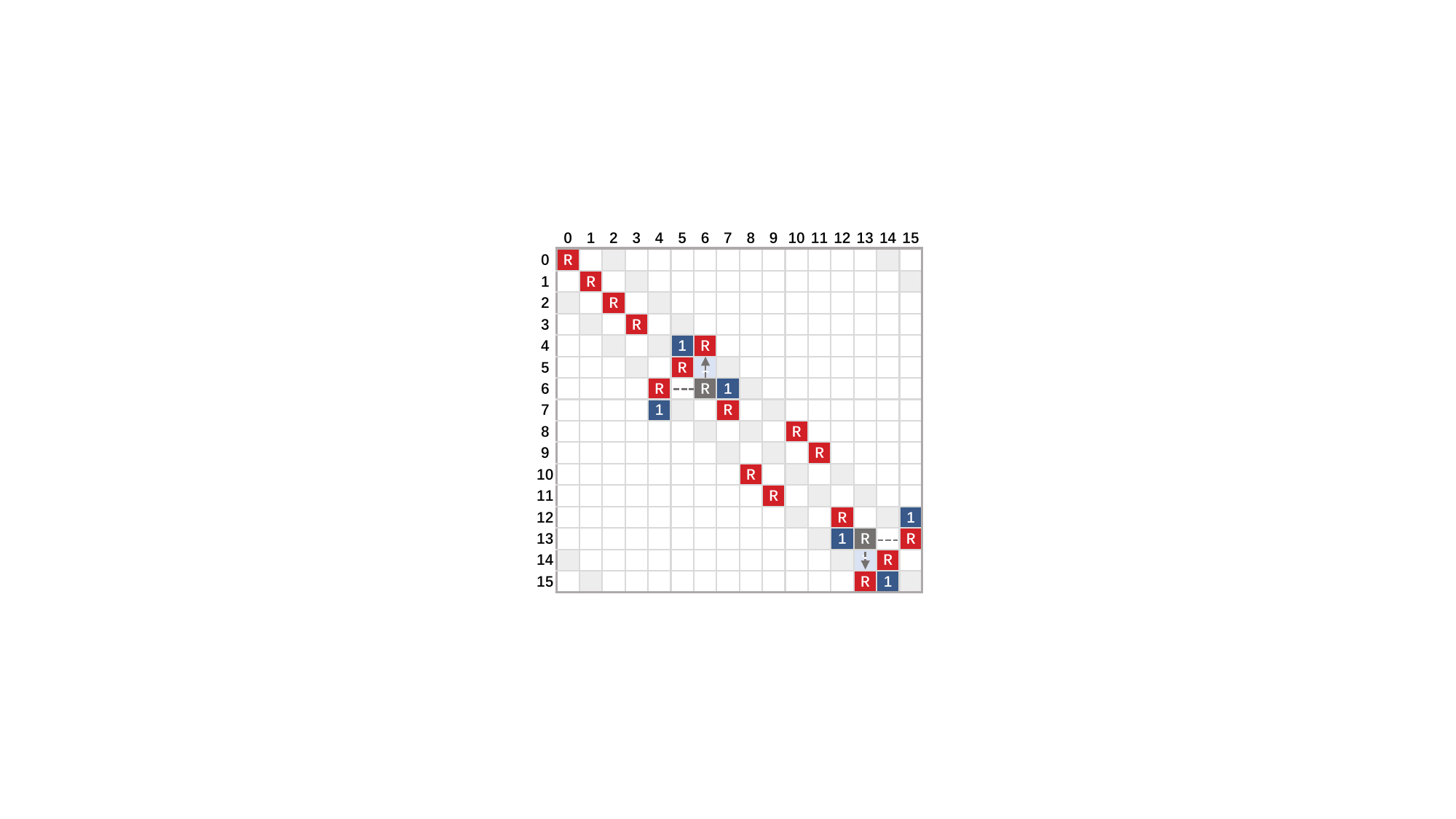}}
        \hfil 
        \subfloat[Step 3. $U^\sigma_L$ reconstructed by $U^\sigma_R$ and $U^\sigma$.]{\includegraphics[width=0.5\columnwidth]{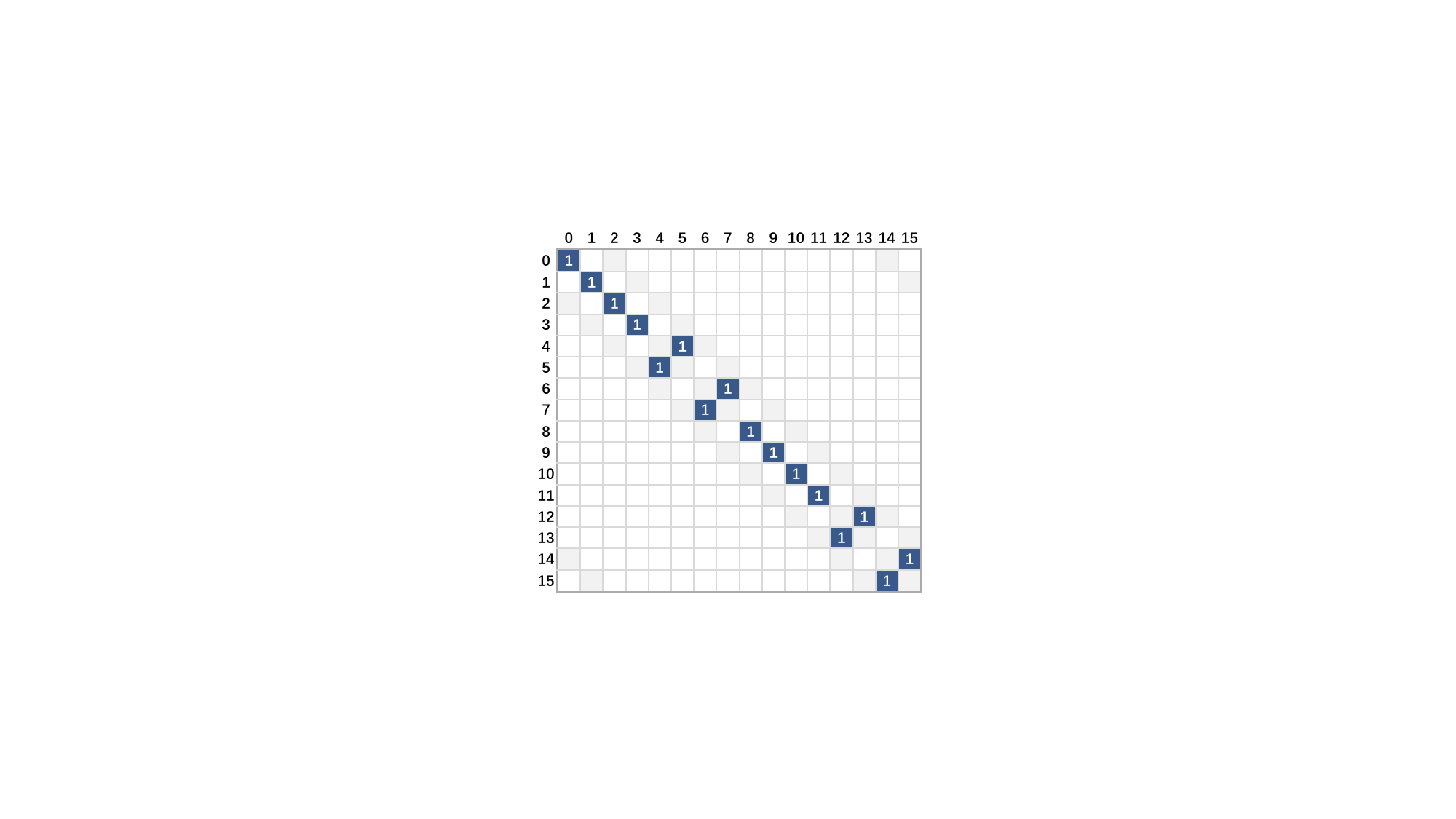}}
    \caption{Applying depth-1 ideal decomposition search to $U^\sigma$ of size $16\times 16$.}
    \label{fig: depth-1DmpSigma}
\end{figure*}

\begin{figure}[t]
    \centering
    \includegraphics[width=1\linewidth]{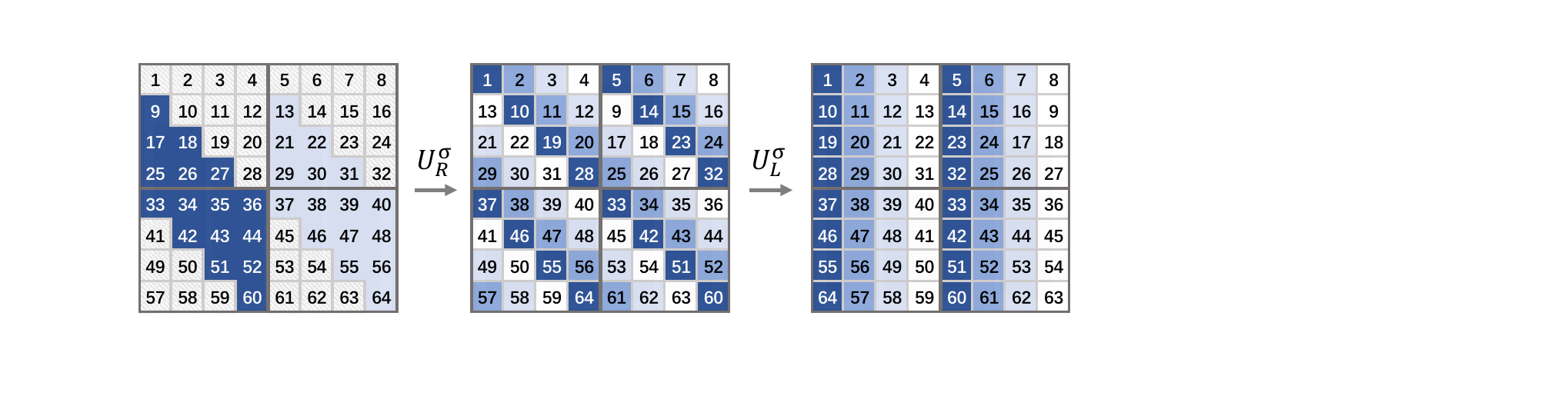}
    \caption{Applying $U_R^\sigma$ and $U_L^\sigma$ to an $8\times 8$ matrix}
    \label{fig:sigma}
\end{figure}

\begin{figure*}[h]
    \centering
        \subfloat[$16\times 16$ $U^\tau$ with non-zero entries painted in dark blue.]{\includegraphics[width=0.5\columnwidth]{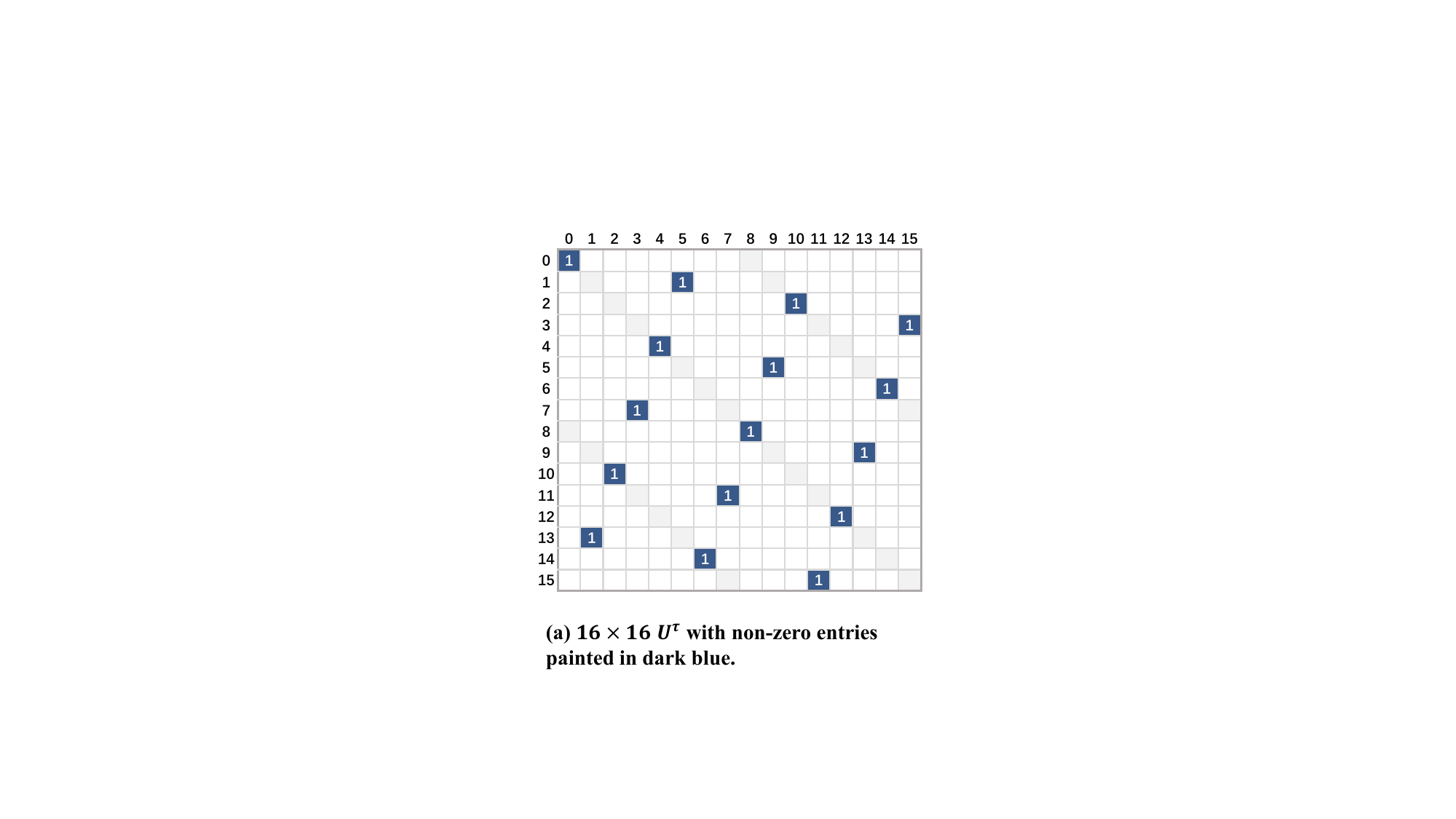}}
        \hfil 
        \subfloat[Step 1. Red entries denote the non-zero entries in $U^\tau_R$ after Step 1.]{\includegraphics[width=0.5\columnwidth]{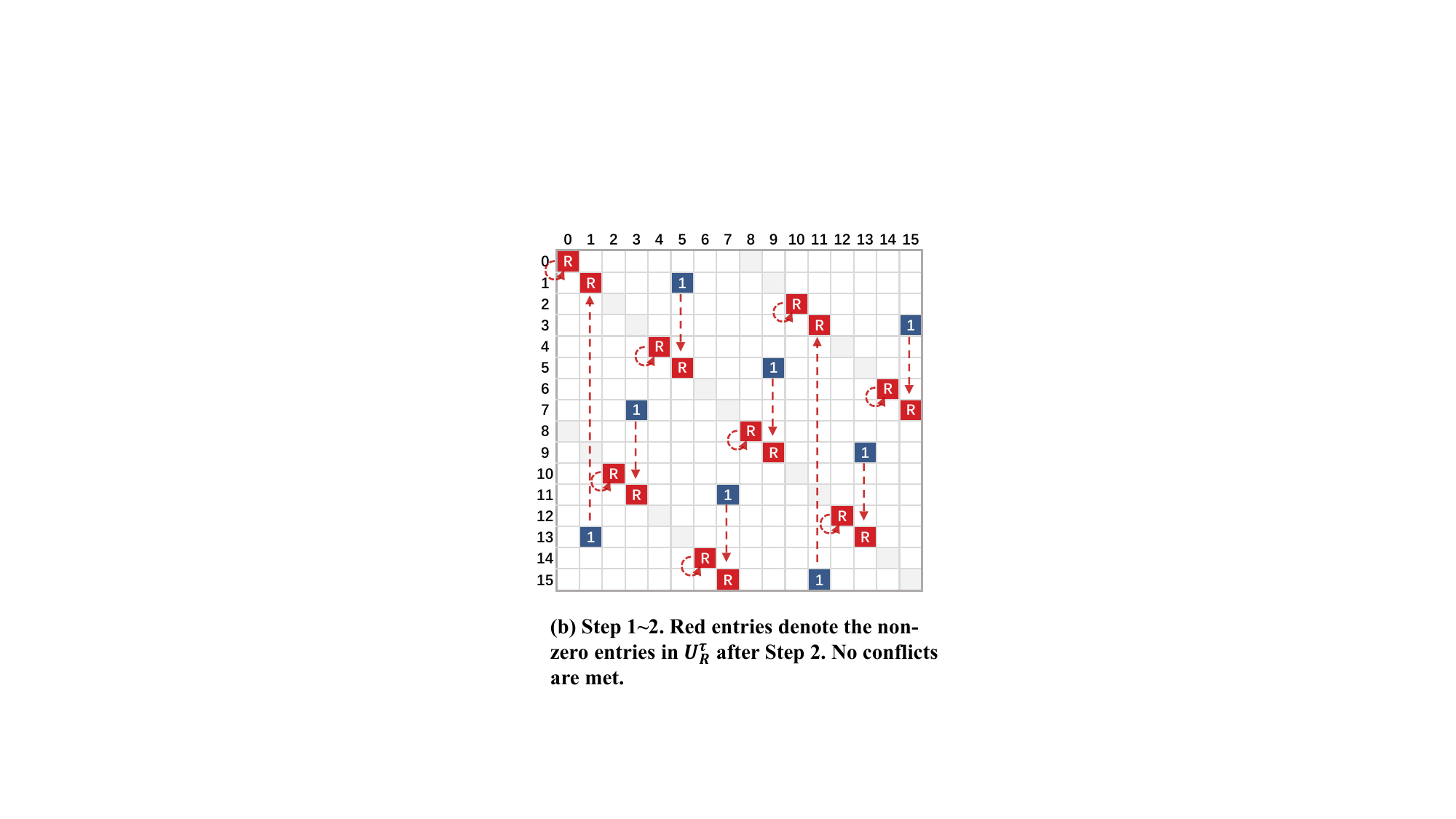}}
        \hfil
        \subfloat[Step 2. No conflicts are generated in the previous step.]{\includegraphics[width=0.5\columnwidth]{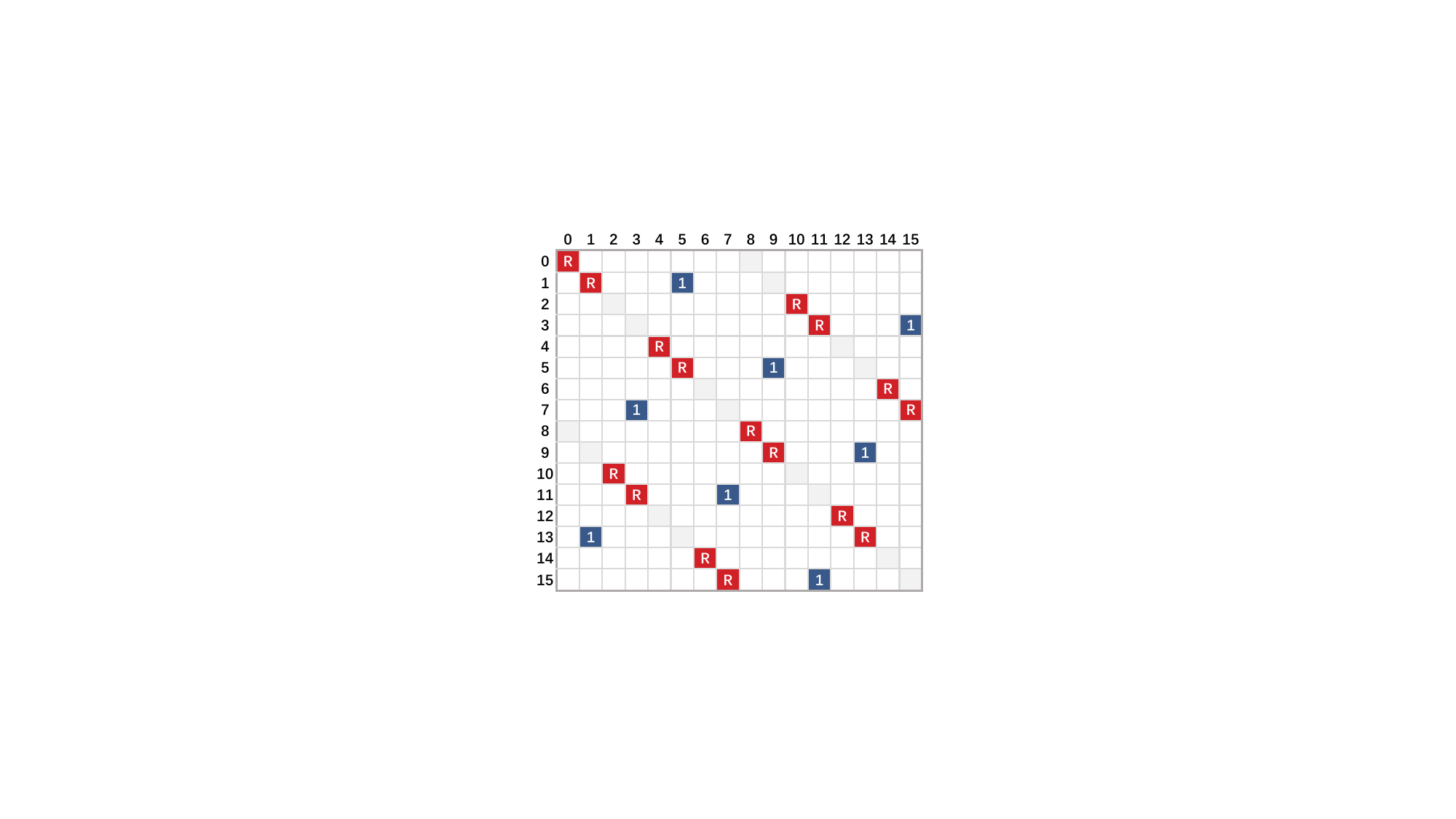}}
        \hfil 
        \subfloat[Step 3. $U^\tau_L$ reconstructed by $U^\tau_R$ and $U^\tau$]{\includegraphics[width=0.5\columnwidth]{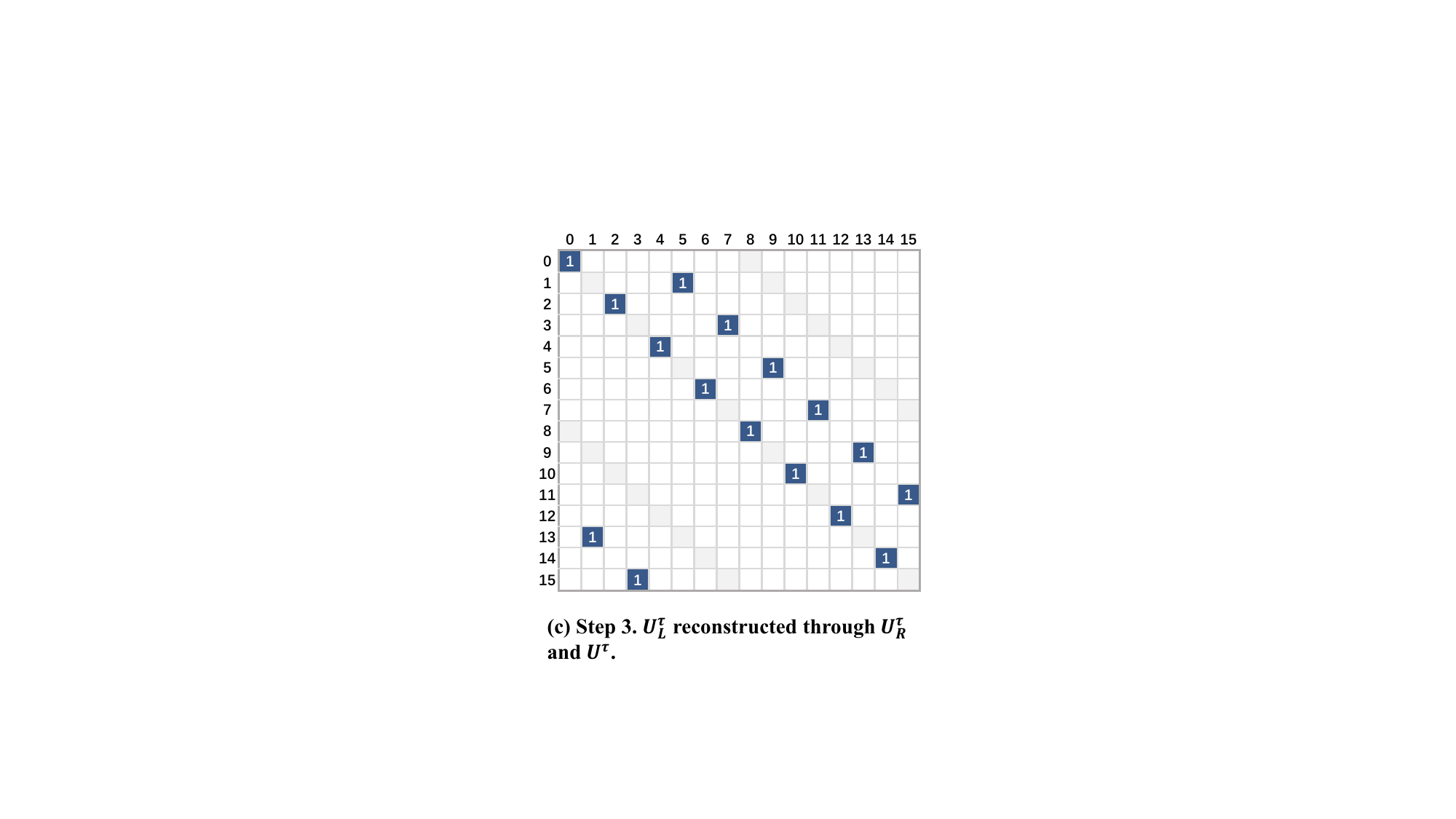}}
    \caption{Applying depth-1 ideal decomposition search to $U^\tau$ of size $16\times 16$}
    \label{fig: depth-1DmpTau}
\end{figure*}

\subsection{Diag-to-Col Transformation}

For any $U^\sigma$ with a power-of-2 dimension, the search algorithm finds one of its depth-$1$ ideal decompositions, denoted as $U^\sigma = U_L^\sigma U_R^\sigma$. Figure \ref{fig: depth-1DmpSigma} shows the decomposition on the $16\times 16$ $U^\sigma$. We can deduce the full-depth decomposability from this depth-$1$ solution. For any $d\times d$ matrix $A$ where $d$ is a power of $2$, the effects of applying $U^\sigma_R$ and $U^\sigma_L$ to $A$ can be described as follows: 
\begin{itemize}[leftmargin=1.1em]
    \item $U^\sigma_R$ evenly partitions $A$ into $4$ blocks of size $d/2\times d/2$ (as indicated in Equation \ref{eq: A_org}), and swaps the lower triangular parts of blocks $A'_{0,0}$ and $A'_{0,1}$, as well as the upper triangular parts of $A'_{1,0}$ and $A'_{1,1}$. 
    \item $U_L^\sigma$ performs a block-wise diag-to-col transformation on $A$. That is: $U^\sigma(A'_{0,0}), U^\sigma(A'_{0,1}), U^\sigma(A'_{1,0}), U^\sigma(A'_{1,1})$.
\end{itemize}
This process, illustrated in Figure \ref{fig:sigma}, can be recursively applied to each block, thus deriving a full-depth ideal decomposition:  
\begin{equation}
U^\sigma(A) = U_{L_\ell}^\sigma\Big( U_{R_\ell}^\sigma \Big( \ldots \Big( U_{R_1}^\sigma(A) \Big) \Big) \Big),
\end{equation}
where \( 1\leq \ell < \log{d} \). For any \( U_{R_i}^\sigma \) (\( 0 < i < \ell \)), it contains only $3$ diagonals, while \( U_{L_\ell}^\sigma \) has only \( d/2^{\ell-1} - 1 \) diagonals (as established by Theorems \ref{thm: 2BS} and \ref{thm: BS4S}).

When $A$'s dimension is not a power of 2, the partition method of Equations \ref{eq: A_divide0} and \ref{eq: A_divide} is employed to handle blocks with odd dimensions. For the overlapping element \( A[d', d'] \) between \( B'_{0,0} \) and \( B'_{1,1} \), \( U^\sigma(B'_{0,0}) \) requires it to shift leftward, while \( U^\sigma(B'_{1,1}) \) imposes no movement constraints on it. Consequently, this overlap causes no conflicts in element movement. Since each round of decomposition generates at most two distinct block sizes, the corresponding \( U_R^\sigma \) in each round involves no more than $5$ non-zero diagonals. 

\subsection{Diag-to-Row Transformation}

For the \( U^\tau \) applying to any \( d \times d \) matrix \( A \), its diagonal distribution is \( [0, d^2 - d] \) with a common difference of \( d \). Such a distribution is not symmetric around the zero diagonal. In this case, we may adjust the objective of the search algorithm: let \( U_R^\tau \) only contain non-zero entries at the $0$-th and $r_c$-th diagonals for $r_c=d \cdot \lceil (d - 1)/2 \rceil$, such that the diagonal distribution of \( U_L^\tau \) converges approximately to \( [0, (d^2 - d)/2] \). 

Figure \ref{fig: depth-1DmpTau} presents the depth-1 decomposition found by the search algorithm. For \( d \) as a power of two, the effects of applying $U_R^\tau$ and $U_L^\tau$ to $A$ can be described as follows:
\begin{itemize}[leftmargin = 1.1em]
    \item \( U^\tau_R \) partitions \( A \) into left and right halves of size $d \times d/2$: \( A = [A'_0, A'_1] \), and rotates the \( A'_1 \) by \( \frac{d}{2} \cdot d \) steps, involving only two diagonals.
    \item \( U_L^\tau \) applies block-wise diag-to-row transformation to $A$: \( U^\tau(A'_0) \) and \( U^\tau(A'_1) \).
\end{itemize}
By recursively applying the partition and permutations mentioned above to each block, a full-depth ideal decomposition is obtained: 
\begin{equation}
U^\tau(A) = U_{L_\ell}^\tau\Big( U_{R_\ell}^\tau \Big( \ldots \Big( U_{R_1}^\tau(A) \Big) \Big) \Big),  
\end{equation}
where \( \ell < \log{d} \). For any \( U_{R_i}^\tau \) (\( 0 < i < \ell \)), it contains only $2$ diagonals, while \( U_{L_\ell}^\tau \) has only \( d/2^\ell \) diagonals.
When \( d \) is not a power of 2, $A$ is divided into $3$ submatrices. For $d=2d'+1$, partition $A$ into \( A = [A'_{0}, A'_{1}, A'_{2}] \), where \( A'_{0} \) and \( A'_{1} \) are \( d \times d' \) submatrices, and \( A'_{2} \) is a \( d \times 1 \) column vector.
\( U_R^\tau \) rotates \( A'_{1} \) by \( d' \cdot d \) steps and \( A'_{2} \) by \( 2d' \cdot d \) steps, resulting in $3$ diagonals. Meanwhile, \( U_L^\tau \) applies the diag-to-row transformation only within the submatrices \( A'_{0} \) and \( A'_{1} \): \( U^\tau(A'_{0}) \) and \( U^\tau(A'_{1}) \).

\section{Section 5 Supplemental Materials}

\subsection{Correctness of the Multi-group Network Construction}\label{app: NWcorrect}

\textcolor{black}{We prove that given any permutation $p$ of length $n$ and any $n$-dimensional input vector $\mathbf{v}$, the proposed multi-group network structure correctly outputs $p(\mathbf{v})$ with at most $\lceil \log{(\max\{r_{org}^{(i)}\})} \rceil$ network levels for $0 \leq i < n$, where $r_{org}^{(i)}$ is the initial \textit{remaining rotation distance} of the $i$-th entry in $\mathbf{v}$ (Recall Section \ref{sec: HVP}).}  

\textcolor{black}{Let us keep in mind that the $i$-th entry of $\mathbf{v}$ will end up being the $i+r_{org}^{(i)}$-th entry of $p(\mathbf{v})$. At any point during the traversal of $\mathbf{v}$'s entries through the network, let $r_{rem}^{(i)}$ represent the $i$-th entry's current remaining rotation distance to its destination, initially set as $r_{org}^{(i)}$. Note that when $r_{rem}^{(i)} = 0$, the entry would have reached its destination.}  
\textcolor{black}{We now prove by induction that for any $0 \leq i < n$, the value $r_{rem}^{(i)}$ will be reduced to $0$ after at most $\lceil \log({\max\{r_{org}^{(i)} \mid 0 \leq i < n\})} \rceil$ levels.}

\textcolor{black}{When entries are initially passed from level $0$ to $1$, we have $r_{rem}^{(i)} = r_{org}^{(i)}$ for $0\leq i <n$. At this level, rotation nodes from all groups ensure that $r_{rem}^{(i)} \leftarrow r_{rem}^{(i)} - rot$ whenever $r_{rem}^{(i)} > rot$, where $rot = 2^{\lfloor \log{m} \rfloor}$ and $m = \max\{r_{org}^{(i)} \mid 0 \leq i < n\}$. Consequently, by the end of level 1, the maximum remaining rotation distance across all entries will have been reduced to less than $\max\{r_{org}^{(i)} \mid 0 \leq i < n\}/2$.}

\textcolor{black}{Assume inductively that upon entering level $k + 1$, the current maximum $r_{rem}^{(i)}$ is less than $\max\{r_{org}^{(i)} \mid 0 \leq i < n\} / 2^k$. Then, as in the base case, the rotation nodes apply a subtraction of $rot = 2^{\lfloor \log{m} \rfloor}$ to any $r_{rem}^{(i)} > rot$, where $m = \max\{r_{rem}^{(i)} \mid 0 \leq i < n\}$. Thus, by the end of level $k+1$, the maximum $r_{rem}^{(i)}$ is again halved, falling below $\max\{r_{org}^{(i)} \mid 0 \leq i < n\} / 2^{k+1}$.}

\textcolor{black}{Therefore, after $\ell = \lceil \log{(\max\{r_{org}^{(i)} \mid 0 \leq i < n\})} \rceil$ levels, we have $\max\{r_{rem}^{(i)} \mid 0 \leq i < n\} < \max\{r_{org}^{(i)}\mid 0\leq i <n\} / 2^\ell$, which equals zero.}

\subsection{Benes network-based Homomorphic Permutation Implementation}\label{app: benes}
The homomorphic permutation based on the Benes network is implemented with the following details: 
\begin{itemize}[leftmargin=1.1em]
    \item The Benes network generated for a permutation of length $n$ typically requires a great multiplication depth of $2\log{n}-1$ which is unlikely to be practical. Thus, we restrict its depth to $\log{n}-1$ using the optimal level-collapsing method proposed by Halevi and Shoup \cite{halevi2014algorithms}. This depth configuration aligns with our new network construction, which is convenient for comparison. The complexity comparison in Table \ref{tb: rotseachLv} is obtained with this setting. With depth $\log{n}-1$ being a bottom line, we further search for the best performance of Benes network-based homomorphic permutation by collapsing more levels. This leads to the results in Table \ref{tb: HVP}. 
    \item The linear transformation at each modulus level is computed using the double-hoisted BSGS algorithm proposed by Bossuat \textit{et al.} \cite{bossuat2021efficient}, which improves efficiency by reorganizing and restructuring sub-modules within the rotations.
    Therefore, the number of rotations in Table \ref{tb: rotseachLv} is obtained through 
    dividing the number of scalar multiplications for all rotation sub-models in the linear transformation by that of a single rotation.
    \item  The number of rotation keys for the Benes network generally grows as the levels collapse. Therefore we fix it to roughly $\log{n}$ for a fair comparison with our network. These keys are selected dynamically to minimize the rotation complexity of the Benes network. 
    

\end{itemize}

\begin{table}
\setlength{\tabcolsep}{4.5pt}
\begin{tabular*}{\columnwidth}{@{}llllllllll@{}}
\toprule
\phantom{Stage}   &  
\multicolumn{1}{c}{HMM by \cite{jiang2018secure}}  &  & 
\multicolumn{3}{c}{Our HMM}  \\
\cmidrule{2-2} \cmidrule{4-6}
Stage       &   Batch 128  &  &  Batch 32 &  Batch 16 & Batch 8\\
\midrule
Convolution     & 121   ms& & 79  ms& 81  ms& 82  ms\\
$1^{st}$ square & 31    ms& & 17  ms& 18  ms& 16  ms\\
FC-1            & 2303  ms& & 788 ms& 379 ms& 233 ms\\
$2^{nd}$ square & 15    ms& & 8   ms& 8   ms& 8   ms\\
FC-2            & 325   ms& & 25  ms& 27  ms& 24  ms\\
Total Latency   & 2795  ms& & 917 ms& 523 ms& 363 ms\\
Amortized       & 22    ms& & 28  ms& 33  ms& 45  ms\\
\bottomrule
\end{tabular*}
\caption{Microbenchmarks for E2DM-lite implemented with different HMM schemes}\label{tb: ONNI1}
\end{table}

\section{Oblivious Neural Network Inference Implementation}\label{app: ONNI}

We demonstrate how to evaluate E2DM-lite homomorphically. Note that E2DM-lite is designed to be compatible with any HMM that supports parallel computation. 

\noindent \textbf{The Convolution Layer.} For a kernel \( K \in \mathbb{R}^{k \times k} \), a stride of \( (s, s) \), \( h \) channels, and an input image \( I \in \mathbb{R}^{n_i \times n_i} \), the convolution operation is given by the formula: 
\[
\textbf{Conv}(I, K)_{i',j'} = \sum_{0 \leq i,j < k} K_{i,j} \cdot I_{si' + i, \, sj' + j}
\]
for \( 0 \leq i', j' < d_K = \left\lceil \frac{(n_i - k)}{s} \right\rceil + 1 \). In E2DM-lite we have $k=7,s=3,h=2,d_K=8$, and $n_i=28$. The encoding method for input of homomorphic convolution is similar to that of E2DM \cite{jiang2018secure}. We first duplicate the set $\{ I_{si' + i, sj' + j} \}_{0 \leq i', j' < d_K}$ \( h \) times to form one column of a matrix $I^{(i,j)}$, which has $b$ such columns for $b$ images to be processed simultaneously. On the other hand, the set $\{K_{i,j}^{(l)}\}_{0\leq l<h}$ is vertically concatenated and then duplicated $d_K^2\times b$ times to form a matrix $K^{(i,j)}$ with the same dimensions as $I^{(i,j)}$.
Encrypting $I^{(i,j)}$ and $K^{(i,j)}$ into a ciphertext, respectively, for $0\leq i,j<k$, the homomorphic evaluation of the convolution layer can be computed as:
\[
ct(C) \leftarrow \sum_{0 \leq i,j < k} \texttt{Mult}(ct(I^{(i,j)}), ct(K^{(i,j)})),
\]
where $C \in \mathbb{R}^{hd_K^2\times b}$ is the output of the layer. Note that the encryption of $I^{(i,j)}$ and $K^{(i,j)}$ follows the encoding rules of the HMM used in the FC-1 layer. 

\noindent \textbf{The First Square Layer.} This step simply squares the input ciphertext by $ct(C^2)\leftarrow \texttt{Mult}(ct(C),ct(C))$. 

\noindent \textbf{The FC-1 Layer.} In the plaintext scenario, this step computes the matrix multiplication of a \( w \times hd_K^2 \) weight matrix and an input matrix of size \( hd_K^2 \times b \) (for E2DM-lite $w=32$). In the ciphertext scenario, this is achieved through HMM across several submatrices. Specifically, when using an HMM of dimension \( d \), we partition \( I^{(i,j)} \) (and \( K^{(i,j)} \)) into \( hd_K^2/d \times b/d \) submatrices, ensuring that \( ct(C^2) \) maintains the same partitioning. We also partition the weight matrix into \( w/d \times hd_K^2/d \) submatrices. This allows us to utilize the parallelism of HMM for \( wb/d^2 \) groups of matrix multiplications, each involving \( d \times d \) submatrices of size \( hd_K^2/d \). The results are then aggregated in a group-wise order to obtain an output $ct(F)$ with $F$ of size \( w \times b \).

\noindent \textbf{The Second Square Layer.} This step squares the input ciphertext by \( ct(F^2) \leftarrow \texttt{Mult}(ct(F), ct(F)) \). 

\noindent \textbf{The FC-2 Layer.} In the plaintext scenario, this step computes the matrix multiplication of an \( o \times w \) weight matrix and an input matrix of size \( w \times b \) (for E2DM-lite $o=10$). Since the partitioning rules are preserved in $ct(F^2)$, this step is similar to the FC-1 layer, performing \( o  b/d^2 \) groups of matrix multiplications, each involving \( d \times d \) submatrices of size \( w/d \). The results are then aggregated to yield an output $ct(O)$ with $O$ of size \( o \times b \). If $d>o$ then zero-padding is applied to the weight matrix. 

Table \ref{tb: ONNI1} presents benchmarks for E2DM-lite, with \( \log{N}=15 \) and an encoding space of length \( 2^{14} \). When using Jiang \textit{et al.}'s HMM, with \( d=32 \) and \( b=128 \), the FC-1 layer performs 4 groups of 4 matrix multiplications of size \( 32 \times 32 \), which, with parallelism support, results in a complexity comparable to performing a single \( 32 \times 32 \) matrix multiplication. When using our HMM, we set three different parameter sets: \( (b=32, d=32, k=4, d_0=8) \), \( (b=16, d=16, k=4, d_0=4) \), and \( (b=8, d=8, k=4, d_0=2) \). This transforms the FC-1 layer into performing 1, 2, and 4 groups of matrix multiplications of size \( 32 \times 32 \), \( 16 \times 16 \), and \( 8 \times 8 \), respectively. The low latency shown in the table is mainly due to the following reasons: (i) our HMM uses $k=4$-fold redundancy encoding space to provide faster computational speed. (ii) The ciphertext output from the FC-1 layer contains \( 2^{14}/(32b) \) duplication of the expected matrix of size $32\times b$, which is sufficient to allow the FC-2 layer to collapse into a single multiplication and a recursive summation with \( \log{32} \) rotations. This enables the entire E2DM-lite computation to be completed at a lower modulus level, where each layer's computational cost is lower than that of the HMM implementation by Jiang \textit{et al.} \cite{jiang2018secure}.

\end{document}